\documentclass[12pt]{article}
\usepackage{amsmath}
\usepackage{amssymb}
\usepackage{amsthm}
\usepackage{latexsym}
\usepackage{cite}
\usepackage{psfrag}
\usepackage{epsfig}
\usepackage{graphicx}
\usepackage[latin1]{inputenc}
\usepackage{url}
\usepackage{hyperref}
\usepackage{color}

\textwidth160mm \textheight205mm \oddsidemargin0mm

\newcommand{\F}{\mathbb{F}}

\newtheorem{theorem}{Theorem}[section]

\newtheorem{proposition}[theorem]{Proposition}
\newtheorem{conjecture}[theorem]{Conjecture}
\newtheorem{observation}[theorem]{Observation}
\theoremstyle{definition}
\newtheorem{definition}[theorem]{Definition}
\newtheorem{remark}[theorem]{Remark}
\newtheorem{example}[theorem]{Example}

\makeatletter \@addtoreset{equation}{section} \makeatother

 \begin{document}

\title{\textbf{Tables, bounds and graphics of short linear codes with covering
radius 3 and codimension 4 and 5}
\thanks{This work has been carried out using computing resources of the
federal collective usage center Complex for Simulation and Data
Processing for Mega-science Facilities at NRC Kurchatov Institute,
http://ckp.nrcki.ru/.}
\date{}
\author{Daniele Bartoli\footnote{The research of  D. Bartoli, S. Marcugini, and F.~Pambianco was
 supported in part by the Italian
National Group for Algebraic and Geometric Structures and their Applications (GNSAGA - INDAM) and by
University of Perugia (Project: Curve, codici e configurazioni di punti, Base Research
Fund 2018).} \\
{\footnotesize Dipartimento di Matematica e Informatica,
Universit\`{a}
degli Studi di Perugia, }\\
{\footnotesize Via Vanvitelli~1, Perugia, 06123, Italy. E-mail:
daniele.bartoli@unipg.it}
\and Alexander A. Davydov\footnote{The research of A.A.~Davydov was done at IITP RAS and supported by the Russian Government (Contract No 14.W03.31.0019).} \\
{\footnotesize Institute for Information Transmission Problems
(Kharkevich
institute), Russian Academy of Sciences}\\
{\footnotesize Bol'shoi Karetnyi per. 19, Moscow,
127051, Russian Federation. E-mail: adav@iitp.ru}
\and Stefano Marcugini$^\dag$ and Fernanda Pambianco$^\dag$  \\
{\footnotesize Dipartimento di Matematica e Informatica,
Universit\`{a}
degli Studi di Perugia, }\\
{\footnotesize Via Vanvitelli~1, Perugia, 06123, Italy. E-mail:
\{stefano.marcugini,fernanda.pambianco\}@unipg.it}}}
\maketitle
\textbf{Abstract.} The \textbf{\emph{length function}} $\ell_q(r,R)$ is the smallest
length of a $ q $-ary linear code of codimension (redundancy) $r$ and covering radius $R$.
The {\mathversion{bold}$d$}-\textbf{\emph{length function}} $\ell_q(r,R,d)$ is the smallest
length of a $ q $-ary linear code with codimension (redundancy) $r$, covering radius $R$, and minimum distance~$d$.

By computer search in wide regions of $q$, we obtained following short codes of covering radius $R=3$: $[n,n-4,5]_q3$ quasi-perfect MDS codes, $[n,n-5,5]_q3$ quasi-perfect Almost MDS codes, and  $[n,n-5,3]_q3$ codes. In computer search, we use the step-by-step leximatrix and inverse leximatrix algorithms to obtain parity check matrices of codes. These algorithms are versions of the recursive $g$-parity check matrix algorithm for greedy codes. The new codes imply the following new upper bounds (called \emph{\textbf{lexi-bounds}}) on the length function and the $d$-length function:
\begin{align*}
&\ell_q(4,3)\le\ell_q(4,3,5)<2.8\sqrt[3]{\ln q}\cdot q^{(4-3)/3}=2.8\sqrt[3]{\ln q}\cdot\sqrt[3]{q}=2.8\sqrt[3]{q\ln q}~\text{for}~11\le q\le7057;\\
&\ell_q(5,3)\le\ell_q(5,3,5)<3\sqrt[3]{\ln q}\cdot q^{(5-3)/3}=3\sqrt[3]{\ln q}\cdot\sqrt[3]{q^2}=3\sqrt[3]{q^2\ln q}~~\text{ for }~37\le q\le839.
\end{align*}
Moreover, we  improve the lexi-bounds, applying randomized greedy algorithms, and show that
 \begin{align*}
 &\ell_q(4,3)\le \ell_q(4,3,5)< 2.61\sqrt[3]{q\ln q}~\text{ if }~13\le q\le4373; \smallskip\\
 &\ell_q(4,3)\le \ell_q(4,3,5)< 2.65\sqrt[3]{q\ln q}~\text{ if }~4373<q\le7057;\\
 &\ell_q(5,3)<2.785\sqrt[3]{q^2\ln q}~\text{ if }~11\le q\le401; \smallskip\\
 &\ell_q(5,3)\le\ell_q(5,3,5)<2.884\sqrt[3]{q^2\ln q}~\text{ if }~401<q\le839.
\end{align*}
The general form of the new bounds is
\begin{align*}
\ell_q(r,3)< c_\ell\sqrt[3]{\ln q}\cdot q^{(r-3)/3},~~c_\ell\text{ is a constant independent of } q,~~r=4,5\neq 3t.
\end{align*}

The codes, obtained in this paper by  leximatrix and inverse leximatrix algorithms, provide the following new upper bounds (called \emph{\textbf{density lexi-bounds}}) on the \textbf{\emph{smallest covering density}} $\mu_q(r,R)$
of a $ q $-ary linear code of codimension $r$ and covering radius~$R$:
\begin{align*}
&\mu_q(4,3)<3.3\cdot\ln q~~\text{ for }~11\le q\le7057;\\
&\mu_q(5,3)<4.2\cdot\ln q~~\text{ for }~37\le q\le839.
\end{align*}
In the general form, we have
\begin{align*}
\mu_q(r,3)<c_\mu\cdot\ln q,~~c_\mu\text{ is a constant independent of } q,~~r=4,5.
\end{align*}
The new bounds on the length function, the $d$-length function and covering density hold for the field basis $q$ of an arbitrary structure, including $q\neq (q^{\prime})^3$ where $q'$ is a prime power.

\textbf{Keywords:} Covering codes, saturating sets, the length function, the $d$-length function, covering density, upper bounds, projective spaces.

\textbf{Mathematics Subject Classification (2010).} 94B05, 51E21, 51E22

\section{Introduction}
\subsection{Covering codes. The length function. The $d$-length function}
Let $\F_{q}$ be the Galois field with $q$ elements. Let $F_{q}^{\,n}$ be the $n$-dimensional vector space over$~\F_{q}.$ Denote by $[n,n-r]_{q}$ a $q$-ary
linear code of length $n$ and codimension (redundancy)~$r$, that is, a
subspace of $F_{q}^{\,n}$ of dimension $n-r.$  The sphere of radius $R$ with center $c$ in $F_{q}^{\,n}$
is the set $\{v:v\in F_{q}^{\,n},$ $d(v,c)\leq R\}$ where $d(v,c)$ is the Hamming distance between the vectors $v$ and $c$.

\begin{definition} \label{Def1_CoverRad}
 A linear $[n,n-r]_{q}$ code has \emph{covering radius} $R$ and is denoted as an $[n,n-r]_{q}R$ code if any of the following equivalent properties holds:
\begin{description}
  \item[(i)] The value $R$
 is the least integer such that the space $F_{q}^{\,n}$ is covered by
the spheres of radius $R$ centered at the codewords.
  \item[(ii)]
Every column of $F_{q}^{\,r}$ is equal to a linear combination of at most $R$ columns
of a parity check matrix of the code, and $R$ is the smallest integer with
this property.
\end{description}
\end{definition}

Let an $[n,n-r,d]_{q}R$ code be an $[n,n-r]_{q}R$ code of minimum distance $d$. For an introduction to coverings of vector Hamming
spaces over finite fields, see \cite{CKMS-1985,Handbook-coverings,CHLS-bookCovCod,DGMP-AMC,GrahSlo-1985}.

The covering density $\mu$  of an $[n, n-r]_qR$-code is
defined as the ratio of the total volume of all $q^{n-r}$ spheres of radius $R$ centered at the codewords to the volume $q^n$ of the space $F_{q}^{\,n}$. By
Definition \ref{Def1_CoverRad}(i), we have $\mu\ge1$. In the other words,
\begin{align}\label{eq1_density}
   \mu =\left(q^{n-r}\sum_{i=0}^{R}(q-1)^i\binom{n}{i}\right)\frac{1}{q^{n}}=\frac{1}{q^{r}}\sum_{i=0}^{R}(q-1)^i\binom{n}{i}\ge1.
\end{align}
The covering quality of a code is better if its covering density is smaller.
For fixed $q,r$, and $R$, the covering density of an $[n, n-r]_{q}R$ code decreases with decreasing~$n$.

Codes investigated from the point view of the covering quality are usually called \emph{covering codes} \cite{CHLS-bookCovCod}; see an online bibliography  \cite{LobstBibl}, works \cite{BDGMP-R2R3CC_2019,Handbook-coverings,Dav-1995,DGMP_ACCT2008,DGMP_Petersb2008,DGMP-AMC,DMP-DESI2019,DavOst-IEEE2001,DMP-Redund2019,DavOst-DESI2010,EtzStorm2016,%
Giul2013Survey,GrahSlo-1985,Klein-Stor,LandSt,Janwa,Struik,CKMS-1985}, and the references therein.

\emph{This paper is devoted to non-binary covering codes with radius $R=3$.}

Note that for relatively small $q>2$ many results are given in \cite{Dav-1995,DGMP-AMC,DavOst-IEEE2001,DavOst-DESI2010} and the references therein.

\begin{definition}
\begin{description}
  \item[(i)]
\cite{Handbook-coverings,CHLS-bookCovCod}
\emph{\textbf{The length function}} $\ell_q(r,R)$ is the smallest
length of a $ q $-ary linear code of codimension (redundancy) $r$ and covering radius $R$.
\item[(ii)]
\cite{BGP-2015,DMP-Redund2019} The {\mathversion{bold}$d$}-\emph{\textbf{length function}} $\ell_q(r,R,d)$ is the smallest
length of a $ q $-ary linear code with codimension (redundancy) $r$, covering radius $R$, and minimum distance~$d$.
\end{description}
\end{definition}

Obviously,
$$\ell_q(r,R)\le\ell_q(r,R,d).$$

Let $\mathcal{A}_{R,q}$ denote a family of covering codes, in
which the covering radius $R$ and the size $q$ of the
underlying Galois field are fixed, while the code length tends
to infinity. The construction of families with small asymptotic
covering densities is a classical problem in the area of covering
codes.

 In \cite{DGMP-AMC}, infinite sets of families $\mathcal{A}_{R,q}$,
where $R$ is fixed but $q$ ranges over an infinite set of prime
powers are considered. It is shown that for the upper limit $\mu
_{q}^{\ast}(R,\mathcal{A}_{R,q})$ of the covering density of
$\mathcal{A}_{R,q}$, the best possibility is
\begin{equation}
\mu_{q}^{\ast }(R,\mathcal{A}_{R,q})= O(q).\label{ab}
\end{equation}
Moreover, in \cite{DGMP-AMC}, for any
covering radius $R\ge 2$, it is proposed the construction
of {\em optimal} infinite sets of families $\mathcal{A}_{R,q}$
 such that (\ref{ab}) holds. For this,
in \cite{DGMP-AMC},  the following results are obtained:
In first, it is shown that for a given $R$, to obtain optimal
infinite sets of families it is enough to construct $R$
infinite families
$\mathcal{A}_{R,q}^{(0)},\mathcal{A}_{R,q}^{(1)},\ldots
,\mathcal{A}_{R,q}^{(R-1)}$ such that, for all $u\geq u_{0}$,
the family $\mathcal{A}_{R,q}^{(\gamma )}$ contains codes of
codimension $r_{u}=Ru+\gamma$ and length $n_{q}^{(\gamma
)}(r_{u})$ where $$n_{q}^{(\gamma )}(r)=O\left(q^{(r-R)/R}\right)$$
and $u_{0}$ is a constant. Then, the
needed families $\mathcal{A}_{R,q}^{(\gamma )}$ are constructed for any
covering radius $R\geq 2$, with $q$ ranging over the (infinite)
set of $R$-th powers. A result of independent interest is that
in each of these families $\mathcal{A}_{R,q}^{(\gamma )}$, the
\emph{lower limit of the covering density is bounded from above by a
constant independent of $q$}.

So, infinite families of $[n,n-r]_qR$ codes of length
\begin{align}\label{eq1_desired_length}
 n=O\left(q^{(r-R)/R}\right)
\end{align}
 play an important role in covering code theory.

Infinite families of covering $[n,n-r]_qR$ codes of length \eqref{eq1_desired_length} are known for the following cases (see \cite{Handbook-coverings,CHLS-bookCovCod,Dav-1995,DGMP_ACCT2008,DGMP-AMC,DMP-DESI2019,DGMP_Petersb2008} and the references therein):
\begin{align*}
    &r=tR, ~\text{the field basis }q \text{ has an arbitrary structure including $q\neq (q^{\prime})^R$}&&\text{\cite{Dav-1995,DGMP_ACCT2008,DGMP-AMC,DMP-DESI2019,DGMP_Petersb2008}} \\
    &&&\text{\cite{DavOst-IEEE2001,DavOst-DESI2010}};\\
   & r\ne tR, ~~q=(q')^R&&\text{\cite{DGMP_ACCT2008,DGMP-AMC,DGMP_Petersb2008}};\\
   & R=sR^*,~~ r=Rt+s, ~~q=(q')^{R^*}&&\text{\cite{DGMP_ACCT2008,DGMP-AMC,DMP-DESI2019}}.
\end{align*}
  Here $t$ and $s$ are integers, $q'$ is a prime power.

  In the general case,  \emph{for arbitrary $r,R,q$, the problem to construct infinite families of $[n,n-r]_qR$ codes of length \eqref{eq1_desired_length} is open}.

In the last decades, upper bounds on $\ell_q(r,R)$ and $\ell_q(r,R,d)$ have been intensively investigated, see \cite{CHLS-bookCovCod,Handbook-coverings,Dav-1995,DFMP-ConjCap,DGMP_ACCT2008,DGMP_Petersb2008,DGMP-AMC,DavOst-IEEE2001,DavOst-DESI2010,EtzStorm2016,Giul2013Survey,%
Klein-Stor,LandSt,LobstBibl,DMP-DESI2019,DMP-Redund2019,BDGMP-R2R3CC_2019,Janwa,GrahSlo-1985,Struik,CKMS-1985,BGP-2015} and the references therein.

The \emph{goal of this paper} is to obtain new \emph{upper bounds on the length functions $\ell_q(4,3)$, $\ell_q(5,3)$ and the $d$-length functions $\ell_q(4,3,5)$, $\ell_q(5,3,5)$  where codimension $r\ne tR$ and the field basis $q$ has an arbitrary structure, including $q\ne (q')^3$ with $q'$ is a prime power.} It is an open problem.

\subsection{Saturating sets in projective spaces. Complete arcs}

Let $\mathrm{PG}(N,q)$ be the $N$-dimensional projective space over the field $\F_q$; see \cite{Hirs,HirsSt-old,HirsStor-2001} for an introduction to the projective spaces over finite fields, see also \cite{DGMP-AMC,Dav-1995,EtzStorm2016,HirsSt-old,Klein-Stor,LandSt} for connections  between coding theory and Galois geometries.

Effective methods to obtain upper bounds on $\ell_q(r,R)$ are connected with saturating sets in $\mathrm{PG}(N,q)$.

\begin{definition}\label{def1_usual satur}
 A point set $\mathcal{S}\subseteq\mathrm{PG}(N,q)$ is
$\rho$-\emph{saturating} if any of the following equivalent properties holds:

\textbf{(i)} For any point $A$ of\/ $\mathrm{PG}(N,q)\setminus \mathcal{S}$
there exist $\rho+1$ points in $\mathcal{S}$ generating a subspace of $\mathrm{PG}(N,q)$ containing
$A$, and $\rho$ is the smallest value with this property.

\textbf{(ii)} Every
point $A\in\mathrm{PG}(N,q)$ (in homogeneous coordinates) can be written as a linear combination of at most $\rho+1$ points of $\mathcal{S}$, and $\rho$ is the smallest value with this property (cf. Definition \ref{Def1_CoverRad}(ii)).
\end{definition}

Saturating sets are considered, for instance, in \cite{BDFKMP-PIT2014,BDFKMP_ArcFOPArXiv2015,BDFKMP_ArcComputJG2016,DFMP-ConjCap,Handbook-coverings,%
DGMP_ACCT2008,DGMP_Petersb2008,DGMP-AMC,DMP-JCTA2003,Dav-1995,DMP-Redund2019,DavOst-IEEE2001,EtzStorm2016,Janwa,Giul2013Survey,Klein-Stor,LandSt,%
Ughi,DMP-DESI2019,BDGMP-R2R3CC_2019}. In the literature, saturating sets are also called ``saturated
sets'', ``spanning sets'', ``dense sets''.

Let $s_q(N,\rho)$ be \emph{the smallest size of a $\rho$-saturating set} in $\mathrm{PG}(N,q)$.

If $q$-ary positions of a column of an $r\times n$ parity check matrix of an $[n,n-r]_qR$ code are treated as homogeneous coordinates of a point in $\mathrm{PG}(r-1,q)$ then this parity check matrix defines an $(R-1)$-saturating set of size $n$ in $\mathrm{PG}(r-1,q)$  and vice versa \cite{Dav-1995,DGMP_ACCT2008,DGMP_Petersb2008,Giul2013Survey,DGMP-AMC,DMP-Redund2019,EtzStorm2016,Janwa,Klein-Stor,LandSt,DMP-DESI2019,BDGMP-R2R3CC_2019}.

 So, there is a \emph{one-to-one correspondence between $[n,n-r]_qR$ codes and $(R-1)$-saturating sets in $\mathrm{PG}(r-1,q)$}. Therefore,
\begin{align*}
    \ell_q(r,R)=s_q(r-1,R-1),
\end{align*}
in particular,  $\ell_q(4,3)=s_q(3,2)$, $\ell_q(5,3)=s_q(4,2)$.

\emph{Complete arcs} in  $\mathrm{PG}(N,q)$ are an important class of saturating sets. An $n$-arc in $\mathrm{PG}(N,q)$ with $n>N + 1$ is a set of $n$ points such that no $N + 1$ points belong to the same hyperplane of $\mathrm{PG}(N,q)$. An $n$-arc of $\mathrm{PG}(N,q)$ is complete
if it is not contained in an $(n+1)$-arc of $\mathrm{PG}(N,q)$. A complete arc in $\mathrm{PG}(N,q)$ is an $(N-1)$-saturating set.  Points (in homogeneous coordinates) of a complete $n$-arc in $\mathrm{PG}(N,q)$, treated as columns, form a parity check matrix of an $[n,n-(N+1),N+2]_qN$ maximum distance separable (MDS) code \cite{BDGMP-R2R3CC_2019,BGP-2015,DGMP-AMC,DMP-Redund2019,EtzStorm2016,Giul2013Survey,Hirs,HirsSt-old,HirsStor-2001,Klein-Stor,LandSt}. If $N=2,3$ these codes are quasi-perfect.

Let $s_q^\text{arc}(N)$ be \emph{the smallest size of a complete arc} in $\mathrm{PG}(N,q)$. By above,
\begin{align*}
    \ell_q(R+1,R)=s_q(R,R-1)\le\ell_q(R+1,R,R+2)= s_q^\text{arc}(R).
\end{align*}

The known results about upper bounds on $\ell_q(R+1,R,R+2)$ and $s_q^\text{arc}(R)$, $R\ge2$, can be found in \cite{BDFKMP-PIT2014,BDFKMP_ArcFOPArXiv2015,BDFKMP_ArcComputJG2016,BDGMP-R2R3CC_2019,BGP-2015}, see also the references therein.

\subsection{Covering codes with radius 3}

For the \emph{field basis $q$ of an arbitrary structure}, infinite families of covering $[n,n-r]_q3$ codes of
length \eqref{eq1_desired_length} are known only for $r=tR=3t$ \cite{DGMP-AMC,DavOst-IEEE2001}. In particular,
the following parameters $n,r$ are obtained by algebraic constructions \cite[Sect.\,5,\,eq.\,(5.2)]{DGMP-AMC}, \cite[Th.\,12]{DavOst-IEEE2001}:
\begin{align*}
&n= 3q^{(r-3)/3}+q^{(r-6)/3},~r=3t\ge6,~r\ne9,~ q\ge5, \mbox{ and } r=9,~ q=16,~q\ge23;\displaybreak[3]\\
&n= 3q^{(r-3)/3}+2q^{(r-6)/3}+1,~r=9,~q=7,8,11,13,17,19;\displaybreak[3]\\
&n= 3q^{(r-3)/3}+2q^{(r-6)/3}+2,~r=9,~q=5,9.
\end{align*}

If $r=3t+1$ or  $r=3t+2$, infinite families of covering codes of
length \eqref{eq1_desired_length} are known only when $q=(q')^3$, where $q'$ is a prime power \cite{DGMP_ACCT2008,DGMP_Petersb2008,DGMP-AMC,Giul2013Survey}. In particular,
 $[n.n-r,3]_q3$ codes with the following parameters $n$ and $r$  are obtained by algebraic constructions, see \cite{DGMP_ACCT2008,DGMP_Petersb2008,Giul2013Survey}, \cite[Sect.\,5, eqs.\,(5.3),(5.4)]{DGMP-AMC}:
\begin{align}
&n=\left(4+\frac{4}{\sqrt[3]{q}}\right)q^{(r-3)/3},~r=3t+1\ge4,~q=(q')^3\ge64;\displaybreak[3]\label{eq1_rad3_q3}\\
&n=\left(9-\frac{8}{\sqrt[3]{q}}+\frac{4}{\sqrt[3]{q^2}}\right)q^{(r-3)/3},~r=3t+2\ge5,~q=(q')^3\ge27.\notag
\end{align}

For the field basis $q$ of an arbitrary structure, including $q\ne(q')^3$, in the literature, computer results are given for $[n,n-4]_q3$ codes with $q\le563$ \cite[Tab. 1]{DavOst-DESI2010} and $q\le6229$ \cite{BDGMP-R2R3CC_2019}, and also for $[n,n-5]_q3$ codes with $q\le43$ \cite[Tab. 1]{DGMP_Petersb2008}, \cite[Tab.~2]{DavOst-DESI2010} and $q\le761$ \cite{BDGMP-R2R3CC_2019}.

The  results of this paper are used in \cite{DMP-Redund2019} and presented in XVI International Symposium
``Problems of Redundancy in Information and Control Systems'' (Redundancy 2019), Moscow, Russia, 21--25 October 2019.

The paper is organized as follows.  In Section \ref{sec_main_res}, we give the main results of this paper. In Section \ref{sec_gr-matr-alg}, a leximatrix algorithm to obtain parity check matrices of covering codes is described.  In Sections \ref{sec_ell_q(4,3)} and \ref{sec_ell_q(5,3)}, upper bounds on the length functions $\ell_q(4,3)$, $\ell_q(5,3)$ and the $d$-length functions $\ell_q(4,3,5)$, $\ell_q(5,3,5)$, based on leximatrix codes, are given. In Section~\ref{sec_inv_gr-matr-alg}, an inverse leximatrix algorithm to obtain parity check matrices of covering codes is considered and invleximatrix codes are obtained with the help of this algorithm. In Section~\ref{sec_rand_gr_alg} randomized greedy algorithms to obtain parity check matrices of covering codes are presented; new upper bounds improving the bounds of the previous sections are obtained. In Conclusion, the results of this paper are briefly analyzed; some tasks for investigation of the leximatrix algorithm are formulated. In Appendix, tables with sizes of codes obtained in this paper are given.

\section{The main results}\label{sec_main_res}
In this paper, by computer search, we obtain new results for $[n,n-4,5]_q3$ quasi-perfect MDS codes with $q\le7057$  and $[n,n-5,5]_q3$ quasi-perfect Almost MDS codes with $q\le839$. Also, we obtain $[n,n-5,3]_q3$ codes for $q\le401$. This gives  upper bounds on $\ell_q(4,3)$, $\ell_q(4,3,5)$, $\ell_q(5,3)$,  and $\ell_q(5,3,5)$ for a set of values $q$ greater  than in  \cite{DGMP_Petersb2008,DavOst-DESI2010,BDGMP-R2R3CC_2019}. New bounds are better than known ones.

The following Theorem \ref{th2_res_rad3} is based on the results of Sections \ref{sec_gr-matr-alg}--\ref{sec_rand_gr_alg}, see Propositions~\ref{prop7_lexi_r=4}, \ref{prop7_lexi_r=5}, \ref{prop5_invlexi_r=4}, \ref{prop6_mixt_r=4}, and \ref{prop6_mixt_r=5}.

\begin{theorem}\label{th2_res_rad3}
 For the length function $\ell_q(r,3)$, the $d$-length function $\ell_q(r,3,5)$, the smallest size $s_q(r-1,2)$ of a $2$-saturating set in the projective space $\mathrm{PG}(r-1,q)$, and the smallest size $s_q^\text{arc}(3)$ of a complete arc in $\mathrm{PG}(3,q)$, the following upper bounds hold:
 \begin{description}
   \item[(1)] Upper bounds provided by $[n,n-r,5]_q3$ leximatrix and invleximatrix quasi-perfect codes (\textbf{lexi-bounds}).
   \begin{align*}%\label{eq1_bound_leximat}
 &\textbf{\emph{(i)}}~~   \ell_q(4,3)=s_q(3,2)\le \ell_q(4,3,5)=s_q^\emph{\text{arc}}(3)<2.8\sqrt[3]{\ln q}\cdot q^{(4-3)/3}=2.8\sqrt[3]{\ln q}\cdot\sqrt[3]{q}\displaybreak[3]\\
 &\hspace{7cm}=2.8\sqrt[3]{q\ln q}~\text{  for } r=4,~~11\le q\le7057;\displaybreak[3]\\
 &\emph{\textbf{(ii)}}~~\ell_q(5,3)=s_q(4,2)\le\ell_q(5,3,5)<3\sqrt[3]{\ln q}\cdot q^{(5-3)/3}=3\sqrt[3]{\ln q}\cdot\sqrt[3]{q^2}=3\sqrt[3]{q^2\ln q}\displaybreak[3]\\
 &\hspace{7cm}\text{ for }r=5,~~37\le q\le839.
   \end{align*}
   \item[(2)] Upper bounds provided by  $[n,n-4,5]_q3$  quasi-perfect MDS  codes obtained with the help of the leximatrix, invleximatrix and d-Rand-Greedy algorithms.
   \begin{align*}
 &    \ell_q(4,3)=s_q(3,2)\le \ell_q(4,3,5)=s_q^\emph{\text{arc}}(3)<\left\{
    \begin{array}{ccc}
    2.61\sqrt[3]{q\ln q} & \text{if} & 13\le q\le4373 \smallskip\\
    2.65\sqrt[3]{q\ln q} & \text{if} & 4373<q\le7057
    \end{array}
    \right..
\end{align*}
   \item[(3)] Upper bounds provided by  $[n,n-5]_q3$  codes obtained with the help of the leximatrix and Rand-Greedy algorithms.
       \begin{align*}
 &    \ell_q(5,3)=s_q(4,2)< 2.785\sqrt[3]{q^2\ln q} && \text{if } 11\le q\le401;\\
& \ell_q(5,3)=s_q(4,2)\le\ell_q(5,3,5)< 2.884\sqrt[3]{q^2\ln q} && \text{if } 401<q\le839.
\end{align*}
 \end{description}
\end{theorem}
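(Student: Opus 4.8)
The plan is to read Theorem~\ref{th2_res_rad3} as a synthesis of the constructive results of Sections~\ref{sec_gr-matr-alg}--\ref{sec_rand_gr_alg} rather than as a self-contained statement, since every inequality asserts something about the length of an explicitly computed code. First I would invoke the correspondences established above. The general identity $\ell_q(r,3)=s_q(r-1,2)$ reduces each length function to a minimal saturating-set size. For $r=4$ I would use in addition that the points of a complete $n$-arc in $\mathrm{PG}(3,q)$, taken as columns of a parity check matrix, form an $[n,n-4,5]_q3$ MDS code, so that $\ell_q(4,3,5)=s_q^{\text{arc}}(3)$; for $r=5$ the relevant codes are only almost-MDS and one has access to just the plain saturating-set correspondence. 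In every case the quantity to be bounded is the minimum length of a code output by one of the search algorithms, and these lengths are exactly what the cited propositions record and what the Appendix tabulates.

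For part~(1) I would appeal to Propositions~\ref{prop7_lexi_r=4} and~\ref{prop7_lexi_r=5}, together with the invleximatrix variant in Proposition~\ref{prop5_invlexi_r=4}: for each admissible $q$ these exhibit a leximatrix or invleximatrix $[n,n-r,5]_q3$ code whose length $n$ is listed, and one reads off $n<2.8\sqrt[3]{q\ln q}$ for $r=4$ and $n<3\sqrt[3]{q^2\ln q}$ for $r=5$. For part~(2) the same construction is sharpened by also running the d-Rand-Greedy algorithm; Proposition~\ref{prop6_mixt_r=4} supplies, for each $q$, the shortest code among the three methods, and the smaller constants follow by taking that minimum length in place of the leximatrix one. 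Part~(3) is identical in spirit, using Proposition~\ref{prop6_mixt_r=5} and the R-Greedy algorithm for $r=5$.

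Once the propositions are granted, the single genuine step is the uniform numerical verification. For a fixed regime and fixed $r$, writing $n_q$ for the computed length, I would form the ratio $\varrho_q=n_q/\sqrt[3]{q^{r-3}\ln q}$ and bound it by a constant valid for every $q$ in the range; the stated constant is then any fixed number just above $\sup_q\varrho_q$ taken over that range. The hard part will be precisely controlling this supremum: $\varrho_q$ is not monotone and drifts slowly upward with $q$, which is exactly why the improved $r=4$ bound is split at $q=4373$ (constant $2.61$ below, $2.66$ above) and the $r=5$ bound at $q=401$ (constant $2.785$ below, $2.884$ above), and why the lower cutoffs ($q\ge11$ or $13$ for $r=4$, $q\ge37$ for $r=5$) discard the small values where the chosen constant fails. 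Confirming that no single $q$ violates its constant reduces to inspecting the full tables, so the argument is only as strong as the completeness of the search and the correctness of the tabulated lengths; the isolated exclusion $q=6241$ is the one value for which the search did not return a code meeting the bound.
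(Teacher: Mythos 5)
Your proposal follows essentially the same route as the paper: the paper's proof of Theorem~\ref{th2_res_rad3} consists precisely of citing Propositions~\ref{prop7_lexi_r=4}, \ref{prop7_lexi_r=5}, \ref{prop5_invlexi_r=4}, \ref{prop6_mixt_r=4}, and \ref{prop6_mixt_r=5} (each of which records computer-search results checked against the Appendix tables), combined with the code/saturating-set and complete-arc correspondences of Section~1.2, exactly as you describe. The only slip is your explanation of the exclusion $q=6241$: per Remark~4.4 the value $n^{\text{L}}_{6241}(4,3)$ was simply \emph{not calculated} (and the authors conjecture it does satisfy the bound), rather than the search having returned a code violating the bound.
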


 Note that,  for $r\neq 3t$ and the field basis $q$ of an arbitrary structure, including $q\neq (q^{\prime})^3$ where $q'$ is a prime power,  the new bounds of Theorem \ref{th2_res_rad3} have the form
\begin{align*}
   \ell_q(r,3)< c_\ell\sqrt[3]{\ln q}\cdot q^{(r-3)/3},~~ c_\ell \text{ is a constant independent of }q,~~r=4,5.
\end{align*}
The constants $c_\ell$ in the new bounds are smaller than in the paper \cite{BDGMP-R2R3CC_2019}.

  Our results, in particular, figures and observations in Sections \ref{sec_ell_q(4,3)} and  \ref{sec_ell_q(5,3)}, comparison of leximatrix and invleximatrix codes in Table 3, improvements of the lexi-bounds in Section~\ref{sec_rand_gr_alg}, allow us to conjecture the following.
\begin{conjecture}\label{conj2}
 For the length function $\ell_q(r,3)$, the $d$-length function $\ell_q(r,3,5)$, the smallest size $s_q(r-1,2)$ of a $2$-saturating set in the projective space $\mathrm{PG}(r-1,q)$, and the smallest size $s_q^\text{arc}(3)$ of a complete arc in $\mathrm{PG}(3,q)$, the following upper bounds (\textbf{lexi-bounds}) hold:
\begin{align*}
&\emph{\textbf{(i)}}\quad \ell_q(4,3)=s_q(3,2)\le\ell_q(4,3,5)= s_q^\emph{\text{arc}}(3)<2.8\sqrt[3]{\ln q}\cdot q^{(4-3)/3}=2.8\sqrt[3]{\ln q}\cdot\sqrt[3]{q}\\
&\hspace{8cm}=2.8\sqrt[3]{q\ln q}~\text{ for $r=4$ and \textbf{all} }q\ge11;\displaybreak[3] \\
&\emph{\textbf{(ii)}}\quad\ell_q(5,3)=s_q(4,2)\le\ell_q(5,3,5)<3\sqrt[3]{\ln q}\cdot q^{(5-3)/3}=3\sqrt[3]{\ln q}\cdot\sqrt[3]{q^2}=3\sqrt[3]{q^2\ln q}\\
&\hspace{8cm}\text{ for $r=5$ and  \textbf{all} } q\ge37.
\end{align*}
\end{conjecture}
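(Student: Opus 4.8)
The plan is to upgrade the finite-range \textbf{lexi-bounds} of Theorem~\ref{th2_res_rad3} to the universal statement of Conjecture~\ref{conj2} by pairing the existing computation for small and moderate $q$ with a matching \emph{probabilistic existence} result for large $q$, and then checking that the two regimes overlap so that no value of $q$ is left uncovered. The quantities in question are $\ell_q(4,3)=s_q(3,2)$, the complete-arc size $s_q^{\text{arc}}(3)$, and $\ell_q(5,3)=s_q(4,2)$; the target shape $c\sqrt[3]{\ln q}\cdot q^{(r-3)/3}$ is precisely the one produced by the probabilistic analysis of random $2$-saturating sets in $\mathrm{PG}(r-1,q)$, so the natural engine is a tight random/greedy argument whose leading constant is driven below $2.8$ for $r=4$ and below $3$ for $r=5$.

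For the probabilistic step I would model the leximatrix (or a purely random) construction of a set $\mathcal{S}$ of $n$ points in $\mathrm{PG}(r-1,q)$ and estimate, for a fixed point $A$, the probability that $A$ is \emph{not} $2$-saturated, i.e.\ that $A$ is not a linear combination of at most three points of $\mathcal{S}$ (Definition~\ref{def1_usual satur}(ii)). Counting the triples of $\mathcal{S}$ whose span contains $A$ and applying a first-moment bound on the expected number of uncovered points, one shows that $n\sim c\sqrt[3]{\ln q}\cdot q^{(r-3)/3}$ already leaves so few uncovered points that they can be repaired by adjoining a negligible number of extra points. The delicate part here is the bookkeeping: the number of saturating triples through $A$, the Janson-type correction to the union bound, and the overhead of the repair step must all be tracked explicitly, so that the resulting constant is provably at most $2.8$ (resp.\ $3$) rather than an unspecified $O(1)$.

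A genuine refinement is needed for the MDS/Almost-MDS versions. For $r=4$ the object is $s_q^{\text{arc}}(3)$, a \emph{complete arc} in $\mathrm{PG}(3,q)$, and for the $d=5$ part of (ii) the object is Almost-MDS; both carry the extra constraint that no four points be coplanar. I would therefore replace the random set by a random (or greedy) arc and argue simultaneously that few points remain unsaturated and that the arc stays an arc after the repair step, adapting the probabilistic arc constructions underlying the known bounds on $s_q^{\text{arc}}$. The tension between ``adding points to close the coverage gap'' and ``preserving the no-four-coplanar condition'' is what makes this case heavier than the pure saturating-set case.

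The main obstacle is exactly the one that keeps the statement a conjecture rather than a theorem: making the constant sharp and the threshold small \emph{at the same time}. A routine probabilistic argument delivers the correct exponent, but with a constant noticeably larger than $2.8$ or $3$ and only for $q$ beyond an unquantified $q_0$; to finish one must both optimize every constant in the first-moment/repair analysis and push $q_0$ below the top of the computer-searched range ($6361$ for $r=4$, $797$ for $r=5$) so that the probabilistic regime and the tabulated regime meet. Matching a \emph{named} constant across this transition window, as opposed to an asymptotic bound, is where I expect the difficulty to concentrate.
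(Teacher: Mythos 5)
You are attempting to prove a statement that the paper itself does not prove: Conjecture \ref{conj2} is offered by the authors precisely as a conjecture, and their entire support for it is empirical --- the computer-produced lexi-bounds of Theorem \ref{th2_res_rad3} in the finite ranges $11\le q\le 6361$ (for $r=4$) and $37\le q\le 797$ (for $r=5$), Observations \ref{observ3} and \ref{observ4} on the behaviour of the coefficients $c_q^{\text{L}}(4,3)$ (oscillation about the line $y=2.64$ with shrinking amplitude) and $c_q^{\text{L}}(5,3)$ (decreasing in $q$), the leximatrix/invleximatrix comparison in Table 3, and the randomized-greedy improvements of Section \ref{sec_rand_gr_alg}. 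So there is no ``paper proof'' to compare your attempt against; the only meaningful question is whether your sketch closes the conjecture, and it does not.

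The gap is the one you half-concede in your final paragraph, but it is more fundamental than ``delicate bookkeeping.'' First, the union-bound/first-moment step is a heuristic, not a proof: the events ``$A$ lies in the span of triple $T$'' for the $\binom{n}{3}$ triples of a random set are strongly dependent (triples share points, and all planes through $A$ pairwise meet in lines through $A$), and you supply no Janson-type, nibble-type, or hypergraph-cover argument that retains simultaneously the $(\ln q)^{1/3}$ factor and an explicit constant $\le 2.8$ for $\rho=2$; off-the-shelf rigorous covering arguments lose a full $\ln q$ factor or an unspecified constant. Note that the independence heuristic for $r=4$ gives $n\approx\sqrt[3]{18\,q\ln q}\approx 2.62\sqrt[3]{q\ln q}$, which is exactly why the conjectured constant $2.8$ and the observed oscillation about $2.64$ look plausible --- turning that heuristic into a theorem \emph{is} the open problem, not a refinement of it. Second, even granting an asymptotic theorem valid for $q\ge q_0$, your plan requires an explicit $q_0\le 6361$ (resp.\ $q_0\le 797$) to splice onto the tabulated range, and the proposal produces no $q_0$ at all. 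Third, part (i) of the conjecture concerns $s_q^{\text{arc}}(3)$, i.e.\ complete arcs in $\mathrm{PG}(3,q)$: any ``repair'' step that adjoins points to fix coverage must preserve the condition that no four points are coplanar, and you give no mechanism for doing both at once; this is precisely why the quasi-perfect MDS case is harder than the plain $2$-saturating-set case, and why the equality $\ell_q(4,3,5)=s_q^{\text{arc}}(3)$ in the statement carries real content. In short, your strategy is the standard heuristic that motivates the conjecture; as a proof, none of its essential steps is executed, and the statement remains open --- as it does in the paper.
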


Let $\mu_q(r,R)$ be the \emph{\textbf{smallest covering density of a $q$-ary linear code of codimension (redundancy) $r$ and covering radius~$R$}}.

The following Theorem \ref{th2_density} is based on the results of Sections \ref{sec_gr-matr-alg}--\ref{sec_rand_gr_alg}, see Propositions~\ref{prop4_dens_lexi_r=4} and \ref{prop5_density_lexi_r=5}.

\begin{theorem}\label{th2_density}
The $[n,n-r,5]_q3$ leximatrix and invleximatrix quasi-perfect codes, providing lexi-bounds of Theorem \emph{\ref{th2_res_rad3}(1)}, give also the following upper bounds on $\mu_q(r,3)$ (\textbf{\emph{density lexi-bounds}}):
\begin{align*}
&\mu_q(4,3)<3.3\cdot\ln q~~\text{ for }~11\le q\le7057;\\
&\mu_q(5,3)<4.2\cdot\ln q~~\text{ for }~37\le q\le839.
\end{align*}
\end{theorem}

 Note that,  for $r\neq 3t$ and the field basis $q$ of an arbitrary structure, including $q\neq (q^{\prime})^3$ where $q'$ is a prime power,  the new bounds of Theorem \ref{th2_density} have the form
\begin{align*}
   \mu_q(r,3)< c_\mu\cdot\ln q,~~ c_\mu \text{ is a constant independent of }q,~~r=4,5.
\end{align*}
\section{A leximatrix algorithm to obtain parity check matrices of covering codes}\label{sec_gr-matr-alg}

The following is a \emph{version of the recursive $g$-parity check matrix algorithm for greedy codes}, see e.g. \cite[p. 25]{BrPlessGreedy}, \cite{MonroePlessGreedCod}, \cite[Section 7]{PlessHandb}.

Let $\F_{q}=\{0,1,\ldots ,q-1\}$ be the Galois field with $q$ elements.

If $q$ is prime, the elements of
 $\F_{q}$ are treated as integers modulo $q$.

If $q=p^{m}$ with $p$ prime and $m\ge2$, the elements of $\F_{p^{m}}$ are
represented by integers as follows: $\F_{p^{m}}=\F_{q}=\{0,1=\alpha^{0},2=\alpha^{1},\ldots,u=\alpha^{u-1},\ldots,q-1=\alpha^{q-2}\},$
where $\alpha$ is a root of a primitive polynomial of $\F_{p^{m}}$.

For a $q$-ary code of codimension $r$, covering radius $R$, and minimum distance $d=R+2$,  we construct  a parity check matrix from nonzero columns $h_i$ of the form
\begin{align}\label{eq2_column}
h_{i}=(x_{1}^{(i)},x_{2}^{(i)},\ldots,x_{r}^{(i)})^{tr},~ x_{u}^{(i)}\in
\F_{q},
\end{align}
 where the first (leftmost) non-zero element is 1; \emph{tr} is the sign of transposition. The number of distinct columns is $(q^r-1)/(q-1)$. We order the columns in the list as
 \begin{align}\label{eq2_list}
    h_1,h_2,\ldots,h_{(q^r-1)/(q-1)}.
 \end{align}
For $h_i$ we put
\begin{align}\label{eq2_number_i}
    i=\sum\limits_{u=1}^r x_{u}^{(i)}q^{r-u}.
\end{align}
 The columns of the list are candidates to be included in the parity check matrix.

By the above arguments connected with the formula for $i$ and the order of the columns,  a column $h_i$ is treated as its number $i$ in our list written in the $q$-ary scale of notation. The considered \textbf{\emph{order of the columns}} is \textbf{\emph{lexicographical}}.

 The first column of the list should be included into the matrix. Then step-by-step,
one takes the next column from the list which cannot be represented as a linear
combination of at most $R$ columns already chosen. The process ends when no new column may
be included into the matrix. The obtained matrix $H_n$ is a parity check matrix of an $[n,n-r,R+2]_qR$ code.

The obtained parity check matrix is called the \textbf{\emph{parity check leximatrix}} or the \textbf{\emph{leximatrix}} for short. We call a \textbf{\emph{leximatrix code}} the corresponding code.

\textbf{For prime $q$}, the following holds: \textbf{length $n$ of a leximatrix code and the form of the leximatrix $H_n$ depend on $q$, $r$, and $R$ only}.  No other factors affect code length and structure. Actually, assume that after some step a current matrix is obtained. At the next step we should remove from our current list all columns that are linear combination of $R$ or less columns of the current matrix. For prime $q$ and the given $r$ and $R$, the result of removing is unequivocal; hence, the next column is taken uniquely.

For non-prime $q$, the length  $n$ of a
leximatrix code depends on $q$ and on the primitive polynomial of the field.
In this paper, we use primitive polynomials that are
created by the program system MAGMA \cite{MAGMA} by default, see Table A. In any case,
the choice of the polynomial changes the leximatrix code length
unessentially.

\begin{table*}[ht]
\noindent\textbf{Table A.} Primitive polynomials used in this paper for leximatrix $[n,n-r,5]_q3$ quasi-perfect codes with non-prime $q$\medskip\\
\begin{tabular}{@{}c@{\,}|@{\,}c@{\,}||@{\,}c@{\,}|@{\,}c@{\,}||@{\,}c@{\,}|@{\,}c@{}}\hline
 $q=p^{m}$&primitive&$q=p^{m}$&primitive&$q=p^{m}$& primitive \\
&polynomial&&polynomial&&polynomial\\ \hline
 $4=2^{2}$& $x^{2}+x+1$&$8=2^{3}$& $x^{3}+x+1$&$9=3^{2}$& $x^{2}+2x+2$\\
$16=2^{4}$& $x^{4}+x^{3}+1$&$25=5^{2}$& $x^{2}+x+2$&$27=3^{3}$& $x^{3}+2x^{2}+x+1$ \\
$32=2^{5}$&$x^{5}+x^{3}+1$&$49=7^{2}$&$x^{2}+x+3$&$64=2^{6}$&$x^{6}+x^{4}+x^{3}+1$\\
$81=3^{4}$&$x^{4}+x+2$&$121=11^{2}$&$x^{2}+4x+2$&$125=5^{3}$&$x^{3}+3x+2$\\
$128=2^{7}$&$x^{7}+x+1$&$169=13^{2}$&$x^{2}+x+2$&$243=3^{5}$&$x^{5}+2x+1$\\
$256=2^{8}$&$x^{8}+x^{4}+x^{3}+$&$289=17^{2}$&$x^{2}+x+3$&$343=7^{3}$&$x^{3}+3x+2$\\
&$x^{2}+1$&&&&\\
$361=19^{2}$&$x^{2}+x+2$&$512=2^{9}$&$x^{9}+x^{4}+1$&$529=23^{2}$&$x^{2}+2x+5$\\
$625=5^{4}$&$x^{4}+x^{2}+2x+2$&$729=3^{6}$&$x^{6}+x+2$&$841=29^{2}$&$x^{2}+24x+2$\\
$961=31^{2}$&$x^{2}+29x+3$&$1024=2^{10}$&$x^{10}+x^{6}+x^{5}+$&$1331=11^{3}$&$x^{3}+2x+9$\\
&&&$x^{3}+x^{2}+x+1$&&\\
$1369=37^{2}$&$x^{2}+33x+2$&$1681=41^{2}$&$x^{2}+38x+6$ &$1849=43^{2}$&$x^{2}+x+3$\\
$2048=2^{11}$&$x^{11}+x^{2}+1$&$2187=3^{7}$&$x^{7}+x^{2}+2x+1$&$2197=13^{3}$&$x^{3}+x^{2}+7$\\
$2209=47^{2}$&$x^{2}+x+13$&$2401=7^{4}$&$x^{4}+5x^{2}+4x+3$&$2809=53^{2}$&$x^{2}+49x+2$\\
$3125=5^{5}$&$x^{5}+4x+2$&$3481=59^{2}$&$x^{2}+58x+2$&$3721=61^{2}$&$x^{2}+60x+2$\\$4096=2^{12}$&$x^{12}+x^{8}+x^{2}+$&$4489=67^{2}$&$x^{2}+63x+2$&$4913=17^{3}$&$x^{3}+x+14$\\
&$x+1$&&&&\\
$5041=71^{2}$&$x^{2}+69x+7$&$5329=73^{2}$&$x^{2}+70x+5$&$6241=79^2$&$x^2+78x+3$\\
$6561=3^8$&$x^8+2x^5+x^4+$&$6859=19^3$&$x^3+4x+17$&$6889=83^2$&$x^2+82x+2$\\
&$2x^2+2x+2$&&&\\
\hline
\end{tabular}
\end{table*}

By the leximatrix algorithm, if $R=1$, we obtain the $q$-ary Hamming code. If $R=2$, we obtain a quasi-perfect $[n,n-r,4]_q2$ code; for $r=3$, such code is an MDS code and corresponds to a complete arc in $\mathrm{PG}(2,q)$. If $R=3$, we obtain a quasi-perfect $[n,n-r,5]_q3$ code; for $r=4$, such code is an MDS code and corresponds to a complete arc in $\mathrm{PG}(3,q)$; for $r=5$, it is an Almost MDS code.

Let $n^\text{L}_q(r,R)$ be \textbf{length of the $q$-ary leximatrix code of codimension $r$ and covering radius~$R$}.

It is assumed that for a non-prime field $\F_q$, one uses the primitive polynomial created by the program system MAGMA \cite{MAGMA} by default; in particular, for non-prime $q\le6889$, the polynomial from Table A should be taken.

We represent length $n^{\text{L}}_q(r,R)$ of an $[n^{\text{L}}_q(r,R),n^\text{L}_q(r,R)-r,R+2]_qR$ leximatrix code in the form
\begin{align}\label{eq2_cLex}
    n^{\text{L}}_q(r,R)=c^{\text{L}}_q(r,R)\sqrt[R]{\ln q}\cdot q^{(r-R)/R},
\end{align}
where $c^{\text{L}}_q(r,R)$ is a coefficient. The coefficient $c^{\text{L}}_q(r,R)$ and length $n^{\text{L}}_q(r,R)$ are entirely given by $r,R,q$ (if $q$ is prime) or by $r,R,q$, and the primitive polynomial of $\F_q$ (if $q$ is non-prime).

\begin{remark}
In the literature on the projective geometry, the columns are considered as points in homogeneous coordinates; the algorithm, described above, is called an ``algorithm with fixed order of points'' (FOP) \cite{BDFKMP_ArcFOPArXiv2015,BDFKMP_ArcComputJG2016,DMP-Redund2019}.
\end{remark}

Let $\mu^\text{L}_q(r,R)$ be \textbf{covering density of the $q$-ary leximatrix code of codimension $r$ and covering radius~$R$}.

By \eqref{eq1_density}. we have
\begin{align}\label{eq3_Ldensity}
   \mu^\text{L}_q(r,R) =\frac{1}{q^{r}}\sum_{i=0}^{R}(q-1)^i\binom{n^{\text{L}}_q(r,R)}{i}\ge1.
\end{align}
We represent covering density $ \mu^\text{L}_q(r,R)$ of an $[n^{\text{L}}_q(r,R),n^\text{L}_q(r,R)-r,R+2]_qR$ leximatrix code in the form
\begin{align}\label{eq3_mLex}
   \mu^\text{L}_q(r,R)=m^{\text{L}}_q(r,R)\cdot\ln q,
\end{align}
where $m^{\text{L}}_q(r,R)$ is a coefficient. The coefficient $m^{\text{L}}_q(r,R)$ and density $\mu^\text{L}_q(r,R)$ are entirely given by $r,R,q$ (if $q$ is prime) or by $r,R,q$, and the primitive polynomial of $\F_q$ (if $q$ is non-prime).

\section{Upper bounds on the length function $\ell_q(4,3)$  and $d$-length function $\ell_q(4,3,5)$ based on leximatrix codes}\label{sec_ell_q(4,3)}

The following properties of the leximatrix algorithm are useful for implementation.

\begin{proposition}\label{prop7_init_part_lexi}
Let $q$ be a prime. Then the $v$-th column of the parity check leximatrix of an
$[n,n-4,5]_q3$ code is the same for all $q\ge q_{0}(v)$ where
$q_{0}(v)$ is large enough.
\end{proposition}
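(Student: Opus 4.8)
The plan is to argue by strong induction on $v$, after translating the selection rule into the geometry of $\mathrm{PG}(3,q)$ and then comparing the construction over $F_q$ with a single ``generic'' construction over $\mathbb{Q}$. Treating columns as points, a column $h$ is a linear combination of at most $3$ already chosen columns exactly when the point $h$ lies on the plane spanned by some three of the chosen points; this uses that the chosen points form an arc (the code is MDS with $d=5$), so any three of them span a genuine plane. First I would settle the base: directly, $c_1,\dots,c_4$ are the four coordinate points $e_4,e_3,e_2,e_1$ for every $q$, and every point with a zero coordinate lies on a coordinate plane $\langle c_i,c_j,c_k\rangle$; hence the whole of the first three lexicographic blocks, and every later candidate with a zero entry, is skipped, so for $v\ge 5$ each chosen point has leading entry $1$ and all coordinates nonzero. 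Assume inductively that for all primes $q\ge q_0(v-1)$ the columns $c_1,\dots,c_{v-1}$ equal fixed integer tuples $\hat c_1,\dots,\hat c_{v-1}$ with entries bounded independently of $q$. They determine a fixed finite family of planes $\pi_{ijk}=\langle \hat c_i,\hat c_j,\hat c_k\rangle$, each cut out by an integer linear form $L_{ijk}$ (a $4\times4$ determinant in fixed integer entries); thus $L_{ijk}$ is independent of $q$, and a point $h$ lies on $\pi_{ijk}$ over $F_q$ iff $L_{ijk}(h)\equiv 0\pmod q$.

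Next I would define $\hat c_v$ to be the lexicographically least integer point (leading entry $1$, positive coordinates) lying on no $\pi_{ijk}$ over $\mathbb{Q}$, and show it exists with entries bounded by a constant $P$ that depends only on the number of planes, hence only on $v$. Finitely many planes cannot cover all integer points: for instance each plane contains a twisted‑cubic point $(1,t,t^2,t^3)$ for at most three values of $t$. For boundedness, note that if $(1,m_2,m_3,m_4)$ is lex‑least uncovered then every point of the line $\{(1,m_2,m_3,x_4)\}$ with $x_4<m_4$ must be covered; since a plane meets a line in a single point unless it contains it, either one $\pi_{ijk}$ contains the whole line (forcing $\hat c_v$ itself to be covered, a contradiction) or at most $P$ values of $x_4$ are covered, so $m_4\le P$. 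Iterating this over the trailing coordinates bounds $m_3$ and $m_2$ by $P$ as well. Crucially, the same argument shows that every lex‑earlier line or family with infinitely many members is covered by a \emph{uniform} plane, namely a $\pi_{ijk}$ whose defining form does not involve the free coordinate and vanishes identically along the family.

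Finally I would transfer to $F_q$ for large primes $q$, checking two things. (A) $\hat c_v$ is uncovered mod $q$: each $L_{ijk}(\hat c_v)$ is a fixed nonzero integer, hence nonzero mod $q$ once $q$ exceeds all $|L_{ijk}(\hat c_v)|$. (B) every lexicographically earlier point is covered mod $q$: an earlier point with bounded entries is covered over $\mathbb{Q}$ by minimality of $\hat c_v$, so the relevant $L_{ijk}$ vanishes as an integer and therefore mod $q$; an earlier point with a large entry lies on a uniform plane from the previous paragraph, whose form vanishes identically along the family over $\mathbb{Q}$, an identity that persists mod $q$. For large $q$ the three spanning points of each such plane stay independent mod $q$ (a $3\times3$ minor is a fixed nonzero integer), so the plane is genuine. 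Combining (A) and (B), the mod‑$q$ algorithm skips every earlier column and selects $\hat c_v$, giving $c_v(q)=\hat c_v$ for all primes $q\ge q_0(v)$, which closes the induction.

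I expect the main obstacle to be part (B) for points with unbounded entries: a priori a linear dependence could hold mod $q$ without holding over $\mathbb{Q}$ (or vice versa), which would break the correspondence between the two constructions. The boundedness argument of the middle paragraph—``a plane must contain an entirely covered infinite line''—is precisely what rules this out, by forcing every such covering to come from a uniform plane that reduces correctly mod $q$, while all genuinely bounded data is controlled by fixed integer determinants that keep their sign and nonvanishing once $q$ is large. The restriction to prime $q$ enters exactly here, since it makes reduction modulo $q$ of integer linear forms unambiguous.
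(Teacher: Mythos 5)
Your proof is correct, and while it rests on the same transfer principle as the paper's proof --- coverage of a column is a determinant condition, and one must separate determinants that vanish over $\mathbb{Z}$ from nonzero integers that merely happen to be divisible by $q$ --- you develop it into a genuinely different and much more complete argument. The paper's own proof is a short sketch: it asserts that for $q$ large enough ``$q$ does not divide any of the possible values of $B$,'' so that the covered columns are exactly the absolutely covered ones. That assertion is unproblematic only for determinants involving the fixed candidate column; for the intermediate columns the algorithm must \emph{skip}, whose entries range over all of $F_q$, the integers $B$ grow with $q$ and can perfectly well be divisible by $q$, and the paper never explains why those columns are skipped for every large prime. Your proof supplies exactly what is missing there: you construct the limit sequence $\hat c_1,\hat c_2,\ldots$ explicitly over $\mathbb{Z}$ (lex-least point off the fixed planes, existence via the twisted cubic), and your uniform-plane lemma --- an infinite lex-earlier family covered by finitely many fixed planes must lie entirely inside one plane whose form omits the free coordinates, an identity that survives reduction mod $q$ --- is precisely what controls the columns whose entries depend on $q$. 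Two remarks. (1) Your case split in step (B) is actually redundant: since $\hat c_v$ is lex-minimal among \emph{all} integer points, every lex-earlier $F_q$-point, bounded entries or not, is an integer point covered over $\mathbb{Q}$, and exact integer vanishing reduces modulo every prime; largeness of $q$ is needed only in the opposite direction, namely for the single fixed point $\hat c_v$ (its finitely many determinants are fixed nonzero integers) and for the fixed minors that keep the spanning triples independent mod $q$. (2) The uniform-plane argument still earns its keep elsewhere: it is what bounds the entries of $\hat c_v$ by a constant depending only on $v$, hence makes $q_0(v)$ effectively computable, which is how the paper actually exploits the proposition (the computation of Table B) and which the purely qualitative version of the argument would not give.
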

\begin{proof}
Let $H_j=[h^{(1)},h^{(2)},\ldots,h^{(j)}]$ be the matrix obtained in the $j$-th step of the\linebreak leximatrix algorithm. Here $h^{(v)}$ is  a  column of the matrix. A column from the list, not included in $H_j$, is covered by $H_j$ if it can be represented as a linear combination of at most 3  columns of $H_j$. Suppose that $h^{(j)}=h_s$, where $h_s$ is the $s$-th column in the lexicographical list of candidates.
A column $Q=h_u \not \in H_j$ is the next chosen column, if and only if all the columns $h_m$
with $m \in [s+1,u-1]$ are covered by $H_j$. This means that, for
any $m \in [s+1,u-1]$, at least one of the determinants $\det(h^{(v_1)},h^{(v_2)},h^{(v_3)},h_m)$, with $h^{(v_1)},h^{(v_2)},h^{(v_3)} \in
H_j$, is equal to zero modulo $q$. This can happen only in two cases:
\begin{description}
\item{$\bullet$}   $\det(h^{(v_1)},h^{(v_2)},h^{(v_3)},h_m)= 0$, we  say that  $h_m$ is
    ``absolutely" covered by $H_j$;
  \item{$\bullet$}  $\det(h^{(v_1)},h^{(v_2)},h^{(v_3)},h_m)= B\neq0$, but $B \equiv 0 \bmod
    q$.
\end{description}
For $q$ large enough, $q$ does not divide any
of the possible values of $B$ and then, at least for
$j$ relatively small,  the columns covered are just the absolutely
covered columns. Therefore, when $q$ is large enough the
leximatrices  share a certain number of columns.
\end{proof}

The values of $q_{0}(v)$ can be found with the help of
calculations based on the proof of Proposition
\ref{prop7_init_part_lexi}. Also, we can directly consider
leximatrices  for a convenient region of $q$.

\begin{example}\label{ex7}
Values of $q_{0}(v)$, $v\le 20$, together with columns $(x_1^{(v)},x_2^{(v)},x_3^{(v)},x_4^{(v)})^{tr}$, are
given in Table B.
So, for all prime $q\ge233$ (resp. $q\ge1321$) the first 14 (resp. 20) columns of a parity check leximatrix
of an $[n,n-4,5]_q3$ quasi-perfect MDS leximatrix code are as in Table B.

\begin{table*}[htbp]
\noindent\textbf{Table B.} The first 20 columns of the parity check leximatrices of
$[n,n-4,5]_q3$ quasi-perfect MDS leximatrix codes, $q$ prime
 \begin{center}
\begin{tabular}{r|rrrr|r||r|rrrr|r}\hline\noalign{\smallskip}
 $v$&$x_1^{(v)}$ & $x_2^{(v)}$ &$x_3^{(v)}$&$x_4^{(v)}$&$q_{0}(v)$&$v$&$x_1^{(v)}$ & $x_2^{(v)}$ &$x_3^{(v)}$&$x_4^{(v)}$&$q_{0}(v)$  \\
 \hline
 1& 0&0&0&1&2&    11&1&7&11&8&67  \\
 2& 0&0&1&0&2&    12&1&8&6&13&109 \\
 3& 0 &1&0&0&2&   13&1&9&13&16&199\\
4&  1&0&0&0&2&    14&1&10&12&22&233\\
5& 1 &1&1&1&2&    15&1&11&7&29&269 \\
6& 1 &2&3&4&5&    16&1&12&22&15&769 \\
7& 1 &3&2&5&11&   17&1&13&16&20&769 \\
8& 1 &4&5&3&29&   18&1&14&17&7&1283\\
9&1&5&4&2&41&     19&1&15&21&10&1283\\
10&1&6&8&9&41&    20&1&16&9&38&1321\\
\hline
 \end{tabular}
\end{center}
\end{table*}
\end{example}

\begin{proposition}\label{prop7_lexi_r=4}
\begin{description}
\item[(i)] For $q=9$, there exists a $[7,7-4,4]_93$ code of length $n=7<2.8\sqrt[3]{9\ln 9}$.
\item[(ii)]There exist $[n^{\emph{\text{L}}}_q(4,3),n^{\emph{\text{L}}}_q(4,3)-4,5]_q3$ quasi-perfect MDS \smallskip  leximatrix codes of length $n^{\emph{\text{L}}}_q(4,3)<2.8\sqrt[3]{q\ln q}$~for $q=8$ and $11\le q\le7057$.
\end{description}
\end{proposition}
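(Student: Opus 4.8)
The plan is to treat Proposition~\ref{prop7_lexi_r=4} as the verification of an explicit numerical inequality for the lengths produced by the leximatrix algorithm, since the structural content is already settled in Section~\ref{sec_gr-matr-alg}: for $r=4$ and $R=3$ the algorithm always outputs a quasi-perfect MDS $[n,n-4,5]_q3$ code, equivalently a complete arc in $\mathrm{PG}(3,q)$, with covering radius exactly $3$ and minimum distance exactly $5$. Hence the only thing left to prove in part~(ii) is that $n^{\text{L}}_q(4,3)<2.8\sqrt[3]{q\ln q}$ for each admissible $q$, i.e. that the coefficient $c^{\text{L}}_q(4,3)$ of \eqref{eq2_cLex} stays below $2.8$ throughout the range.

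First I would run the leximatrix algorithm of Section~\ref{sec_gr-matr-alg} for every prime power $q$ with $q=8$ and $11\le q\le 6361$, using the canonical (unique) continuation guaranteed for prime $q$ and the primitive polynomials of Table~A for non-prime $q$. To keep the search feasible I would exploit Proposition~\ref{prop7_init_part_lexi} and Example~\ref{ex7}: for $q$ large the initial segment of the leximatrix is fixed (the columns of Table~B), so the construction need only be continued from a known prefix, and the absolute coverings (columns with an identically vanishing determinant, in the terminology of the proof of Proposition~\ref{prop7_init_part_lexi}) can be precomputed once. For each $q$ this returns the exact length $n^{\text{L}}_q(4,3)$; substituting it into $2.8\sqrt[3]{q\ln q}$ and comparing confirms the bound, and the maximum of $c^{\text{L}}_q(4,3)$ over the range certifies that $2.8$ is a valid uniform constant.

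For part~(i), $q=9$ must be handled separately because the MDS ($d=5$) leximatrix code there is already too long to meet the $2.8$ bound (here $2.8\sqrt[3]{9\ln 9}\approx 7.58<8$, leaving room only for length at most $7$). I would instead exhibit a single explicit $[7,3,4]_93$ code by writing down its $4\times 7$ parity check matrix over $F_9$, then verify directly that it has covering radius $3$ (every vector of $F_9^4$ is a linear combination of at most three of its columns, and some vector needs three) and minimum distance $4$; the inequality $7<2.8\sqrt[3]{9\ln 9}$ is then immediate.

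The main obstacle is computational scale together with the identification of the extremal cases, rather than any single hard deduction: one must run the algorithm over all prime powers up to $6361$, with MAGMA field arithmetic for the non-prime cases, and certify the inequality for each one, including the worst-case $q$ where $c^{\text{L}}_q(4,3)$ comes closest to $2.8$. The genuine subtleties to pin down are precisely the two exclusions that make the uniform constant $2.8$ tight: confirming that $q=6241$ is the one value in $[11,6361]$ whose leximatrix (built from the Table~A polynomial $x^2+78x+3$) overshoots the bound, and that $q=9$ genuinely requires the separate $d=4$ construction of part~(i) because its MDS leximatrix code is too long.
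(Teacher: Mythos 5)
Your approach to part (ii) is essentially the paper's own: the codes are produced by computer search with the leximatrix algorithm, accelerated by the stable initial segments guaranteed by Proposition~\ref{prop7_init_part_lexi} and Example~\ref{ex7}, and the inequality $n^{\text{L}}_q(4,3)<2.8\sqrt[3]{q\ln q}$ is then checked numerically for every admissible $q$. For part (i) the paper simply cites the known $[7,3,4]_93$ code from \cite[Tab.\,1]{DavOst-DESI2010}, whereas you propose to exhibit a $4\times 7$ parity check matrix and verify covering radius and minimum distance directly; that is an equally valid, if more laborious, route to the same fact, and your explanation of why $q=9$ needs the separate $d=4$ construction is correct (indeed $n^{\text{L}}_9(4,3)=9>2.8\sqrt[3]{9\ln 9}\approx 7.57$).

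One claim in your plan is wrong and should be dropped: $q=6241$ is not excluded because its leximatrix code ``overshoots'' the bound. According to the Remark following the proposition, $n^{\text{L}}_{6241}(4,3)$ was simply \emph{not calculated} (this is the non-prime case $q=79^2$), and the authors in fact conjecture that $n^{\text{L}}_{6241}(4,3)<2.66\sqrt[3]{q\ln q}$, i.e.\ that the bound holds there with room to spare. So the step in which you would ``confirm that $6241$ is the one value whose leximatrix overshoots the bound'' is not only unnecessary for the proposition as stated (which excludes that $q$) but would fail: there is nothing of the sort to confirm, and the available evidence points the opposite way. The exclusion reflects a gap in the computations, not tightness of the constant $2.8$.
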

\begin{proof}
\begin{description}
\item[(i)] The existence of the code is noted in \cite[Tab.\,1]{DavOst-DESI2010}, see also the references therein.
\item[(ii)]
The needed codes are obtained by computer search, using the leximatrix algorithm, Proposition \ref{prop7_init_part_lexi}, and Example \ref{ex7}.
\end{description}
\end{proof}

Proposition \ref{prop7_lexi_r=4} implies the assertions of Theorem \ref{th2_res_rad3}(1i) on the upper \textbf{\emph{lexi-bound}} on the length function $\ell_q(4,3)$  and the $d$-length function $\ell_q(4,3,5)$.

Lengths $n^{\text{L}}_q(4,3)$ of the $[n^{\text{L}}_q(4,3),n^{\text{L}}_q(4,3)-4,5]_q3$ leximatrix quasi-perfect MDS codes are collected in Table 1 (see Appendix) and presented in Figure~\ref{fig_PG3qFOPsize} by the bottom solid black curve.
The bound
\begin{align*}
n^{\text{L}}_q(4,3)<2.8\sqrt[3]{q\ln q},
\end{align*}
called the \emph{\textbf{lexi-bound}},  is shown in Figure \ref{fig_PG3qFOPsize} by the top dashed red curve.
\begin{figure}[htb]
\includegraphics[width=\textwidth]{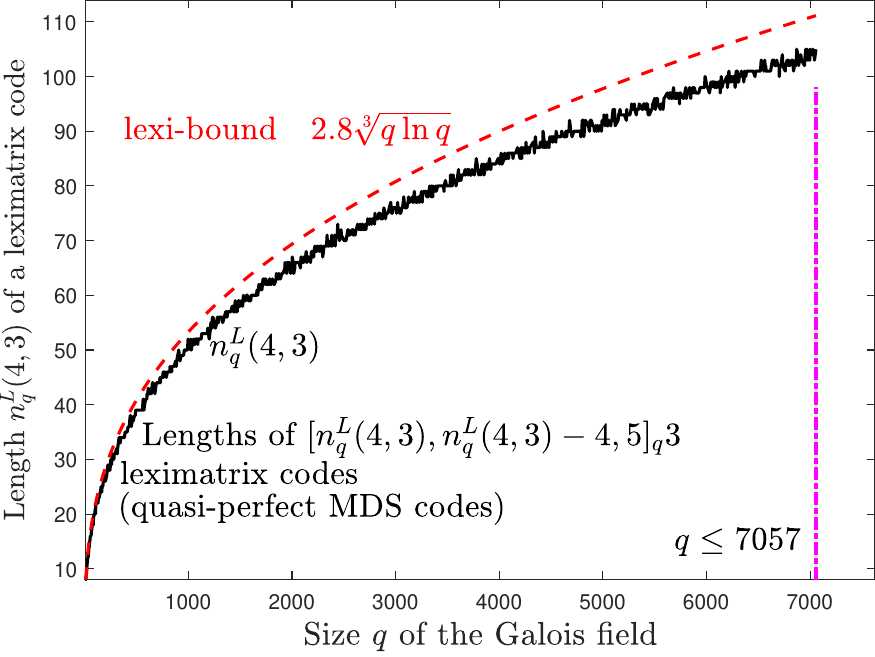}
\caption{Lengths $n^\text{L}_q(4,3)$ of the $[n^\text{L}_q(4,3),n^\text{L}_q(4,3)-4,5]_q3$ leximatrix quasi-perfect  MDS codes (\emph{bottom solid black curve}) vs the lexi-bound $2.8\sqrt[3]{q\ln q}$ (\emph{top dashed red curve});\newline $11\le q\leq 7057$. \emph{Vertical magenta line} marks region $q\le7057$}
\label{fig_PG3qFOPsize}
\end{figure}

We denote by  $\delta_q(4,3)$ the difference between the lexi-bound $2.8\sqrt[3]{q\ln q}$  and length $n^\text{L}_q(4,3)$ of the leximatrix code. Let $\delta_q^{\%}(4,3)$ be the corresponding percent difference. Thus,
\begin{align*}
&\delta_q(4,3)=2.8\sqrt[3]{q\ln q}-n^\text{L}_q(4,3);\displaybreak[3]\\
&\delta_q^{\%}(4,3)=\frac{2.8\sqrt[3]{q\ln q}-n^\text{L}_q(4,3)}{2.8\sqrt[3]{q\ln q}}100\%.
\end{align*}
The difference $\delta_q(4,3)$ and the percent difference $\delta_q^{\%}(4,3)$ are presented in Figures \ref{fig_PG3qFOPdlt} and~\ref{fig_PG3qFOPperc}.
\begin{figure}[htb]
\includegraphics[width=\textwidth]{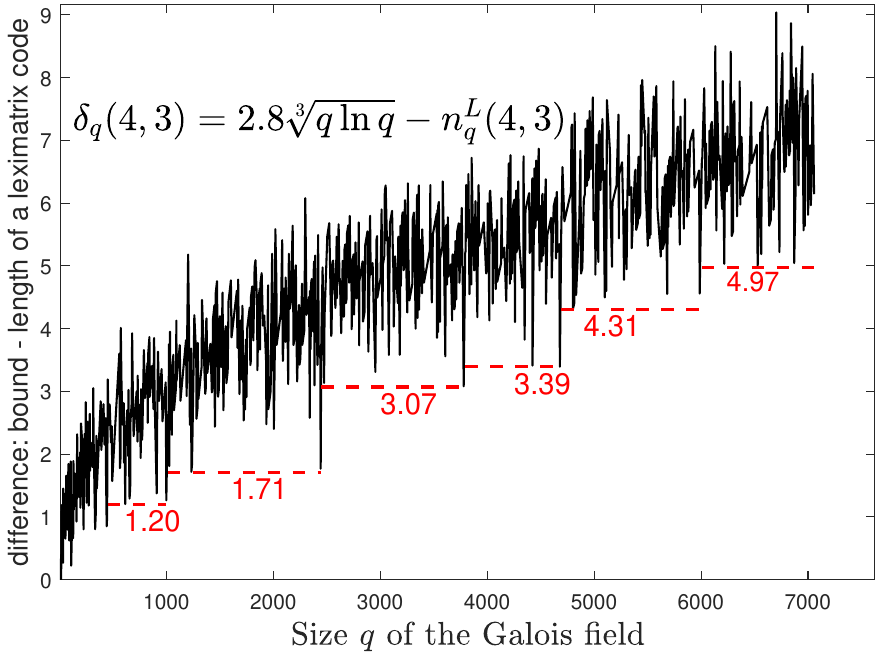}
\caption{Difference $\delta_q(4,3)$ between the lexi-bound $2.8\sqrt[3]{q\ln q}$ and length $n^\text{L}_q(4,3)$ of an $[n^\text{L}_q(4,3),n^\text{L}_q(4,3)-4,5]_q3$ leximatrix code; $11\le q\leq 7057$}
\label{fig_PG3qFOPdlt}
\end{figure}

\begin{figure}[htb]
\includegraphics{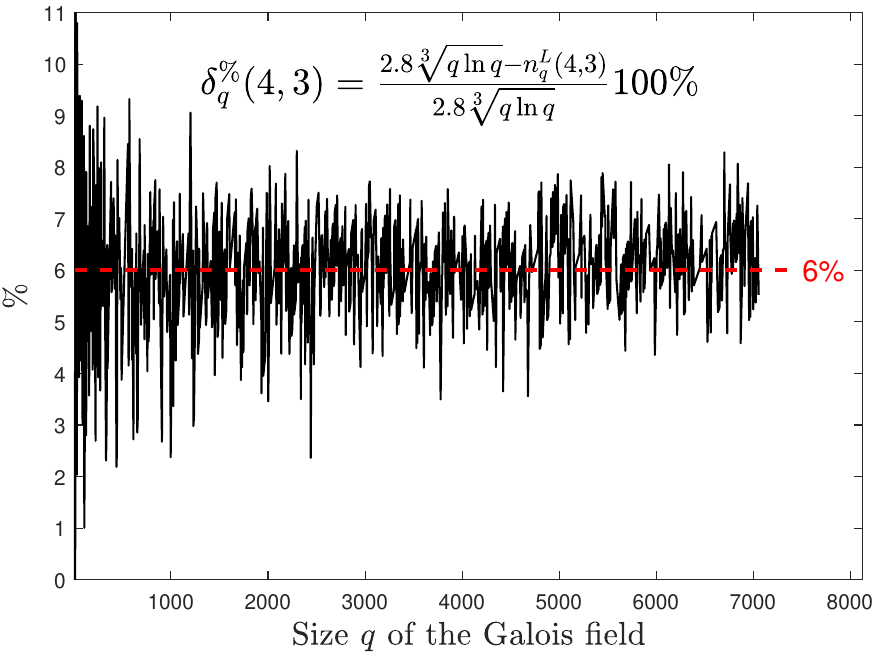}
\caption{Percent difference $\delta_q^{\%}(4,3)=\frac{2.8\sqrt[3]{q\ln q}-n^\text{L}_q(4,3)}{2.8\sqrt[3]{q\ln q}}100\%$ between the lexi-bound $2.8\sqrt[3]{q\ln q}$ and length $n^\text{L}_q(4,3)$ of an $[n^\text{L}_q(4,3),n^\text{L}_q(4,3)-4,5]_q3$ leximatrix code;\newline $11\le q\leq 7057$}
\label{fig_PG3qFOPperc}
\end{figure}

By \eqref{eq2_cLex}, we represent length of an $[n^{L}_q(4,3),n^{L}_q(4,3)-4,5]_q3$ leximatrix code in the form
\begin{align}\label{eq3_coef}
    n^{\text{L}}_q(4,3)=c^{\text{L}}_q(4,3)\sqrt[3]{q\ln q},
\end{align}
where $c^{\text{L}}_q(4,3)$ is a coefficient entirely given by $q$ (if $q$ is prime) or by $q$ and the primitive polynomial of the field $\F_q$ (if $q$ is non-prime). The coefficients $c_q^\text{L}(4,3)=n^{\text{L}}_q(4,3)/\sqrt[3]{q\ln q}$ are shown in Figure~\ref{fig_PG3qFOPcoef}.
\begin{figure}[htb]
\includegraphics[width=\textwidth]{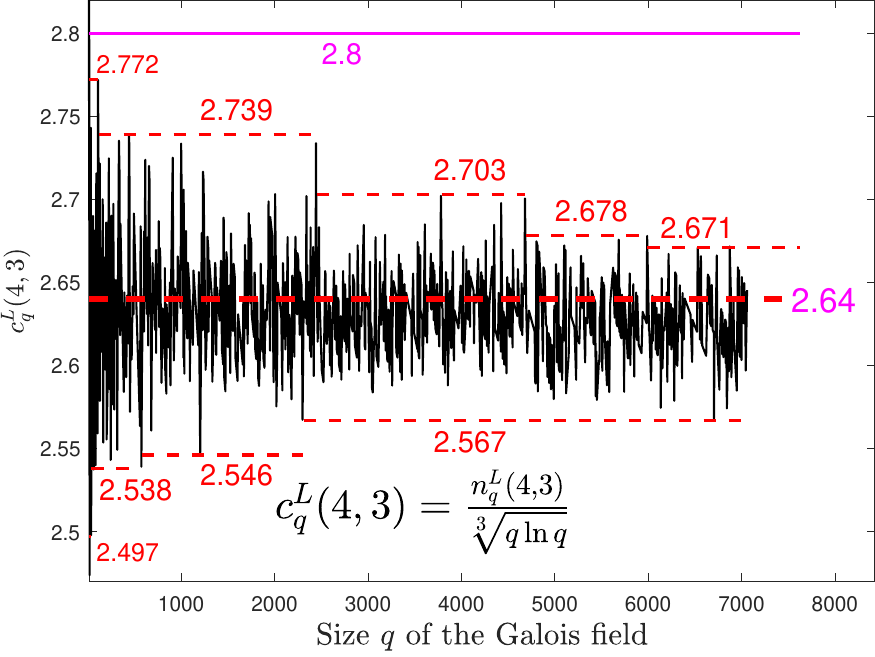}
\caption{Coefficients  $c_q^\text{L}(4,3)=n^\text{L}_q(4,3)/\sqrt[3]{q\ln q}$ for the $[n^\text{L}_q(4,3),n^\text{L}_q(4,3)-4,5]_q3$ leximatrix quasi-perfect  MDS codes;  $11\le q\leq 7057$}
\label{fig_PG3qFOPcoef}
\end{figure}

\begin{observation}\label{observ3}
\begin{description}
\item[(i)] The difference $\delta_q(4,3)$  tends to increase when $q$ grows, see Figures $\ref{fig_PG3qFOPsize}$ and $\ref{fig_PG3qFOPdlt}$.

\item[(ii)] The percent difference $\delta_q^{\%}(4,3)$ oscillates around the horizontal line $y=6\%$. When $q$ increases, the oscillation amplitude tends to decrease, see Figure $\ref{fig_PG3qFOPperc}$.

\item[(iii)] Coefficients $c_q^\text{L}(4,3)$ oscillate around the horizontal line $y=2.64$ with a small amplitude. \textbf{When $q$ increases, the oscillation
amplitude tends to decrease}, see Figure~$\ref{fig_PG3qFOPcoef}$.
\end{description}
\end{observation}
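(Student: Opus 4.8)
The plan is to reduce all three clauses to a single computational input, namely the family of leximatrix code lengths $n_q^{\text{L}}(4,3)$ for $11\le q\le6361$, $q\ne6241$, which is already produced by the algorithm of Section~\ref{sec_gr-matr-alg} and recorded in Table~1 and Figures~\ref{fig_PG3qFOPsize}--\ref{fig_PG3qFOPcoef}. The first step is to note that parts \textbf{(i)} and \textbf{(ii)} are exact algebraic consequences of part \textbf{(iii)} once the definitions are unfolded. Writing $c_q^{\text{L}}(4,3)=n_q^{\text{L}}(4,3)/\sqrt[3]{q\ln q}$ as in \eqref{eq3_coef}, one has $\delta_q(4,3)=\bigl(2.8-c_q^{\text{L}}(4,3)\bigr)\sqrt[3]{q\ln q}$ and $\delta_q^{\%}(4,3)=\bigl(1-c_q^{\text{L}}(4,3)/2.8\bigr)\cdot100\%$. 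Thus, once $c_q^{\text{L}}(4,3)$ is known to stay close to $2.64$, the factor $2.8-c_q^{\text{L}}(4,3)$ is positive and bounded away from zero, so $\delta_q(4,3)$ inherits the growth of $\sqrt[3]{q\ln q}$, giving \textbf{(i)}; and $\delta_q^{\%}(4,3)\approx(1-2.64/2.8)\cdot100\%\approx5.7\%$ hovers near $6\%$, giving \textbf{(ii)}, with its oscillation a rescaled copy of that of $c_q^{\text{L}}(4,3)$.

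Everything therefore hinges on \textbf{(iii)}, and here I would argue directly from the data rather than from a closed form. Since for prime $q$ the leximatrix is uniquely determined by $q$, $r=4$, $R=3$ (Section~\ref{sec_gr-matr-alg}), and for the finitely many non-prime $q\le6241$ the primitive polynomials of Table~A fix the matrix, the integer $n_q^{\text{L}}(4,3)$ is a well-defined computable quantity for each admissible $q$. The verification then consists of running the leximatrix algorithm for every such $q\le6361$, forming $c_q^{\text{L}}(4,3)$, and checking that the resulting finite sequence clusters around $2.64$ with amplitude shrinking as $q$ grows. Proposition~\ref{prop7_init_part_lexi} and Example~\ref{ex7} make this feasible, since they fix a long stable initial segment of columns, so only the tail of each parity check matrix must be recomputed as $q$ varies.

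The hard part will be the qualitative clauses ``oscillates around $y=2.64$ (resp. $y=6\%$)'' and ``the oscillation amplitude trends to decrease.'' These are not statements about an asymptotic limit: no closed form for $n_q^{\text{L}}(4,3)$ is known, and even the existence of $\lim_{q\to\infty} c_q^{\text{L}}(4,3)$ is not established (Conjecture~\ref{conj2} asserts only the weaker uniform bound $c_q^{\text{L}}(4,3)<2.8$). I would therefore not attempt to prove a limit, but instead make the claim precise as an empirical description of the computed finite family: fix the center values $2.64$ and $6\%$, report the observed maxima and minima of $c_q^{\text{L}}(4,3)$ over successive (say, dyadic) windows of $q$, and exhibit that these windowed ranges narrow. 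The honest content of the Observation is thus that the exhaustive computation displayed in Figure~\ref{fig_PG3qFOPcoef} exhibits these trends; a fully rigorous proof of monotonically decreasing amplitude as $q\to\infty$ would require an asymptotic analysis of the leximatrix construction that the paper deliberately leaves open.
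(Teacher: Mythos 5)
Your proposal is correct and matches what the paper actually does: the Observation is an empirical statement verified by running the leximatrix algorithm over the whole range $11\le q\le 6361$ (Table 1, Figures \ref{fig_PG3qFOPsize}--\ref{fig_PG3qFOPcoef}), and the paper itself records the same algebraic link you use, namely $\delta_q^{\%}(4,3)=\bigl(1-c_q^\text{L}(4,3)/2.8\bigr)100\%$, immediately after the Observation. Your only additions — making the analogous reduction explicit for part (i) via $\delta_q(4,3)=\bigl(2.8-c_q^{\text{L}}(4,3)\bigr)\sqrt[3]{q\ln q}$, and stressing that the oscillation claims are descriptions of finite computed data rather than provable asymptotics (which is exactly why the paper phrases the limiting behavior only as Conjecture \ref{conj2}) — are consistent refinements of the paper's reasoning, not a different route.
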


Observation \ref{observ3} gives rise to Conjecture~\ref{conj2}(i) on the length function $\ell_q(4,3)$  and the $d$-length function $\ell_q(4,3,5)$.

Note that Observations \ref{observ3}(ii) and \ref{observ3}(iii) are connected with each other. Actually,
$$\delta_q^{\%}(4,3)=\frac{2.8\sqrt[3]{q\ln q}-n^\text{L}_q(4,3)}{2.8\sqrt[3]{q\ln q}}100\%=\left(1-\frac{c_q^\text{L}(4,3)}{2.8}\right)100\%.$$

\begin{remark}\label{rem3_oscil}
  It is interesting that the oscillation of the coefficients $c_q^\text{L}(4,3)$ around a horizontal line, in principle, is similar to the oscillation of the values $h^\text{L}(q)$ around a horizontal line in \cite[Fig.\,6, Observation 3.5]{BDFKMP_ArcFOPArXiv2015}, \cite[Fig.\,5, Observation 3.7]{BDFKMP_ArcComputJG2016}.

  In the papers \cite{BDFKMP_ArcFOPArXiv2015,BDFKMP_ArcComputJG2016}, small complete $t_2^L(2,q)$-arcs in the projective plane $\mathrm{PG}(2,q)$ are constructed by computer search using algorithm with fixed order of points (FOP). These arcs correspond to $[t_2^L(2,q),t_2^L(2,q)-3,4]_q2$ quasi-perfect MDS codes while the algorithm FOP is analogous to the leximatrix algorithm of Section \ref{sec_gr-matr-alg}. Moreover, the value $h^\text{L}(q)$ is defined in \cite{BDFKMP_ArcFOPArXiv2015,BDFKMP_ArcComputJG2016} as  $h^\text{L}(q)=t_2^L(2,q)/\sqrt{3q\ln q}$. So, see \eqref{eq3_coef}, the coefficients $c_q^\text{L}(4,3)$ and the values $h^\text{L}(q)$ have the similar nature. It is possible that the oscillations mentioned also have similar reasons.

  However, in the present time the \textbf{enigma of the oscillations} is incomprehensible,
\end{remark}

\begin{proposition}\label{prop4_dens_lexi_r=4}
There exist $[n^{\emph{\text{L}}}_q(4,3),n^{\emph{\text{L}}}_q(4,3)-4,5]_q3$ quasi-perfect MDS leximatrix codes of covering density  $\mu^{\text{L}}_q(4,3)<3.3\cdot\ln q$ for  $11\le q\le7057$.
\end{proposition}
\begin{proof}
The needed codes are the codes of Proposition \ref{prop7_lexi_r=4}.
\end{proof}

Proposition \ref{prop4_dens_lexi_r=4} implies the assertion of Theorem \ref{th2_density} on the upper \textbf{\emph{density lexi-bound}} on the covering density  $\mu_q(4,3)$ .

Covering densities $\mu^{\text{L}}_q(4,3)$ of the $[n^{\text{L}}_q(4,3),n^{\text{L}}_q(4,3)-4,5]_q3$ leximatrix quasi-perfect MDS codes  are presented in Figure~\ref{fig_PG3qFOPdensity} by the bottom solid black curve. The values $\mu^{\text{L}}_q(4,3)$ are obtained by \eqref{eq3_Ldensity} where lengths $n^{\text{L}}_q(4,3)$ are taken from Table 1 (see Appendix).
The bound
\begin{align*}
\mu^{\text{L}}_q(4,3)<3.3\cdot\ln q,
\end{align*}
called the \emph{\textbf{density lexi-bound}},  is shown in Figure \ref{fig_PG3qFOPdensity} by the top dashed red curve.
\begin{figure}[htb]
\includegraphics[width=\textwidth]{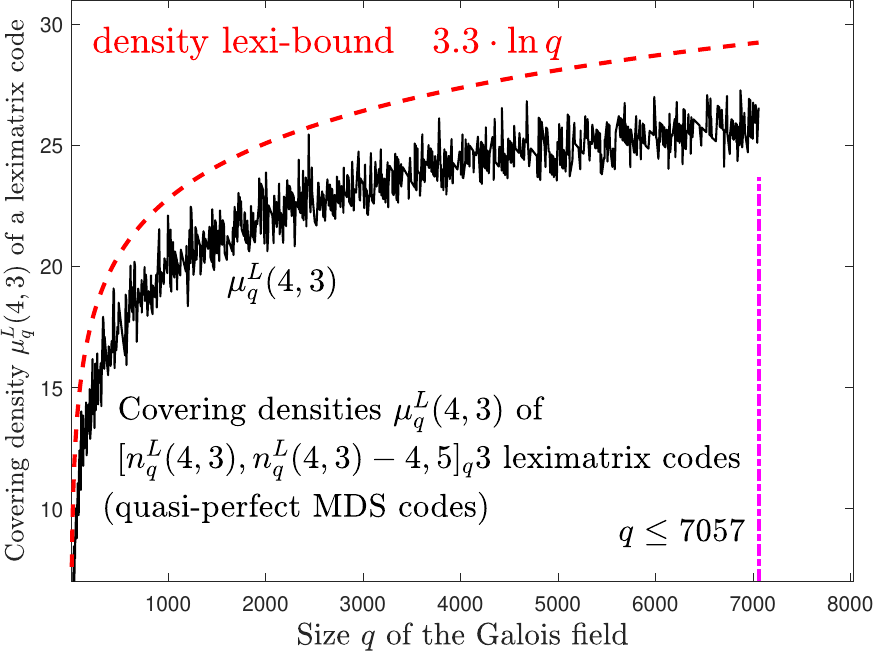}
\caption{Covering densities $\mu^\text{L}_q(4,3)$ of the $[n^\text{L}_q(4,3),n^\text{L}_q(4,3)-4,5]_q3$ leximatrix quasi-perfect  MDS codes  (\emph{bottom solid black curve}) vs the density lexi-bound $3.3\cdot\ln q$ (\emph{top dashed red curve}); $11\le q\leq 7057$. \emph{Vertical magenta line} marks region $q\le7057$}
\label{fig_PG3qFOPdensity}
\end{figure}

By \eqref{eq3_mLex}, we represent covering density of an $[n^{L}_q(4,3),n^{L}_q(4,3)-4,5]_q3$ leximatrix code in the form
\begin{align*}
    \mu^{\text{L}}_q(4,3)=m^{\text{L}}_q(4,3)\cdot\ln q,
\end{align*}
where $m^{\text{L}}_q(4,3)$ is a coefficient entirely given by $q$ (if $q$ is prime) or by $q$ and the primitive polynomial of the field $\F_q$ (if $q$ is non-prime). The coefficients $m_q^\text{L}(4,3)=\frac{\mu^{\text{L}}_q(4,3)}{\ln q}$ are shown in Figure~\ref{fig_PG3qFOPdensdiv}.

\begin{figure}[htb]
\includegraphics[width=\textwidth]{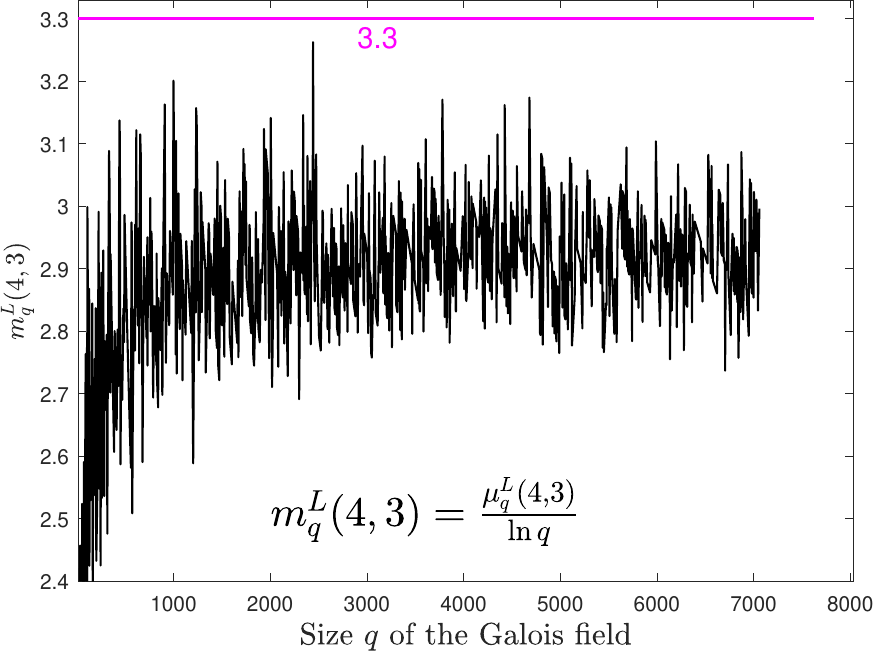}
\caption{Coefficients  $m_q^\text{L}(4,3)=\mu^\text{L}_q(4,3)/\ln q$ for covering density of the\newline
 $[n^\text{L}_q(4,3),n^\text{L}_q(4,3)-4,5]_q3$ leximatrix quasi-perfect  MDS codes;  $11\le q\leq 7057$}
\label{fig_PG3qFOPdensdiv}
\end{figure}

\section{Upper bounds on the length function $\ell_q(5,3)$ and $d$-length function $\ell_q(5,3,5)$ based on leximatrix codes}\label{sec_ell_q(5,3)}

\begin{proposition}\label{prop7_lexi_r=5}
\begin{description}
\item[\textbf{(i)}]
There exist $[n,n-5,4]_q3$ codes with $n<3\sqrt[3]{q^2\ln q}$~for $5\le q<37$.\smallskip
\item[\textbf{(ii)}]
There exist $[n^{\text{L}}_q(5,3),n^{\text{L}}_q(5,3)-5,5]_q3$ quasi-perfect Almost MDS leximatrix codes with $n^{\text{L}}_q(5,3)<3\sqrt[3]{q^2\ln q}$ for $37\le q\le839$.
\end{description}
\end{proposition}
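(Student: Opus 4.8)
The plan is to establish both parts by explicit computer search, exactly parallel to the treatment of Proposition~\ref{prop7_lexi_r=4} in codimension $4$, only now working in codimension $5$ with covering radius $R=3$. By the leximatrix algorithm of Section~\ref{sec_gr-matr-alg}, running the greedy column-selection procedure with $r=5$ and $R=3$ automatically produces a parity check matrix of an $[n^{\text{L}}_q(5,3),n^{\text{L}}_q(5,3)-5,5]_q3$ quasi-perfect code; as noted after Table~A, for $r=5$ this code is Almost MDS, since $d=5$ is one below the Singleton bound $d\le 6$. Thus the output of the algorithm is of the required type for each admissible $q$, and the whole content of part~(ii) reduces to two numerical tasks: computing the length $n^{\text{L}}_q(5,3)$ for every $q$ in the range $37\le q\le797$, and checking termwise that $n^{\text{L}}_q(5,3)<3\sqrt[3]{q^2\ln q}$.

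First I would establish, for $r=5$, an analogue of Proposition~\ref{prop7_init_part_lexi}: the leading columns of the leximatrix stabilize once $q$ exceeds a threshold $q_0(v)$, because for large $q$ the only columns removed from the candidate list are those \emph{absolutely} covered (a vanishing determinant), not those covered merely modulo $q$. This lets me precompute a common prefix of the parity check matrix, valid for all sufficiently large primes, which seeds the search and cuts the running time substantially. I would tabulate this stabilized prefix together with the thresholds $q_0(v)$ in an analogue of Table~B, then run the full leximatrix search for each $q\in[37,797]$ (prime and prime-power alike, using the MAGMA primitive polynomials of Table~A in the non-prime cases) to obtain the complete list of lengths. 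These values populate the appendix tables, and the bound in~(ii) is read off directly.

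For part~(i) the small fields $5\le q<37$ are handled separately, and here it suffices to exhibit $[n,n-5,4]_q3$ codes of minimum distance $4$ rather than the distance-$5$ leximatrix codes, precisely because the distance-$5$ construction is too long to meet the bound in this initial range. Such short distance-$4$ codes for small $q$ are already recorded in the literature (see \cite[Tab.~1]{DGMP_Petersb2008}, \cite[Tab.~2]{DavOst-DESI2010}) or are found directly by computer search, after which one again verifies $n<3\sqrt[3]{q^2\ln q}$ case by case.

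The main obstacle is computational rather than conceptual. In codimension $5$ the candidate list contains $(q^5-1)/(q-1)\approx q^4$ columns, and at each step deciding whether the next candidate is covered requires scanning triples of already-chosen columns, so the naive cost grows like $n^3$ per step and the total work scales very steeply in $q$. Pushing the search out to $q=797$ is exactly where the effort concentrates; the stabilized prefix, together with efficient determinant and coverage bookkeeping, is what makes the stated range attainable in practice.
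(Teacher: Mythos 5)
Your proposal matches the paper's proof in substance: part (i) is settled exactly as you say, by citing the short $[n,n-5,4]_q3$ codes already recorded in \cite[Tab.\,1]{DGMP_Petersb2008} and \cite[Tab.\,2]{DavOst-DESI2010}, and part (ii) by running the leximatrix algorithm for each $q$ with $37\le q\le797$ and verifying $n^{\text{L}}_q(5,3)<3\sqrt[3]{q^2\ln q}$ termwise against the computed lengths (Table 2 of the Appendix). The only divergence is your proposed $r=5$ analogue of Proposition~\ref{prop7_init_part_lexi}: the paper invokes that prefix-stabilization device only in the $r=4$ case, so for $r=5$ it is an optional computational speed-up (where the single-determinant criterion would have to be replaced by vanishing of the $4\times4$ minors of a $5\times4$ matrix) rather than part of the argument.
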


\begin{proof}
\begin{description}
\item[(i)] The existence of the codes is noted in \cite[Tab.\,1]{DGMP_Petersb2008}, \cite[Tab.\,2]{DavOst-DESI2010}, see also the references therein.
\item[(ii)]
The needed codes are obtained by computer search, using the leximatrix algorithm.
\end{description}
\end{proof}

Proposition \ref{prop7_lexi_r=5} implies the assertions of Theorem \ref{th2_res_rad3}(1ii) on the upper \textbf{\emph{lexi-bound}} on the length function $\ell_q(5,3)$  and the $d$-length function $\ell_q(5,3,5)$.

Lengths $n^{\emph{\text{L}}}_q(5,3)$ of the $[n^{\emph{\text{L}}}_q(5,3),n^{\emph{\text{L}}}_q(5,3)-5,5]_q3$ leximatrix Almost MDS codes are collected in Table 2 (see Appendix) and presented in Figure~\ref{fig_PG4qFOPsize} by the bottom solid black curve.
The bound $3\sqrt[3]{q^2\ln q}$, called the \emph{\textbf{lexi-bound}},  is shown in Figure \ref{fig_PG4qFOPsize} by the top dashed red curve.
\begin{figure}[htb]
\includegraphics[width=\textwidth]{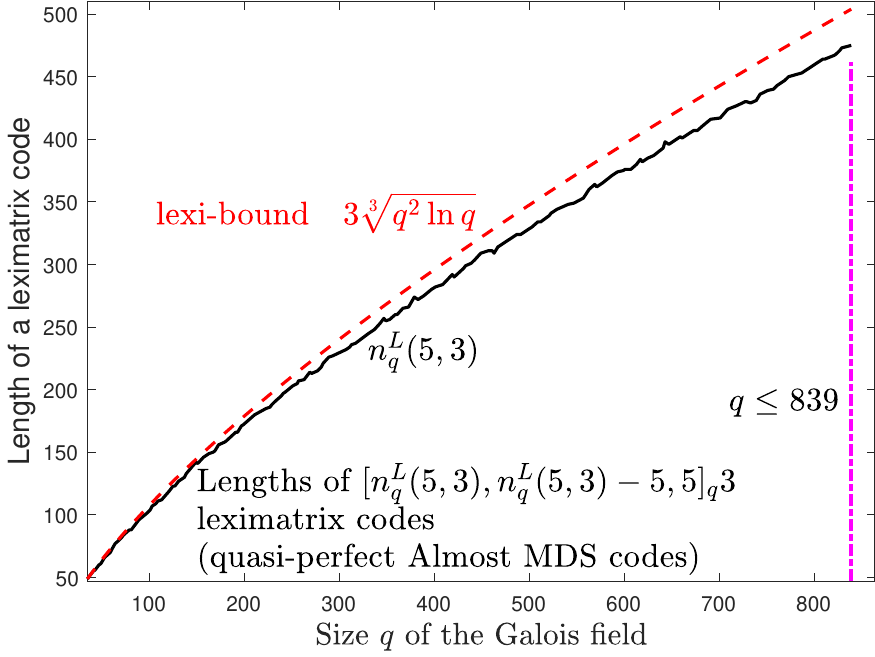}
\caption{Lengths $n_{q}^\text{L}(5,3)$ of the $[n_{q}^\text{L}(5,3),n_{q}^\text{L}(5,3)-5,5]_q3$  leximatrix quasi-perfect Almost MDS codes (\emph{bottom solid black curve}) vs the lexi-bound $3\sqrt[3]{q^2\ln q}$ (\emph{top dashed red curve}); $37\le q\le839$. \emph{Vertical magenta line} marks region $q\le839$}
\label{fig_PG4qFOPsize}
\end{figure}

We denote by  $\delta_q(5,3)$ the difference between the lexi-bound $3\sqrt[3]{q^2\ln q}$   and length $n^\text{L}_q(5,3)$ of the leximatrix code. Let $\delta_q^{\%}(5,3)$ be the corresponding percent difference. Thus,
\begin{align*}
&\delta_q(5,3)=3\sqrt[3]{q^2\ln q} -n^\text{L}_q(5,3);\\
&\delta_q^{\%}(5,3)=\frac{3\sqrt[3]{q^2\ln q}-n^\text{L}_q(5,3)}{3\sqrt[3]{q^2\ln q}}100\%.
\end{align*}
The difference $\delta_q(5,3)$ and the percent difference $\delta_q^{\%}(5,3)$ are presented in Figures \ref{fig_PG4qFOPdlt} and~\ref{fig_PG4qFOPperc}.
\begin{figure}[htb]
\includegraphics[width=\textwidth]{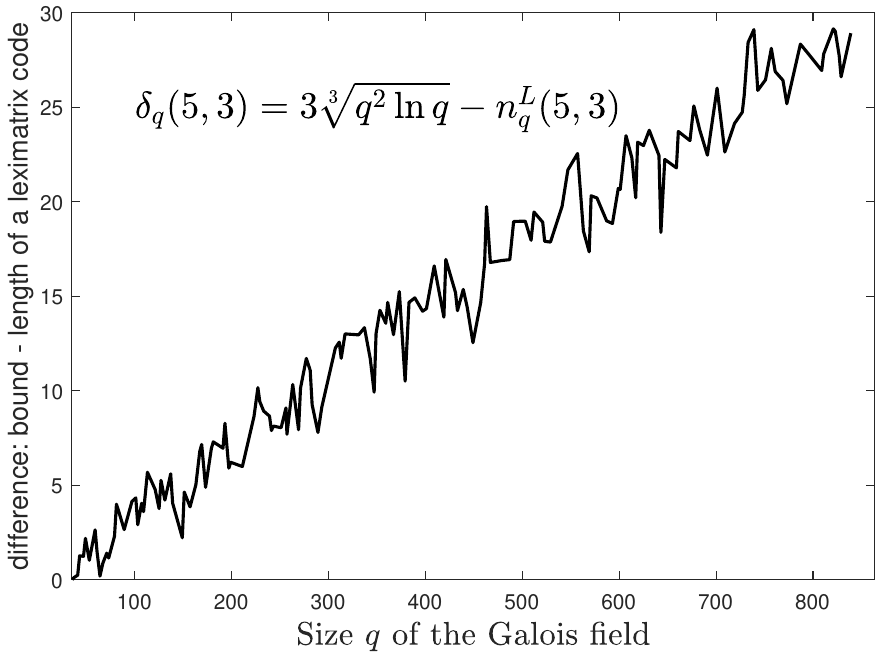}
\caption{Difference $\delta_q(5,3)$ between the lexi-bound $3\sqrt[3]{q^2\ln q}$ and length $n^\text{L}_q(5,3)$ of an $[n^\text{L}_q(5,3),n^\text{L}_q(5,3)-5,5]_q3$ leximatrix code; $37\le q\le839$}
\label{fig_PG4qFOPdlt}
\end{figure}

\begin{figure}[htb]
\includegraphics[width=\textwidth]{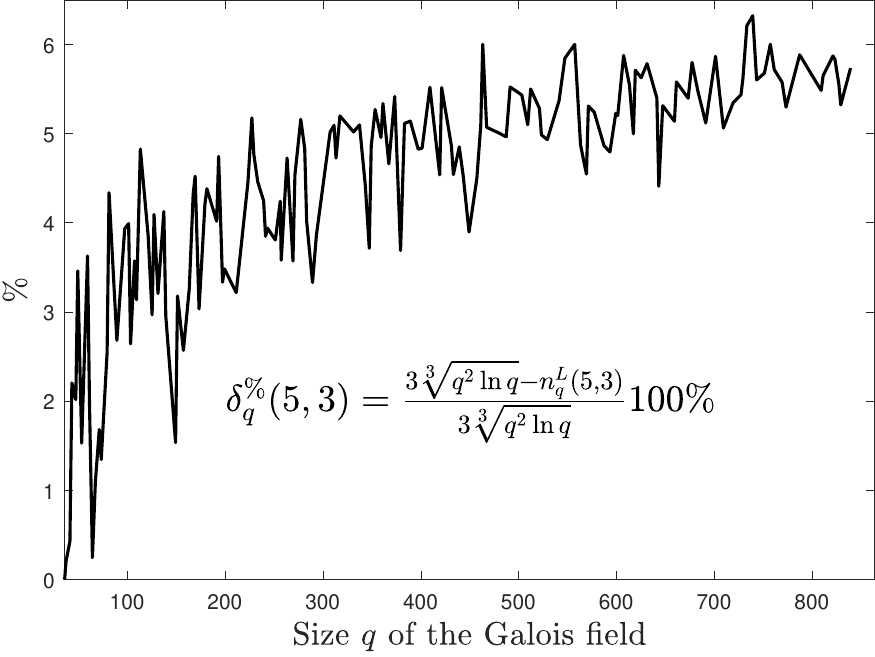}
\caption{Percent difference $\delta_q^{\%}(5,3)=\frac{3\sqrt[3]{q^2\ln q}-n^\text{L}_q(5,3)}{3\sqrt[3]{q^2\ln q}}100\%$ between the lexi-bound $3\sqrt[3]{q^2\ln q}$ and length $n^\text{L}_q(5,3)$ of an $[n^\text{L}_q(5,3),n^\text{L}_q(5,3)-5,5]_q3$ leximatrix code;\newline $37\le q\le839$}
\label{fig_PG4qFOPperc}
\end{figure}

By \eqref{eq2_cLex}, we represent length of an $[n^{L}_q(5,3),n^{L}_q(5,3)-5,5]_q3$ leximatrix code in the form
\begin{align*}
    n^{\text{L}}_q(5,3)=c^{\text{L}}_q(5,3)\sqrt[3]{q^2\ln q},
\end{align*}
where $c^{\text{L}}_q(5,3)$ is a coefficient entirely given by $q$ (if $q$ is prime) or by $q$ and the primitive polynomial of the field $\F_q$ (if $q$ is non-prime). The coefficients $c_q^\text{L}(5,3)=n^{\text{L}}_q(5,3)/\sqrt[3]{q^2\ln q}$ are shown in Figure~\ref{fig_PG4qFOPcoef}.
\begin{figure}[htb]
\includegraphics[width=\textwidth]{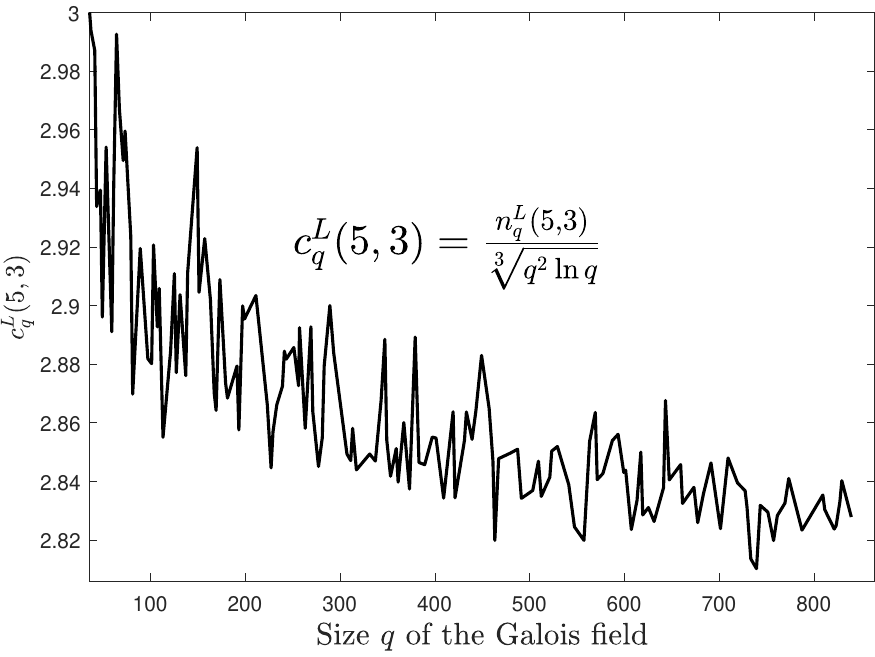}
\caption{Coefficients  $c_q^\text{L}(5,3)=n^\text{L}_q(5,3)/\sqrt[3]{q^2\ln q}$ for the $[n^\text{L}_q(5,3),n^\text{L}_q(5,3)-5,5]_q3$ leximatrix quasi-perfect Almost MDS codes; $37\le q\le839$}
\label{fig_PG4qFOPcoef}
\end{figure}

\begin{observation}\label{observ4}
\begin{description}
\item[(i)] The difference $\delta_q(5,3)$ tends to increase when $q$ grows, see Figures $\ref{fig_PG4qFOPsize}$ and $\ref{fig_PG4qFOPdlt}$.

\item[(ii)] The percent difference $\delta_q^{\%}(5,3)$ tends to increase when $q$ grows, see Figure $\ref{fig_PG4qFOPperc}$.

\item[(iii)] Coefficients $c_q^\text{L}(5,3)$ tend to decrease when $q$ grows, see Figure $\ref{fig_PG4qFOPcoef}$.
\end{description}
\end{observation}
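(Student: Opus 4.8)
The statement is an empirical observation about the behaviour of the computer-generated lengths $n^\text{L}_q(5,3)$, so the plan is to separate the part that can be argued rigorously from the part that must be read off the data of Table 2 and Figures \ref{fig_PG4qFOPsize}--\ref{fig_PG4qFOPcoef}. First I would record the exact algebraic relations tying the three quantities together, which collapses the three assertions into essentially one statement about the coefficient sequence $c_q^\text{L}(5,3)$ together with the growth of the factor $\sqrt[3]{q^2\ln q}$.

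For that first step, substituting $n^\text{L}_q(5,3)=c_q^\text{L}(5,3)\sqrt[3]{q^2\ln q}$ from \eqref{eq3_coef_r=5} into the definitions of $\delta_q(5,3)$ and $\delta_q^{\%}(5,3)$ gives
\[
\delta_q(5,3)=\bigl(3-c_q^\text{L}(5,3)\bigr)\sqrt[3]{q^2\ln q},\qquad
\delta_q^{\%}(5,3)=\left(1-\frac{c_q^\text{L}(5,3)}{3}\right)100\%,
\]
exactly parallel to the identity already displayed for $r=4$. The second relation shows that (ii) and (iii) are literally equivalent: $\delta_q^{\%}(5,3)$ increases in trend if and only if $c_q^\text{L}(5,3)$ decreases in trend. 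The first relation, combined with Proposition \ref{prop7_lexi_r=5}(ii), which guarantees $c_q^\text{L}(5,3)<3$ on the whole range, shows that the factor $3-c_q^\text{L}(5,3)$ is positive and, by (iii), nondecreasing in trend, while $\sqrt[3]{q^2\ln q}$ is strictly increasing; hence their product $\delta_q(5,3)$ increases in trend, which is (i). So (i)--(iii) all reduce to the single monotone-in-trend behaviour of $c_q^\text{L}(5,3)$.

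The remaining, genuinely empirical, core is the decreasing trend of $c_q^\text{L}(5,3)$ itself. Here the main obstacle is that $n^\text{L}_q(5,3)$ is not given by any closed form: it is the length produced by the step-by-step leximatrix greedy procedure of Section \ref{sec_gr-matr-alg}, for which no analytic expression or provable monotonicity is available. Consequently I would establish this part purely by exhibiting the computed values in Table 2 and the plotted coefficients in Figure \ref{fig_PG4qFOPcoef}, and by stating the trend qualitatively (``tends to decrease'') to accommodate the local oscillations visible in the data. The contrast with Observation \ref{observ3}, where $c_q^\text{L}(4,3)$ oscillates around a fixed horizontal line rather than drifting downward, underlines that the behaviour here is a feature of the finite computed range $37\le q\le797$; proving anything sharper would require an analytic model of the greedy algorithm's output length, which, as noted in Remark \ref{rem3_oscil}, remains an open enigma.
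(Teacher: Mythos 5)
Your proposal is correct and follows essentially the paper's own treatment: the paper supports the observation empirically by Table 2 and Figures \ref{fig_PG4qFOPsize}--\ref{fig_PG4qFOPcoef}, and records exactly your identity $\delta_q^{\%}(5,3)=\left(1-\frac{c_q^\text{L}(5,3)}{3}\right)100\%$ to show that (ii) and (iii) follow from each other, with the trend of $c_q^\text{L}(5,3)$ itself being a purely empirical fact about the leximatrix output, for which no analytic proof is available. Your additional step of deducing (i) from (iii) via $\delta_q(5,3)=\bigl(3-c_q^\text{L}(5,3)\bigr)\sqrt[3]{q^2\ln q}$, using $c_q^\text{L}(5,3)<3$ from Proposition \ref{prop7_lexi_r=5}(ii), is a correct minor strengthening that the paper leaves implicit in the figures.
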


Observation \ref{observ4} gives rise to Conjecture~\ref{conj2}(ii)  on the length function $\ell_q(5,3)$  and the $d$-length function $\ell_q(5,3,5)$.

Note that Observations \ref{observ4}(ii) and \ref{observ4}(iii) directly follow each from other. Actually,
$$\delta_q^{\%}(5,3)=\frac{3\sqrt[3]{q^2\ln q}-n^\text{L}_q(5,3)}{3\sqrt[3]{q^2\ln q}}100\%=\left(1-\frac{c_q^\text{L}(5,3)}{3}\right)100\%.$$

\begin{proposition}\label{prop5_density_lexi_r=5}
There exist $[n^{\text{L}}_q(5,3),n^{\text{L}}_q(5,3)-5,5]_q3$ quasi-perfect Almost MDS leximatrix codes of covering density $\mu^{\text{L}}_q(5,3)<4.2\cdot\ln q$  for $37\le q\le839$.
\end{proposition}

\begin{proof}
The needed codes are the codes of Proposition \ref{prop7_lexi_r=5}.
\end{proof}

Proposition \ref{prop5_density_lexi_r=5} implies the assertion of Theorem \ref{th2_density} on the upper \textbf{\emph{density lexi-bound}} on covering density $\mu_q(5,3)$.

Covering densities $\mu^{\text{L}}_q(5,3)$ of the $[n^{\text{L}}_q(5,3),n^{\text{L}}_q(5,3)-5,5]_q3$ leximatrix quasi-perfect Almost MDS codes  are presented in Figure~\ref{fig_PG4qFOPdensity} by the bottom solid black curve. The values $\mu^{\text{L}}_q(5,3)$ are obtained by \eqref{eq3_Ldensity} where lengths $n^{\text{L}}_q(5,3)$ are taken from Table 2 (see Appendix).
The bound
\begin{align*}
\mu^{\text{L}}_q(5,3)<4.2\cdot\ln q,
\end{align*}
called the \emph{\textbf{density lexi-bound}},  is shown in Figure \ref{fig_PG4qFOPdensity} by the top dashed red curve.
\begin{figure}[htb]
\includegraphics[width=\textwidth]{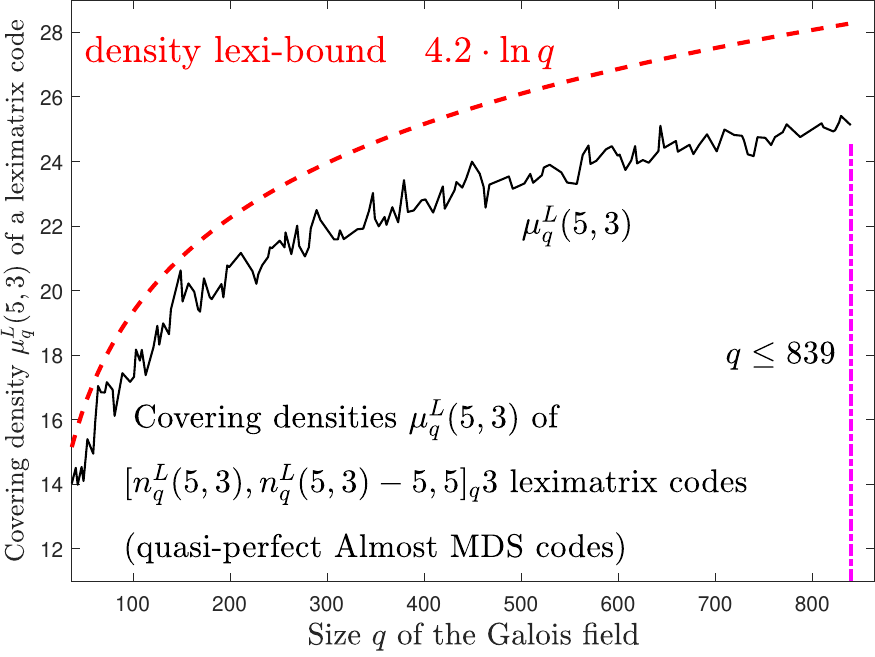}
\caption{Covering densities $\mu^\text{L}_q(5,3)$ of the $[n^\text{L}_q(5,3),n^\text{L}_q(5,3)-5,5]_q3$ leximatrix quasi-perfect Almost MDS codes  (\emph{bottom solid black curve}) vs the density lexi-bound $4.2\cdot\ln q$ (\emph{top dashed red curve}); $37\le q\leq 839$. \emph{Vertical magenta line} marks region $q\le839$}
\label{fig_PG4qFOPdensity}
\end{figure}

By \eqref{eq3_mLex}, we represent covering density of an $[n^{L}_q(5,3),n^{L}_q(5,3)-5,5]_q3$ leximatrix code in the form
\begin{align*}
    \mu^{\text{L}}_q(5,3)=m^{\text{L}}_q(5,3)\cdot\ln q,
\end{align*}
where $m^{\text{L}}_q(5,3)$ is a coefficient entirely given by $q$ (if $q$ is prime) or by $q$ and the primitive polynomial of the field $\F_q$ (if $q$ is non-prime). The coefficients $m_q^\text{L}(5,3)=\frac{\mu^{\text{L}}_q(5,3)}{\ln q}$ are shown in Figure~\ref{fig_PG4qFOPdensdiv}.
\begin{figure}[htb]
\includegraphics[width=\textwidth]{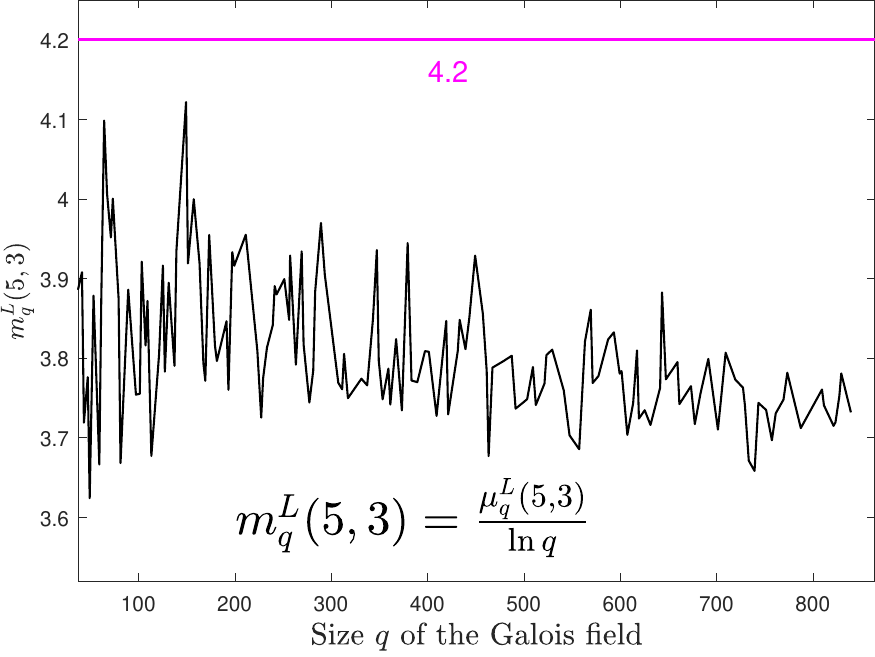}
\caption{Coefficients  $m_q^\text{L}(5,3)=\mu^\text{L}_q(5,3)/\ln q$ for covering density of the\newline $[n^\text{L}_q(5,3),n^\text{L}_q(5,3)-5,5]_q3$ leximatrix quasi-perfect Almost MDS codes;  $37\le q\leq 839$}
\label{fig_PG4qFOPdensdiv}
\end{figure}

\begin{observation}\label{observ5}
\begin{description}
\item[(i)] The difference $4.2\cdot\ln q-\mu_q^\text{L}(5,3)$ tends to increase when $q$ grows, see Figure $\ref{fig_PG4qFOPdensity}$.
\item[(ii)] Coefficients $m_q^\text{L}(5,3)$ tend to decrease when $q$ grows, see Figure $\ref{fig_PG4qFOPdensdiv}$.
\end{description}
\end{observation}

\section{An inverse leximatrix algorithm to obtain parity check matrices of covering codes}\label{sec_inv_gr-matr-alg}

An inverse leximatrix algorithm is a modification of the leximatrix algorihm of Section \ref{sec_gr-matr-alg}.

Let $\F_{q}=\{0,1,\ldots ,q-1\}$ be the Galois field with $q$ elements.

If $q$ is prime, the elements of
 $\F_{q}$ are treated as integers modulo $q$.

If $q=p^{m}$ with $p$ prime and $m\ge2$, the elements of $\F_{p^{m}}$ are
represented by integers as follows: $\F_{p^{m}}=\F_{q}=\{0,1=\alpha^{0},2=\alpha^{1},\ldots,u=\alpha^{u-1},\ldots,q-1=\alpha^{q-2}\},$
where $\alpha$ is a root of a primitive polynomial of $\F_{p^{m}}$.

For a $q$-ary code of codimension $r$, covering radius $R$, and minimum distance $d=R+2$,  we construct  a parity check matrix from nonzero columns $h_i$ of the form
\begin{align}\label{eq5_column}
h_{i}=(x_{1}^{(i)},x_{2}^{(i)},\ldots,x_{r}^{(i)})^{tr},~ x_{u}^{(i)}\in
\F_{q},
\end{align}
 where the first (leftmost) non-zero element is 1; \emph{tr} is the sign of transposition. The number of distinct columns is $(q^r-1)/(q-1)$. We order the columns in the list as
 \begin{align}\label{eq5_list}
    h_1,h_2,\ldots,h_{(q^r-1)/(q-1)}.
 \end{align}

One sees that the forms of the columns of a parity check matrix in  \eqref{eq2_column} and \eqref{eq5_column} coincide with each other. Also, external view of the list of the columns in \eqref{eq2_list} and \eqref{eq5_list} is the same. However, contrary to \eqref{eq2_number_i}, we represent a number $i$ of a column $h_i$ as follows:
\begin{align}\label{eq5_number_i_inver}
    i=\frac{q^r-1}{q-1}-\sum\limits_{u=1}^r x_{u}^{(i)}q^{r-u}.
\end{align}
We call the  \textbf{\emph{order of the columns}} corresponding to \eqref{eq5_number_i_inver}  \textbf{\emph{the inverse lexicographical order}}.

Apart the fixed order of columns (cf.  \eqref{eq2_number_i} and \eqref{eq5_number_i_inver} ), the inverse leximatrix algorithm is similar to the leximatrix algorithm.

The first column of the list should be included into the matrix. Then step-by-step,
one takes the next column from the list which cannot be represented as a linear
combination of at most $R$ columns already chosen. The process ends when no new column may
be included into the matrix. The obtained matrix $H_n$ is a parity check matrix of an $[n,n-r,R+2]_qR$ code.

The obtained parity check matrix is called the \textbf{\emph{parity check invleximatrix}} or the \textbf{\emph{invleximatrix}} for short.  We call an \textbf{\emph{invleximatrix code}} the corresponding code.

\textbf{For prime $q$}, the following holds: \textbf{length $n$ of an invleximatrix code and the form of the invleximatrix $H_n$ depend on $q$, $r$, and $R$ only}.  No other factors affect code length and structure. Actually, assume that after some step a current matrix is obtained. At the next step we should remove from our current list all columns that are linear combination of $R$ or less columns of the current matrix. For prime $q$ and the given  $r$ and $R$, the result of removing is unequivocal; hence, the next column is taken uniquely.

For non-prime $q$, the length  $n$ of an
invleximatrix code depends on $q$ and on the primitive polynomial of the field.
In this paper, we use primitive polynomials that are
created by the program system MAGMA \cite{MAGMA} by default, see Table A. In any case,
the choice of the polynomial changes the invleximatrix code length
unessentially.

By the invleximatrix algorithm, if $R=1$, we obtain the $q$-ary Hamming code. If $R=2$, we obtain a quasi-perfect $[n,n-r,4]_q2$ code; for $r=3$ such code is an MDS code and corresponds to a complete arc in $\mathrm{PG}(2,q)$. If $R=3$, we obtain a quasi-perfect $[n,n-r,5]_q3$ code; for $r=4$ such code is an MDS code and corresponds to a complete arc in $\mathrm{PG}(3,q)$; for $r=5$ it is an Almost MDS code.

Let $n^\text{IL}_q(r,R)$ be \textbf{length of the $q$-ary invleximatrix code of codimension $r$ and covering radius~$R$}. It is assumed that for a non-prime field $\F_q$, one uses the primitive polynomial created by the program system MAGMA \cite{MAGMA} by default; in particular, for non-prime $q\le6889$, the polynomial from Table A should be taken.

We represent length of an $[n^\text{IL}_q(r,R),n^\text{IL}_q(r,R)-r,R+2]_qR$ invleximatrix code in the form
\begin{align}\label{eq5_cinvLex}
    n^{\text{IL}}_q(r,R)=c^{\text{IL}}_q(r,R)\sqrt[R]{\ln q}\cdot q^{(r-R)/r},
\end{align}
where $c^{\text{IL}}_q(r,R)$ is a coefficient entirely given by $r,R,q$ (if $q$ is prime) or by $r,R,q$, and the primitive polynomial of $\F_q$ (if $q$ is non-prime).
\begin{proposition}\label{prop5_invlexi_r=4}
There exist $[n^{\emph{\text{IL}}}_q(4,3),n^{\emph{\text{IL}}}_q(4,3)-4,5]_q3$ quasi-perfect MDS \smallskip  invleximatrix codes of length $n^{\emph{\text{IL}}}_q(4,3)<2.8\sqrt[3]{q\ln q}$~for  $127\le q\leq 6101$ and $q=6143,6217,6287$, $6299,6529,6563$.
\end{proposition}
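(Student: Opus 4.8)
The plan is to prove the proposition exactly as its direct-order counterpart Proposition~\ref{prop7_lexi_r=4}(ii), namely as a verified computer-search result, with the inverse leximatrix algorithm of Section~\ref{sec_inv_gr-matr-alg} replacing the (direct) leximatrix algorithm. Concretely, for each prime $q$ in the range $127\le q\le 5903$ and for each of the five additional primes $q\in\{5987,6143,6217,6287,6299\}$, I would fix $r=4$, $R=3$, run the algorithm — scanning the candidate columns \eqref{eq5_list} in the inverse lexicographic order induced by \eqref{eq5_number_i_inver} and appending each column that is not a linear combination of at most three already-chosen columns — record the terminal length $n^{\text{IL}}_q(4,3)$, and test the single numerical inequality $n^{\text{IL}}_q(4,3)<2.8\sqrt[3]{q\ln q}$. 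The resulting table of lengths is what constitutes the proof; the range begins at $q=127$ because for the smaller primes the invleximatrix length exceeds the lexi-bound $2.8\sqrt[3]{q\ln q}$, so only $q\ge127$ can be included.

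Two correctness points have to be checked before the numerics are meaningful. First, the output is genuinely an $[n^{\text{IL}}_q(4,3),n^{\text{IL}}_q(4,3)-4,5]_q3$ quasi-perfect MDS code: the stopping rule — that no further candidate column can be adjoined — says that every point of $\mathrm{PG}(3,q)$ is a linear combination of at most three chosen columns, i.e.\ the columns form a $2$-saturating set and the code has covering radius $3$; while the selection rule, which never adjoins a column lying in the span of three earlier columns, preserves the arc property and hence the minimum distance $5$, as already recorded for the general construction in Section~\ref{sec_inv_gr-matr-alg}. Second, the computation must be reproducible: for prime $q$ the invleximatrix depends only on $q,r,R$ (Section~\ref{sec_inv_gr-matr-alg}), so $n^{\text{IL}}_q(4,3)$ is well defined and independent of implementation, which is precisely why the proposition is stated for prime $q$ only.

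The genuine difficulty here is computational rather than conceptual. The candidate list \eqref{eq5_list} has $(q^4-1)/(q-1)\approx q^3$ entries, which reaches order $2\cdot10^{11}$ near $q=6000$, so a naive scan testing each candidate against all triples of chosen columns is hopeless; the hard part is organising the coverage test, and especially the final completeness check that at termination every remaining candidate really is covered by at most three selected columns, efficiently enough to sweep the whole prime range. To make this feasible I would maintain the set of covered points incrementally as columns are added, use precomputed field-arithmetic tables, and — in analogy with Proposition~\ref{prop7_init_part_lexi} for the leximatrix, whose determinant argument transfers verbatim to the inverse order — exploit the stabilisation of the initial segment of the invleximatrix for large prime $q$, both to seed and to cross-validate the search. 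Once the search has been carried out, the last step, evaluating $2.8\sqrt[3]{q\ln q}$ and comparing, is routine.
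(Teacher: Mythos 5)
Your proposal matches the paper's proof, which consists of exactly one sentence: the codes are obtained by computer search using the inverse leximatrix algorithm, with the inequality $n^{\text{IL}}_q(4,3)<2.8\sqrt[3]{q\ln q}$ then checked against the recorded lengths (Table 3). The additional correctness points you verify (covering radius $3$ and minimum distance $5$ from the stopping and selection rules, and well-definedness of $n^{\text{IL}}_q(4,3)$ for prime $q$) are not in the paper's proof itself but are stated in its Section~\ref{sec_inv_gr-matr-alg}, so your write-up is the same argument, merely made explicit.
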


\begin{proof}
The needed codes are obtained by computer search, using the inverse leximatrix algorithm.
\end{proof}

Proposition \ref{prop5_invlexi_r=4} as well as Proposition \ref{prop7_lexi_r=4} implies the assertions of Theorem \ref{th2_res_rad3}(1i) on the upper \textbf{\emph{lexi-bound}} on the length function $\ell_q(4,3)$  and the $d$-length function $\ell_q(4,3,5)$.

Lengths of the $[n^{\text{IL}}_q(4,3),n^{\text{IL}}_q(4,3)-4,5]_q3$ invleximatrix quasi-perfect MDS codes are collected in Table 3 (see Appendix). The cases $$n^\text{IL}_q(4,3)<n^\text{L}_q(4,3)$$ are noted in Table 3 in \textbf{\emph{bold italic}} font.

We have relatively\textbf{\emph{ many the cases $n^\text{\emph{IL}}_q(4,3)<n^\text{\emph{L}}_q(4,3)$; this strengthens our assurance in truth of Conjecture}} \textbf{\ref{conj2}(i)}.

\section{Randomized greedy algorithms to obtain parity check matrices of covering codes}\label{sec_rand_gr_alg}
\subsection{Randomized greedy algorithms}
Randomized greedy algorithms are described (in geometrical language) in \cite{BDFKMP-PIT2014,BDFKMP_ArcFOPArXiv2015,BDFKMP_ArcComputJG2016}, see also
the references therein.

In every step a randomized greedy  algorithm maximizes an
objective function $f$ but some steps are executed in a random
manner. The number of these steps, their ordinal numbers, and
some other parameters of the algorithm have been taken
intuitively. Also, if the same maximum of $f$ can be obtained
in distinct ways, one way is chosen randomly.

We begin to construct a parity check matrix of an $[n,n-r]_qR$ code by using a starting matrix
 $ H_{0} $. In the $i$-th step one column is added to the current matrix
$H_{i-1}$ and we obtain a matrix $H_{i}$. We say that an $r$-dimensional \textbf{\emph{column is $R$-covered}} if it can be represented as linear combination at most $R$ columns of the current parity check matrix.
 As the value of
the objective function $f$ we consider \emph{\textbf{the number of $R$-covered
columns}}.

On every \textquotedblleft random\textquotedblright\ $i$-th
step we take $ d_{q,i}$ \emph{randomly chosen columns} of
$F_q^r$ \emph{not covered by }$ H_{i-1}$ and compute
the objective function $f$ adding each of these $ d_{q,i} $
columns to $H_{i-1}$. The column providing the maximum of $f$ is
included into~$H_{i}.$ On every \textquotedblleft
non-random\textquotedblright\ $j$-th step we consider \emph{all
columns not covered by }$H_{j-1}$ and add to $H_{j-1}$ the column
providing the maximum of $f.$

 As $H_{0}$ we can use a matrix  obtained in previous stages of the search.

A generator of random numbers is used for a random choice. To
get codes with distinct lengths, the starting conditions of the
generator are changed for the same matrix $H_{0}$. In this way the
algorithm works in a convenient limited region of the search
space to obtain examples decreasing the size of the matrix from
which the fixed starting submatrix have been taken.

To obtain codes with new lengths, sufficiently many
attempts should be made with  randomized greedy algorithms.
``Predicted'' lengths  could be useful for understanding if a
good result has been obtained.  If the result is not close to the
predicted size, the attempts are continued.

We consider the following two versions of the randomized greedy algorithms:

$\bullet$ \textbf{\emph{Rand-Greedy algorithm.}}  In this version, one does not take into account if a new column is $R$-covered. Therefore, the constructed   code has minimum distance $d=3$.

$\bullet$ \textbf{\emph{d-Rand-Greedy algorithm.}}  In this version, we chose a \emph{new column from columns that are not $R$-covered}. Therefore, minimum distance of the obtained code is  $d=R+2$.

The randomized greedy algorithms give better results than the leximatrix and inverse leximatrix algorithms but the randomized greedy algorithms take  essentially greater computer time.

Lengths of codes obtained by randomized greedy algorithms depend of many factors connected with parameters of the algorithms.

Let $n^\text{G}_q(r,R)$ be \textbf{length of a $q$-ary code of codimension $r$ and covering radius~$R$ obtained by the Rand-Greedy algorithm}.

We represent length of an $[n^\text{G}_q(r,R),n^\text{G}_q(r,R)-r,3]_qR$  code obtained by the Rand-Greedy algorithm in the form
\begin{align*}
    n^{\text{G}}_q(r,R)=c^{\text{G}}_q(r,R)\sqrt[R]{\ln q}\cdot q^{(r-R)/r},
\end{align*}
where $c^{\text{G}}_q(r,R)$ is a coefficient dependent on parameters of the Rand-Greedy algorithm.

Let $n^{d\text{G}}_q(r,R)$ be \textbf{length of a $q$-ary code of codimension $r$ and covering radius~$R$ obtained by the {\mathversion{bold}$d$}-Rand-Greedy algorithm}.

We represent length of an $[n^{d\text{G}}_q(r,R),n^{d\text{G}}_q(r,R)-r,R+2]_qR$  code obtained by the $d$-Rand-Greedy algorithm in the form
\begin{align*}
    n^{d\text{G}}_q(r,R)=c^{d\text{G}}_q(r,R)\sqrt[R]{\ln q}\cdot q^{(r-R)/r},
\end{align*}
where $c^{d\text{G}}_q(r,R)$ is a coefficient dependent on parameters of the $d$-Rand-Greedy algorithm.

Let \textbf{$\overline{n}_q(r,R)$ be length of the shortest \emph{known} $q$-ary code of codimension $r$ and covering radius $R$.}

Let \textbf{$\overline{n}_q(r,R,d)$ be length of the shortest \emph{known} $q$-ary code of codimension $r$, covering radius $R$, and minimum distance $d$.}

Clearly,
\begin{align*}
    \overline{n}_q(r,R)\le\overline{n}_q(r,R,d).
\end{align*}
We represent length $\overline{n}_q(r,R,d)$ in the form
\begin{align*}
   \overline{n}_q(r,R,d)=\overline{c}_q(r,R,d)\sqrt[R]{\ln q}\cdot q^{(r-R)/R},
\end{align*}
where $\overline{c}_q(r,R,d)$ is a coefficient.

\subsection{The shortest known $[\overline{n}_q(4,3,5),\overline{n}_q(4,3,5)-4,5]_q3$ and \\$[\overline{n}_q(4,3),\overline{n}_q(4,3)-4]_q3$  codes}
For $2\leq q\leq 7057$,  lengths $\overline{n}_q(4,3,5)$ of the \textbf{\emph{shortest known}}  $[\overline{n}_q(4,3,5),\overline{n}_q(4,3,5)-4,5]_q3$ quasi-perfect MDS codes,  obtained by the leximatrix, inverse leximatrix, and $d$-Rand-Greedy algorithms, are as follows
\begin{align}\label{eq6_mixt}
   \overline{n}_q(4,3,5)=\min\{n^\text{L}_q(4,3),n^\text{IL}_q(4,3),n^{d\text{G}}_q(4,3)\}.
\end{align}
 \begin{proposition}\label{prop6_mixt_r=4}
 There exist $[\overline{n}_q(4,3,5),\overline{n}_q(4,3,5)-4,5]_q3$ quasi-perfect MDS  codes of length
\begin{align*}
    \overline{n}_q(4,3,5)<\left\{
    \begin{array}{ccc}
    2.61\sqrt[3]{q\ln q} & \text{if} & 13\le q\le4373 \smallskip\\
    2.65\sqrt[3]{q\ln q} & \text{if} & 4373<q\le7057
    \end{array}
    \right..
\end{align*}
\end{proposition}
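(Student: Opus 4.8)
The plan is to establish the bound computationally, because $\overline{n}_q(4,3,5)$ is by definition \eqref{eq6_mixt} the smallest length produced by the three constructive algorithms of Sections~\ref{sec_gr-matr-alg}, \ref{sec_inv_gr-matr-alg}, and \ref{sec_rand_gr_alg}. First I would run, for every $q$ in the range $13\le q\le 6361$ with $q\ne 6241$, the leximatrix algorithm, the inverse leximatrix algorithm, and sufficiently many seeds of the d-Rand-Greedy algorithm, recording the length of each resulting code. Each of these algorithms terminates only when no further column can be added, i.e. when every point of $\mathrm{PG}(3,q)$ lies in a subspace spanned by at most three of the chosen columns; hence the output is automatically a $2$-saturating set, equivalently a complete arc, and by the correspondence recalled in Section~\ref{sec_main_res} the associated code is a quasi-perfect $[n,n-4,5]_q3$ MDS code of covering radius exactly $3$. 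Thus no separate verification of the code parameters is needed beyond confirming that the greedy stopping criterion has been reached.

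Second, for each $q$ I would set $\overline{n}_q(4,3,5)=\min\{n^\text{L}_q(4,3),n^\text{IL}_q(4,3),n^\text{dG}_q(4,3)\}$ and form the coefficient $\overline{c}_q(4,3,5)=\overline{n}_q(4,3,5)/\sqrt[3]{q\ln q}$ of \eqref{eq6_n435}. The claimed inequalities then reduce to the two finite numerical assertions $\overline{c}_q(4,3,5)<2.61$ for $13\le q\le 4373$ and $\overline{c}_q(4,3,5)<2.66$ for $4373<q\le 6361$, which I would verify directly against the tabulated lengths (Table~4) and the plot of coefficients in Figure~\ref{fig_PG3qBestCoef_d5}. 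In practice the d-Rand-Greedy algorithm attains the minimum for $q\le 4451$, while for larger $q$ the leximatrix and invleximatrix codes of Propositions~\ref{prop7_lexi_r=4} and~\ref{prop5_invlexi_r=4} already satisfy the relaxed threshold $2.66$, with a small number of exceptional values (for instance $q=4679,4877,4889$) where d-Rand-Greedy is again invoked to bring the coefficient below the bound.

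The main obstacle is computational rather than conceptual. The candidate list contains $(q^4-1)/(q-1)\approx q^3$ columns, so for $q$ near $6361$ one must process on the order of $10^{11}$ points, and the randomized version additionally requires many independent runs before its output approaches the predicted length; this is precisely why $q=6241$ is omitted and why the constant is weakened from $2.61$ to $2.66$ in the upper range. A secondary delicate point is the exact placement of the break at $q=4373$: near there $\overline{c}_q(4,3,5)$ comes close to $2.61$, so one must check carefully that no $q\le 4373$ violates the sharper bound, always taking the shortest code available from any of the three algorithms. Everything else is bookkeeping: once the lengths in Table~4 are obtained and their covering radius confirmed through the greedy termination condition, the two inequalities follow by inspection of the corresponding coefficients.
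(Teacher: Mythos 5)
Your proposal is correct and takes essentially the same approach as the paper: its proof likewise consists of running the d-Rand-Greedy algorithm for $q\le 4451$ and, in preference, the leximatrix and inverse leximatrix algorithms for $4451<q\le 6361$ (reverting to d-Rand-Greedy for $q=4679,4877,4889$), taking the minimum as in \eqref{eq6_mixt}, and reading the resulting lengths from Table~4 to check the two coefficient thresholds. The only detail you omit is that for $q=841$ the tabulated length $42$ comes not from these algorithms but from the complete $42$-arc of \cite{Sonin}, though this does not affect the validity of the argument.
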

\begin{proof}
The needed codes are obtained by computer search, using the approach of \eqref{eq6_mixt}.  To obtain codes with $q\le4451$ we used the $d$-Rand-Greedy algorithm. For\\ $4451<q\le7057$ we used, in preference, the leximatrix and inverse leximatrix algorithms, see Sections \ref{sec_ell_q(5,3)} and \ref{sec_inv_gr-matr-alg}, but for $q=4489,4679,4877,4889,4913,5801,6653$ we applied the $d$-Rand-Greedy algorithm. For $q=841$, the complete 42-arc of \cite{Sonin} is used.
\end{proof}

Proposition \ref{prop6_mixt_r=4} implies the assertions of Theorem \ref{th2_res_rad3}(2) on upper  bounds on the length function $\ell_q(4,3)$  and the $d$-length function $\ell_q(4,3,5)$.

Proposition \ref{prop6_mixt_r=4} \textbf{\emph{improves the lexi-bound of Theorem \emph{\ref{th2_res_rad3}(1i)}; this strengthens our assurance in truth of Conjecture}} \textbf{\ref{conj2}(i)}.

The lengths $\overline{n}_q(4,3)$ of the shortest known $[\overline{n}_q(4,3),\overline{n}_q(4,3)-4]_q3$  codes we obtain using results of computer search for  $\overline{n}_q(4,3,5)$, data from \cite[Tab.\,1]{DavOst-DESI2010}, the Rand-Greedy algorithm, and formula \eqref{eq1_rad3_q3} for $q=(q')^3$ where $(q')$ is a prime power.

The \textbf{\emph{smallest known lengths }}$\overline{n}_q(4,3)$ are given in Table 4  (see Appendix) where the cases
$$\overline{n}_q(4,3,5)=\overline{n}_q(4,3)+j$$
 are noted by the superscript ``$+j$''. For the rest of $q$ we have  $\overline{n}_q(4,3,5)=\overline{n}_q(4,3)$. So, in fact, \emph{Table 4 gives also \textbf{smallest known lengths} $\overline{n}_q(4,3,5)$.}

Note, that in Table 4, the improvements of code distance up to $d=5$ in comparison with \cite[Tab.\,1]{DavOst-DESI2010} are noted in bold italic font.
Also, in Table 4, the cases $\ell_q(4,3)=\overline{n}_q(4,3)$ are noted by the subscript ``$\bullet$'', see \cite[Tab.\,1]{DavOst-DESI2010}.

Coefficients $\overline{c}_q(4,3,5)$ corresponding to the codes of Table 4 (taking into account the superscripts ``$+j$'') are shown in Figure~\ref{fig_PG3qBestCoef_d5}.

\begin{figure}[htb]
\includegraphics[width=\textwidth]{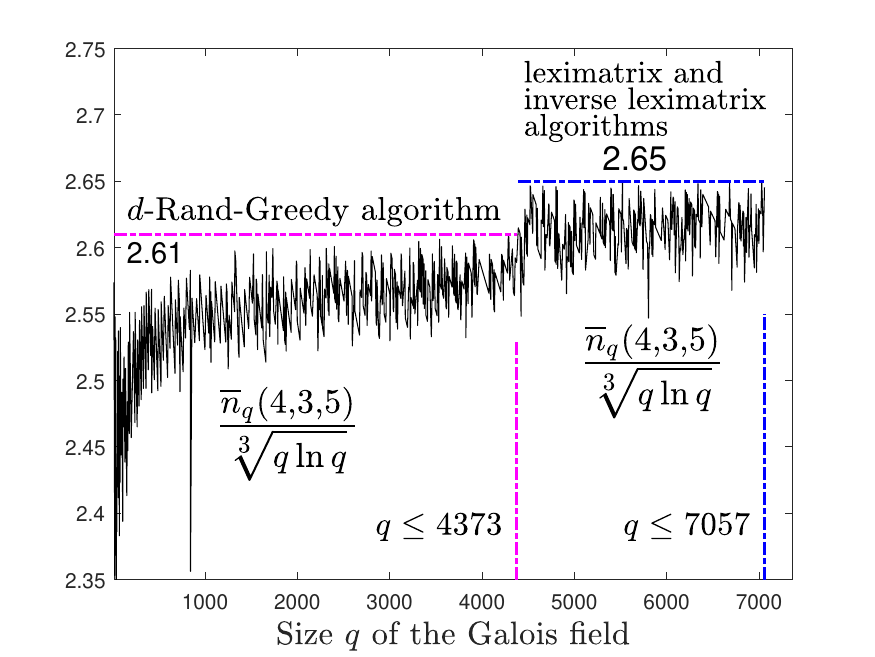}
\caption{Coefficients  $\overline{c}_q(4,3,5)= \overline{n}_q(4,3,5)/\sqrt[3]{q\ln q}$ for $[\overline{n}_q(4,3,5),\overline{n}_q(4,3,5)-4,5]_q3$ quasi-perfect  MDS codes;  $13\le q\leq 7057$}
\label{fig_PG3qBestCoef_d5}
\end{figure}

\subsection{The shortest known $[\overline{n}_q(5,3),\overline{n}_q(5,3)-5]_q3$ codes}
For $3\leq q\leq 839$,  lengths $\overline{n}_q(5,3)$ of the\textbf{ \emph{shortest known}}  $[\overline{n}_q(5,3),\overline{n}_q(5,3)-5,3]_q3$ codes, obtained by the leximatrix and Rand-Greedy algorithms, are as follows
\begin{align}\label{eq6_mixt_r=5}
&\overline{n}_q(5,3)=\min\{n^\text{L}_q(5,3),n^\text{G}_q(5,3)\} \text{ if }11\le q\le401;\\
&\overline{n}_q(5,3)=\overline{n}_q(5,3,5)=n^\text{L}_q(5,3) \text{ if }401<q\le839.\notag
\end{align}
\begin{proposition}\label{prop6_mixt_r=5}There exist $[\overline{n}_q(5,3),\overline{n}_q(5,3)-5]_q3$   codes of length
\begin{align*}
&\overline{n}_q(5,3)< 2.785\sqrt[3]{q^2\ln q}  \text{ if }  11\le q\le401;\\
&\overline{n}_q(5,3)= \overline{n}_q(5,3,5)<   2.884\sqrt[3]{q^2\ln q}   \text{ if }   401<q\le839.
\end{align*}
\end{proposition}
\begin{proof}
The needed codes are obtained by computer search, using the approach of \eqref{eq6_mixt_r=5}.  To obtain codes with $q\le401$ we used the Rand-Greedy algorithm; it gives $[n,n-5,3]_q3$ codes with minimum distance $d=3$. For $401<q\le839$ we used the leximatrix  algorithm, see Section \ref{sec_ell_q(5,3)}.
\end{proof}
Proposition \ref{prop6_mixt_r=5} implies the assertions of Theorem \ref{th2_res_rad3}(3) on upper  bounds on the length function $\ell_q(5,3)$ and the $d$-length function $\ell_q(5,3,5)$.

Proposition \ref{prop6_mixt_r=5} \textbf{\emph{improves the lexi-bound of Theorem \emph{\ref{th2_res_rad3}(1ii)}; this strengthens our assurance in truth of Conjecture}} \textbf{\ref{conj2}(ii)}.

Lengths $\overline{n}_q(5,3)$ of the\textbf{ \emph{shortest known}}  $[\overline{n}_q(5,3),\overline{n}_q(5,3)-5,3]_q3$ codes are collected in Table 5,
 where the improvements of code length in comparison with \cite[Tab.\,2]{DavOst-DESI2010} are noted in bold italic font. Also,in Table 5, the cases $\ell_q(5,3)=\overline{n}_q(5,3)$ are noted by the subscript~``$\bullet$'', see \cite[Tab.\,2]{DavOst-DESI2010}.

For $q\le839$, coefficients $\overline{c}_q(5,3,3)$ and $\overline{c}_q(5,3,5)$ corresponding to the codes of Table~5 are shown in Figure \ref{fig_PG4qBestCoef}.

\begin{figure}[htb]
\includegraphics[width=\textwidth]{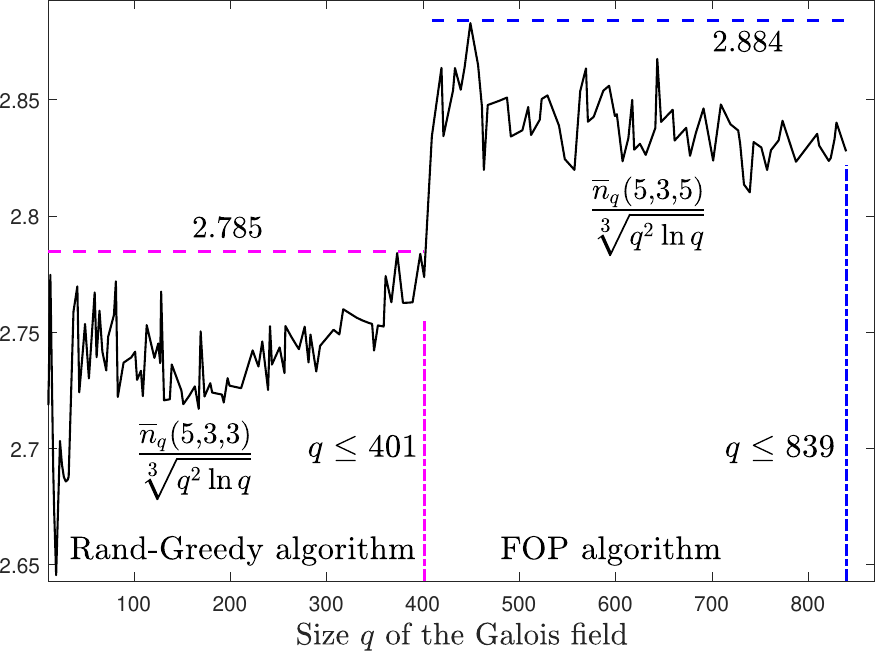}
\caption{Coefficients  $\overline{c}_q(5,3,3)= \overline{n}_q(5,3,3)/\sqrt[3]{q^2\ln q}$ for $[\overline{n}_q(5,3,3),\overline{n}_q(5,3,3)-5,3]_q3$ codes, $11\le q\leq 401$; and $\overline{c}_q(5,3,5)= \overline{n}_q(5,3,5)/\sqrt[3]{q^2\ln q}$ for $[\overline{n}_q(5,3,5),\overline{n}_q(5,3,5)-5,5]_q3$ Almost  MDS codes, $401< q\leq 839$}
\label{fig_PG4qBestCoef}
\end{figure}

\section{Conclusion}
\textbf{\emph{The length function}} $\ell_q(r,R)$ is the smallest
length of a $ q $-ary linear code of covering radius $R$ and
codimension $r$.
\textbf{\emph{The $d$-length function}} $\ell_q(r,R,d)$ is the smallest
length of a $ q $-ary linear code with codimension $r$, covering radius $R$, and minimum distance~$d$.
   In this paper, we consider upper bounds on the length functions $\ell_q(4,3)$, $\ell_q(5,3)$ and the $d$-length functions $\ell_q(4,3,5)$, $\ell_q(5,3,5)$. For $r\ne3t$ and  $q\ne(q')^3$, where $q'$ is a prime power, upper bounds on
   $\ell_q(r,3)$ and $\ell_q(r,3,5)$ are open problems.

By computer search in wide regions of $q$, we obtained following short codes of covering radius $R=3$: $[n,n-4,5]_q3$ quasi-perfect MDS codes, $[n,n-5,5]_q3$ quasi-perfect Almost MDS codes, and  $[n,n-5,3]_q3$ codes.

 For $r\neq 3t$ and the field basis $q$ of an arbitrary structure, including $q\neq (q^{\prime})^3$, the new codes imply upper bounds (called the \emph{\textbf{lexi-bounds}}) of the form
   \begin{align}\label{eq8_length_func}
  \ell_q(r,3)< c_\ell\sqrt[3]{\ln q}\cdot q^{(r-3)/3},~~c_\ell\text{ is a constant independent of } q,~~r=4,5\neq 3t.
\end{align}
Also, the new codes imply the following upper bounds (called the \emph{\textbf{density lexi-bounds}}) on the smallest
covering density $\mu_q(r,3)$ of a $ q $-ary linear code of covering radius $3$ and
codimension $r$:
   \begin{align}\label{eq8_density}
\mu_q(r,3)<c_\mu\cdot\ln q,~~c_\mu\text{ is a constant independent of } q,~~r=4,5\neq 3t.
\end{align}

\emph{In comparison with upper bounds of \cite{DGMP_ACCT2008,DGMP_Petersb2008,DGMP-AMC} for the special field basis $q=(q^{\prime})^3$, bounds \eqref{eq8_length_func} and \eqref{eq8_density} have ``extra'' multipliers $\sqrt[3]{\ln q}$ and $\ln q$, respectively. \textbf{This is a ``price'' of an arbitrary structure of the field basis $q$.}}

In computer search, we use the  step-by-step leximatrix and inverse leximatrix algorithms to obtain parity check matrices of codes. The algorithms are versions of the recursive $g$-parity check matrix algorithm for greedy codes. Also, we apply the randomized greedy algorithms.

In future, it would be useful to investigate and understand properties of the leximatrix and inverse leximatrix algorithms and structure of leximatrices and invleximatrices.

 In particular, the following is of great interest:

$\bullet$ Initial part of the parity check leximatrices and invleximatrices, see Proposition \ref{prop7_init_part_lexi} and Example \ref{ex7}.

$\bullet$  The working mechanism and its quantitative estimates
       for the leximatrix and inverse leximatrix algorithms; see, for instance, the papers \cite{BDFKMP-PIT2014,DFMP-ConjCap} where the working
mechanisms  of greedy algorithms for complete arcs in the projective plane $\mathrm{PG}(2,q)$ and for complete caps in the projective spaces $\mathrm{PG}(N,q)$ are studied.

$\bullet$ The oscillation of the coefficients $c_q^\text{L}(4,3)$ around a horizontal line and its likenesses with the oscillation of the values $h^\text{L}(q)$ around a horizontal line in \cite[Fig.\,6, Observation~3.5]{BDFKMP_ArcFOPArXiv2015}, \cite[Fig.\,5, Observation 3.7]{BDFKMP_ArcComputJG2016}, see Figure \ref{fig_PG3qFOPcoef} and Remark \ref{rem3_oscil}.

It is important to emphasize  that although the \textbf{\emph{lexi-bounds}} of Theorem \ref{th2_res_rad3}(1) are obtained by computer search, several factors give us insurance that these bounds \textbf{\emph{are truth for all~$q$}} (see Conjecture \ref{conj2}). In particular we note figures and observations in Sections~\ref{sec_ell_q(4,3)} and  \ref{sec_ell_q(5,3)}, comparison of leximatrix and invleximatrix codes in Table 3, improvements of the lexi-bounds in Section~\ref{sec_rand_gr_alg}.

\newpage
\section*{Appendix}
\noindent\textbf{Table 1.} Lengths $n^\text{L}_q(4,3)$ of the $[n^\text{L}_q(4,3),n^\text{L}_q(4,3)-4,5]_q3$ leximatrix quasi-perfect\smallskip\\ MDS codes, $2\le q\leq 7057$ \medskip

\noindent \begin{tabular}
{@{}r@{\,}c@{\,}|r@{\,}c@{\,}|r@{\,}c@{\,}|r@{\,}c@{\,}|r@{\,}c@{\,}|r@{\,}c@{}}
 \hline
$q~~$ & $n^\text{L}_q(4,3)$ &
$q~~$ & $n^\text{L}_q(4,3)$ &
$q~~$ & $n^\text{L}_q(4,3)$ &
$q~~$ & $n^\text{L}_q(4,3)$ &
$q~~$ & $n^\text{L}_q(4,3)$ &
$q~~$ & $n^\text{L}_{q_{\vphantom{H_{L}}}}(4,3)^{ \vphantom{H^{L}}}$ \\
 \hline
 2 & 5  &  3 & 5  &  4 & 5  &  5 & 6  &  7 & 8  &  8 & 7  \\
 9 & 9  &  11 & 8  &  13 & 9  &  16 & 9  &  17 & 9  &  19 & 10  \\
 23 & 11  &  25 & 11  &  27 & 12  &  29 & 12  &  31 & 13  &  32 & 12  \\
 37 & 13  &  41 & 14  &  43 & 14  &  47 & 15  &  49 & 15  &  53 & 16  \\
 59 & 16  &  61 & 16  &  64 & 17  &  67 & 17  &  71 & 18  &  73 & 18  \\
 79 & 18  &  81 & 18  &  83 & 19  &  89 & 20  &  97 & 20  &  101 & 21  \\
 103 & 20  &  107 & 22  &  109 & 22  &  113 & 22  &  121 & 22  &  125 & 23  \\
 127 & 23  &  128 & 22  &  131 & 23  &  137 & 23  &  139 & 23  &  149 & 24  \\
 151 & 24  &  157 & 25  &  163 & 24  &  167 & 25  &  169 & 25  &  173 & 25  \\
 179 & 26  &  181 & 26  &  191 & 26  &  193 & 27  &  197 & 27  &  199 & 26  \\
 211 & 27  &  223 & 29  &  227 & 28  &  229 & 28  &  233 & 28  &  239 & 29  \\
 241 & 29  &  243 & 28  &  251 & 30  &  256 & 29  &  257 & 29  &  263 & 30  \\
 269 & 30  &  271 & 31  &  277 & 30  &  281 & 30  &  283 & 31  &  289 & 31  \\
 293 & 31  &  307 & 32  &  311 & 32  &  313 & 31  &  317 & 32  &  331 & 34  \\
 337 & 34  &  343 & 33  &  347 & 34  &  349 & 34  &  353 & 34  &  359 & 34  \\
 361 & 34  &  367 & 34  &  373 & 34  &  379 & 34  &  383 & 34  &  389 & 35  \\
 397 & 35  &  401 & 35  &  409 & 35  &  419 & 36  &  421 & 36  &  431 & 36  \\
 433 & 37  &  439 & 38  &  443 & 38  &  449 & 36  &  457 & 37  &  461 & 37  \\
 463 & 37  &  467 & 37  &  479 & 38  &  487 & 38  &  491 & 39  &  499 & 39  \\
 503 & 39  &  509 & 39  &  512 & 39  &  521 & 39  &  523 & 39  &  529 & 39  \\
 541 & 39  &  547 & 39  &  557 & 39  &  563 & 41  &  569 & 41  &  571 & 39  \\
 577 & 40  &  587 & 41  &  593 & 41  &  599 & 41  &  601 & 42  &  607 & 42  \\
 613 & 43  &  617 & 42  &  619 & 42  &  625 & 42  &  631 & 42  &  641 & 43  \\
 643 & 42  &  647 & 43  &  653 & 44  &  659 & 44  &  661 & 43  &  673 & 43  \\
 677 & 42  &  683 & 43  &  691 & 44  &  701 & 44  &  709 & 44  &  719 & 44  \\
 727 & 45  &  729 & 44  &  733 & 45  &  739 & 45  &  743 & 45  &  751 & 45  \\
 757 & 46  &  761 & 45  &  769 & 46  &  773 & 46  &  787 & 45  &  797 & 46  \\
 809 & 46  &  811 & 46  &  821 & 46  &  823 & 47  &  827 & 46  &  829 & 46  \\
 839 & 46  &  841 & 47  &  853 & 47  &  857 & 47  &  859 & 47  &  863 & 47  \\
 877 & 48  &  881 & 47  &  883 & 47  &  887 & 48  &  907 & 50  &  911 & 49  \\
 919 & 48  &  929 & 49  &  937 & 49  &  941 & 49  &  947 & 49  &  953 & 49  \\
 961 & 50  &  967 & 50  &  971 & 50  &  977 & 50  &  983 & 50  &  991 & 50  \\
 997 & 52  &  1009 & 51  &  1013 & 51  &  1019 & 51  &  1021 & 50  &  1024 & 52  \\
 1031 & 50  &  1033 & 51  &  1039 & 51  &  1049 & 52  &  1051 & 51  &  1061 & 51  \\
 \hline
 \end{tabular}
 \newpage

\noindent\textbf{Table 1.} Continue 1\medskip

\noindent \begin{tabular}
{@{}r@{\,}c@{\,}|@{\,}r@{\,}c@{\,}|@{\,}r@{\,}c@{\,}|@{\,}r@{\,}c@{\,}|@{\,}r@{\,}c@{\,}|@{\,}r@{\,}c@{}}
 \hline
$q~~$ & $n^\text{L}_q(4,3)$ &
$q~~$ & $n^\text{L}_q(4,3)$ &
$q~~$ & $n^\text{L}_q(4,3)$ &
$q~~$ & $n^\text{L}_q(4,3)$ &
$q~~$ & $n^\text{L}_q(4,3)$ &
$q~~$ & $n^\text{L}_{q_{\vphantom{H_{L}}}}(4,3)^{ \vphantom{H^{L}}}$ \\
 \hline
 1063 & 51  &  1069 & 52  &  1087 & 52  &  1091 & 51  &  1093 & 52  &  1097 & 52  \\
 1103 & 52  &  1109 & 52  &  1117 & 52  &  1123 & 52  &  1129 & 53  &  1151 & 53  \\
 1153 & 53  &  1163 & 53  &  1171 & 53  &  1181 & 53  &  1187 & 54  &  1193 & 53  \\
 1201 & 52  &  1213 & 54  &  1217 & 55  &  1223 & 55  &  1229 & 54  &  1231 & 56  \\
 1237 & 56  &  1249 & 55  &  1259 & 54  &  1277 & 55  &  1279 & 56  &  1283 & 56  \\
 1289 & 56  &  1291 & 55  &  1297 & 56  &  1301 & 56  &  1303 & 56  &  1307 & 56  \\
 1319 & 56  &  1321 & 56  &  1327 & 56  &  1331 & 55  &  1361 & 57  &  1367 & 57  \\
 1369 & 56  &  1373 & 56  &  1381 & 57  &  1399 & 57  &  1409 & 57  &  1423 & 58  \\
 1427 & 58  &  1429 & 58  &  1433 & 57  &  1439 & 57  &  1447 & 57  &  1451 & 59  \\
 1453 & 59  &  1459 & 57  &  1471 & 57  &  1481 & 59  &  1483 & 59  &  1487 & 59  \\
 1489 & 59  &  1493 & 58  &  1499 & 58  &  1511 & 59  &  1523 & 58  &  1531 & 60  \\
 1543 & 59  &  1549 & 59  &  1553 & 59  &  1559 & 60  &  1567 & 60  &  1571 & 60  \\
 1579 & 59  &  1583 & 59  &  1597 & 59  &  1601 & 59  &  1607 & 60  &  1609 & 60  \\
 1613 & 60  &  1619 & 60  &  1621 & 60  &  1627 & 60  &  1637 & 60  &  1657 & 60  \\
 1663 & 61  &  1667 & 61  &  1669 & 60  &  1681 & 62  &  1693 & 61  &  1697 & 62  \\
 1699 & 62  &  1709 & 61  &  1721 & 63  &  1723 & 62  &  1733 & 63  &  1741 & 62  \\
 1747 & 63  &  1753 & 62  &  1759 & 62  &  1777 & 62  &  1783 & 63  &  1787 & 63  \\
 1789 & 62  &  1801 & 62  &  1811 & 63  &  1823 & 62  &  1831 & 62  &  1847 & 63  \\
 1849 & 64  &  1861 & 63  &  1867 & 63  &  1871 & 63  &  1873 & 64  &  1877 & 63  \\
 1879 & 63  &  1889 & 63  &  1901 & 64  &  1907 & 64  &  1913 & 64  &  1931 & 65  \\
 1933 & 66  &  1949 & 64  &  1951 & 66  &  1973 & 66  &  1979 & 65  &  1987 & 64  \\
 1993 & 65  &  1997 & 66  &  1999 & 65  &  2003 & 67  &  2011 & 66  &  2017 & 64  \\
 2027 & 65  &  2029 & 66  &  2039 & 66  &  2048 & 66  &  2053 & 66  &  2063 & 66  \\
 2069 & 66  &  2081 & 65  &  2083 & 66  &  2087 & 67  &  2089 & 67  &  2099 & 66  \\
 2111 & 67  &  2113 & 66  &  2129 & 67  &  2131 & 67  &  2137 & 68  &  2141 & 67  \\
 2143 & 66  &  2153 & 67  &  2161 & 67  &  2179 & 66  &  2187 & 68  &  2197 & 68  \\
 2203 & 67  &  2207 & 68  &  2209 & 67  &  2213 & 68  &  2221 & 69  &  2237 & 68  \\
 2239 & 68  &  2243 & 69  &  2251 & 69  &  2267 & 68  &  2269 & 69  &  2273 & 69  \\
 2281 & 69  &  2287 & 69  &  2293 & 68  &  2297 & 67  &  2309 & 69  &  2311 & 69  \\
 2333 & 69  &  2339 & 71  &  2341 & 69  &  2347 & 70  &  2351 & 69  &  2357 & 69  \\
 2371 & 70  &  2377 & 69  &  2381 & 69  &  2383 & 71  &  2389 & 69  &  2393 & 70  \\
 2399 & 70  &  2401 & 70  &  2411 & 71  &  2417 & 69  &  2423 & 71  &  2437 & 71  \\
 2441 & 73  &  2447 & 71  &  2459 & 70  &  2467 & 71  &  2473 & 72  &  2477 & 71  \\
 2503 & 70  &  2521 & 70  &  2531 & 71  &  2539 & 72  &  2543 & 72  &  2549 & 71  \\
 2551 & 71  &  2557 & 71  &  2579 & 72  &  2591 & 71  &  2593 & 72  &  2609 & 71  \\
 2617 & 72  &  2621 & 72  &  2633 & 73  &  2647 & 72  &  2657 & 73  &  2659 & 73  \\
 2663 & 72  &  2671 & 72  &  2677 & 73  &  2683 & 73  &  2687 & 72  &  2689 & 72  \\
 2693 & 72  &  2699 & 72  &  2707 & 73  &  2711 & 73  &  2713 & 72  &  2719 & 73  \\
 \hline
 \end{tabular}
\newpage

\noindent\textbf{Table 1.} Continue 2\medskip

\noindent \begin{tabular}
{@{}r@{\,}c@{\,}|@{\,}r@{\,}c@{\,}|@{\,}r@{\,}c@{\,}|@{\,}r@{\,}c@{\,}|@{\,}r@{\,}c@{\,}|@{\,}r@{\,}c@{}}
 \hline
$q~~$ & $n^\text{L}_q(4,3)$ &
$q~~$ & $n^\text{L}_q(4,3)$ &
$q~~$ & $n^\text{L}_q(4,3)$ &
$q~~$ & $n^\text{L}_q(4,3)$ &
$q~~$ & $n^\text{L}_q(4,3)$ &
$q~~$ & $n^\text{L}_{q_{\vphantom{H_{L}}}}(4,3)^{ \vphantom{H^{L}}}$ \\
 \hline
 2729 & 73  &  2731 & 74  &  2741 & 73  &  2749 & 73  &  2753 & 74  &  2767 & 73  \\
 2777 & 74  &  2789 & 74  &  2791 & 74  &  2797 & 73  &  2801 & 75  &  2803 & 74  \\
 2809 & 74  &  2819 & 74  &  2833 & 74  &  2837 & 75  &  2843 & 75  &  2851 & 75  \\
 2857 & 74  &  2861 & 74  &  2879 & 74  &  2887 & 76  &  2897 & 75  &  2903 & 74  \\
 2909 & 75  &  2917 & 75  &  2927 & 75  &  2939 & 76  &  2953 & 77  &  2957 & 76  \\
 2963 & 75  &  2969 & 75  &  2971 & 76  &  2999 & 76  &  3001 & 76  &  3011 & 75  \\
 3019 & 77  &  3023 & 76  &  3037 & 76  &  3041 & 75  &  3049 & 75  &  3061 & 76  \\
 3067 & 76  &  3079 & 78  &  3083 & 77  &  3089 & 76  &  3109 & 76  &  3119 & 77  \\
 3121 & 77  &  3125 & 78  &  3137 & 77  &  3163 & 78  &  3167 & 77  &  3169 & 77  \\
 3181 & 79  &  3187 & 77  &  3191 & 78  &  3203 & 77  &  3209 & 77  &  3217 & 78  \\
 3221 & 78  &  3229 & 77  &  3251 & 79  &  3253 & 78  &  3257 & 77  &  3259 & 78  \\
 3271 & 79  &  3299 & 79  &  3301 & 78  &  3307 & 78  &  3313 & 78  &  3319 & 79  \\
 3323 & 79  &  3329 & 80  &  3331 & 79  &  3343 & 78  &  3347 & 80  &  3359 & 78  \\
 3361 & 80  &  3371 & 79  &  3373 & 79  &  3389 & 80  &  3391 & 79  &  3407 & 80  \\
 3413 & 80  &  3433 & 80  &  3449 & 80  &  3457 & 80  &  3461 & 80  &  3463 & 80  \\
 3467 & 79  &  3469 & 80  &  3481 & 81  &  3491 & 80  &  3499 & 80  &  3511 & 80  \\
 3517 & 80  &  3527 & 80  &  3529 & 82  &  3533 & 80  &  3539 & 82  &  3541 & 80  \\
 3547 & 80  &  3557 & 82  &  3559 & 81  &  3571 & 81  &  3581 & 81  &  3583 & 80  \\
 3593 & 81  &  3607 & 83  &  3613 & 81  &  3617 & 81  &  3623 & 82  &  3631 & 81  \\
 3637 & 82  &  3643 & 82  &  3659 & 82  &  3671 & 83  &  3673 & 82  &  3677 & 82  \\
 3691 & 83  &  3697 & 83  &  3701 & 82  &  3709 & 83  &  3719 & 82  &  3721 & 82  \\
 3727 & 82  &  3733 & 82  &  3739 & 83  &  3761 & 82  &  3767 & 83  &  3769 & 83  \\
 3779 & 85  &  3793 & 83  &  3797 & 83  &  3803 & 82  &  3821 & 83  &  3823 & 82  \\
 3833 & 84  &  3847 & 83  &  3851 & 84  &  3853 & 82  &  3863 & 83  &  3877 & 84  \\
 3881 & 84  &  3889 & 83  &  3907 & 85  &  3911 & 84  &  3917 & 83  &  3919 & 83  \\
 3923 & 84  &  3929 & 84  &  3931 & 84  &  3943 & 84  &  3947 & 84  &  3967 & 84  \\
 3989 & 85  &  4001 & 85  &  4003 & 84  &  4007 & 85  &  4013 & 85  &  4019 & 86  \\
 4021 & 84  &  4027 & 84  &  4049 & 85  &  4051 & 86  &  4057 & 85  &  4073 & 85  \\
 4079 & 86  &  4091 & 85  &  4093 & 86  &  4096 & 86  &  4099 & 86  &  4111 & 86  \\
 4127 & 86  &  4129 & 86  &  4133 & 85  &  4139 & 86  &  4153 & 86  &  4157 & 86  \\
 4159 & 86  &  4177 & 87  &  4201 & 85  &  4211 & 87  &  4217 & 85  &  4219 & 87  \\
 4229 & 86  &  4231 & 87  &  4241 & 86  &  4243 & 86  &  4253 & 86  &  4259 & 88  \\
 4261 & 87  &  4271 & 86  &  4273 & 87  &  4283 & 87  &  4289 & 86  &  4297 & 87  \\
 4327 & 88  &  4337 & 88  &  4339 & 86  &  4349 & 89  &  4357 & 87  &  4363 & 87  \\
 4373 & 87  &  4391 & 87  &  4397 & 88  &  4409 & 88  &  4421 & 87  &  4423 & 90  \\
 4441 & 87  &  4447 & 88  &  4451 & 88  &  4457 & 87  &  4463 & 88  &  4481 & 87  \\
 4483 & 88  &  4489 & 89  &  4493 & 88  &  4507 & 89  &  4513 & 88  &  4517 & 88  \\
 4519 & 89  &  4523 & 89  &  4547 & 88  &  4549 & 90  &  4561 & 89  &  4567 & 89  \\
 \hline
 \end{tabular}
\newpage

\noindent\textbf{Table 1.} Continue 3\medskip

\renewcommand{\arraystretch}{0.95}
\noindent \begin{tabular}
{@{}r@{\,}c@{\,}|@{\,}r@{\,}c@{\,}|@{\,}r@{\,}c@{\,}|@{\,}r@{\,}c@{\,}|@{\,}r@{\,}c@{\,}|@{\,}r@{\,}c@{}}
 \hline
$q~~$ & $n^\text{L}_q(4,3)$ &
$q~~$ & $n^\text{L}_q(4,3)$ &
$q~~$ & $n^\text{L}_q(4,3)$ &
$q~~$ & $n^\text{L}_q(4,3)$ &
$q~~$ & $n^\text{L}_q(4,3)$ &
$q~~$ & $n^\text{L}_{q_{\vphantom{H_{L}}}}(4,3)^{ \vphantom{H^{L}}}$ \\
\hline
 4583 & 89  &  4591 & 89  &  4597 & 89  &  4603 & 90  &  4621 & 89  &  4637 & 89  \\
 4639 & 89  &  4643 & 90  &  4649 & 89  &  4651 & 89  &  4657 & 90  &  4663 & 90  \\
 4673 & 90  &  4679 & 92  &  4691 & 90  &  4703 & 89  &  4721 & 90  &  4723 & 90  \\
 4729 & 90  &  4733 & 90  &  4751 & 90  &  4759 & 90  &  4783 & 90  &  4787 & 89  \\
 4789 & 89  &  4793 & 89  &  4799 & 91  &  4801 & 92  &  4813 & 92  &  4817 & 89  \\
 4831 & 92  &  4861 & 91  &  4871 & 90  &  4877 & 92  &  4889 & 92  &  4903 & 91  \\
 4909 & 91  &  4913 & 91  &  4919 & 90  &  4931 & 91  &  4933 & 91  &  4937 & 90  \\
 4943 & 91  &  4951 & 91  &  4957 & 90  &  4967 & 91  &  4969 & 91  &  4973 & 91  \\
 4987 & 90  &  4993 & 92  &  4999 & 92  &  5003 & 92  &  5009 & 92  &  5011 & 93  \\
 5021 & 91  &  5023 & 92  &  5039 & 93  &  5041 & 91  &  5051 & 91  &  5059 & 92  \\
 5077 & 91  &  5081 & 92  &  5087 & 92  &  5099 & 94  &  5101 & 92  &  5107 & 93  \\
 5113 & 94  &  5119 & 91  &  5147 & 92  &  5153 & 93  &  5167 & 94  &  5171 & 93  \\
 5179 & 93  &  5189 & 93  &  5197 & 93  &  5209 & 93  &  5227 & 92  &  5231 & 94  \\
 5233 & 93  &  5237 & 93  &  5261 & 93  &  5273 & 94  &  5279 & 95  &  5281 & 94  \\
 5297 & 94  &  5303 & 95  &  5309 & 94  &  5323 & 93  &  5329 & 94  &  5333 & 94  \\
 5347 & 94  &  5351 & 95  &  5381 & 94  &  5387 & 94  &  5393 & 95  &  5399 & 95  \\
 5407 & 95  &  5413 & 94  &  5417 & 94  &  5419 & 95  &  5431 & 95  &  5437 & 93  \\
 5441 & 94  &  5443 & 94  &  5449 & 93  &  5471 & 94  &  5477 & 94  &  5479 & 95  \\
 5483 & 95  &  5501 & 96  &  5503 & 95  &  5507 & 94  &  5519 & 96  &  5521 & 95  \\
 5527 & 96  &  5531 & 95  &  5557 & 94  &  5563 & 95  &  5569 & 95  &  5573 & 95  \\
 5581 & 94  &  5591 & 96  &  5623 & 97  &  5639 & 96  &  5641 & 97  &  5647 & 97  \\
 5651 & 97  &  5653 & 97  &  5657 & 97  &  5659 & 96  &  5669 & 96  &  5683 & 98  \\
 5689 & 96  &  5693 & 97  &  5701 & 96  &  5711 & 96  &  5717 & 97  &  5737 & 96  \\
 5741 & 95  &  5743 & 97  &  5749 & 97  &  5779 & 96  &  5783 & 96  &  5791 & 97  \\
 5801 & 98  &  5807 & 96  &  5813 & 97  &  5821 & 97  &  5827 & 97  &  5839 & 98  \\
 5843 & 97  &  5849 & 96  &  5851 & 97  &  5857 & 97  &  5861 & 97  &  5867 & 97  \\
 5869 & 98  &  5879 & 97  &  5881 & 98  &  5897 & 97  &  5903 & 97  &  5923 & 97  \\
 5927 & 97  &  5939 & 98  &  5953 & 98  &  5981 & 98  &  5987 & 100 &  6007 & 98  \\
 6011 & 98  &  6029 & 97  &  6037 & 98  &  6043 & 99  &  6047 & 98  &  6053 & 99  \\
 6067 & 99  &  6073 & 99  &  6079 & 98  &  6089 & 99  &  6091 & 98  &  6101 & 98  \\
 6113 & 99  &  6121 & 99  &  6131 & 98  &  6133 & 97  &  6143 & 100 &  6151 & 98  \\
 6163 & 99  &  6173 & 99  &  6197 & 100 &  6199 & 100 &  6203 & 98  &  6211 & 100  \\
 6217 & 101  &  6221 & 100  &  6229 & 99  &  6241 & 100  &  6247 & 99  &  6257 & 100  \\
 6263 & 100  &  6269 & 100  &  6271 & 100  &  6277 & 98  &  6287 & 101  &  6299 & 101  \\
 6301 & 99  &  6311 & 99  &  6317 & 100  &  6323 & 100  &  6329 & 100  &  6337 & 101  \\
 6343 & 100  &  6353 & 100  &  6359 & 100  &  6361 & 99  &  6367 & 101  &  6373 & 100  \\
 6379 & 100  &  6389 & 101  &  6397 & 101  &  6421 & 101  &  6427 & 101  &  6449 & 101  \\
 6451 & 101  &  6469 & 100  &  6473 & 101  &  6481 & 101  &  6491 & 101  &  6521 & 101  \\
 6529 & 103  &  6547 & 102  &  6551 & 102  &  6553 & 101  &  6561 & 102  &  6563 & 103  \\
 6569 & 101  &  6571 & 102  &  6577 & 101  &  6581 & 101  &  6599 & 101  &  6607 & 101\\
 \hline
 \end{tabular}
\newpage

\noindent\textbf{Table 1.} Continue 4\medskip

\renewcommand{\arraystretch}{0.95}
\noindent \begin{tabular}
{@{}r@{\,}c@{\,}|@{\,}r@{\,}c@{\,}|@{\,}r@{\,}c@{\,}|@{\,}r@{\,}c@{\,}|@{\,}r@{\,}c@{\,}|@{\,}r@{\,}c@{}}
 \hline
$q~~$ & $n^\text{L}_q(4,3)$ &
$q~~$ & $n^\text{L}_q(4,3)$ &
$q~~$ & $n^\text{L}_q(4,3)$ &
$q~~$ & $n^\text{L}_q(4,3)$ &
$q~~$ & $n^\text{L}_q(4,3)$ &
$q~~$ & $n^\text{L}_{q_{\vphantom{H_{L}}}}(4,3)^{ \vphantom{H^{L}}}$ \\
\hline
 6619 & 101  &  6637 & 103  &  6653 & 103  &  6659 & 102  &  6661 & 102  &  6673 & 102  \\
 6679 & 103  &  6689 & 102  &  6691 & 102  &  6701 & 102  &  6703 & 100  &  6709 & 102  \\
 6719 & 102  &  6733 & 104  &  6737 & 102  &  6761 & 101  &  6763 & 102  &  6779 & 103  \\
 6781 & 103  &  6791 & 102  &  6793 & 103  &  6803 & 102  &  6823 & 103  &  6827 & 103  \\
 6829 & 102  &  6833 & 103  &  6841 & 101  &  6857 & 103  &  6859 & 102  &  6863 & 102  \\
 6869 & 103  &  6871 & 105  &  6883 & 104  &  6889 & 102  &  6899 & 103  &  6907 & 103  \\
 6911 & 104  &  6917 & 103  &  6947 & 102  &  6949 & 103  &  6959 & 105  &  6961 & 103  \\
 6967 & 104  &  6971 & 105  &  6977 & 103  &  6983 & 103  &  6991 & 104  &  6997 & 103  \\
 7001 & 105  &  7013 & 104  &  7019 & 104  &  7027 & 105  &  7039 & 104  &  7043 & 103  \\
 7057 & 105  &  &  &  &  &  &  &  &  &  &  \\
 \hline
 \end{tabular}
\newpage

\noindent\textbf{Table 2.}   Lengths $n^\text{L}_q(5,3)$ of the $[n^\text{L}_q(5,3),n^\text{L}_q(5,3)-5,5]_q3$ leximatrix quasi-perfect Almost MDS codes,  $3\leq q\leq 839$  \medskip

 \begin{tabular}
{@{}r@{}r@{\,\,}|@{\,}r@{}r@{\,\,}|@{\,}r@{}r@{\,\,}|@{\,}r@{}r@{\,\,}|@{\,}r@{}r@{\,\,}|@{\,}r@{}r@{\,\,}|@{\,\,}r@{}r@{}}
 \hline
$q~~$ & $n^\text{L}_q(5,3)$ &
$q~~$ & $n^\text{L}_q(5,3)$ &
$q~~$ & $n^\text{L}_q(5,3)$ &
$q~~$ & $n^\text{L}_q(5,3)$ &
$q~~$ & $n^\text{L}_q(5,3)$ &
$q~~$ & $n^\text{L}_q(5,3)$ &
$q~~$ & $n^\text{L}_{q_{\vphantom{H_{L}}}}(5,3)^{ \vphantom{H^{L}}}$ \\
 \hline
 3 & 11  &  4 & 10  &  5 & 11  &  7 & 16  &  8 & 17  &  9 & 19  &  11 & 22  \\
 13 & 24  &  16 & 28  &  17 & 28  &  19 & 31  &  23 & 36  &  25 & 37  &  27 & 40  \\
 29 & 43  &  31 & 46  &  32 & 46  &  37 & 51  &  41 & 55  &  43 & 56  &  47 & 60  \\
 49 & 61  &  53 & 66  &  59 & 70  &  61 & 73  &  64 & 77  &  67 & 79  &  71 & 82  \\
 73 & 84  &  79 & 88  &  81 & 88  &  83 & 90  &  89 & 96  &  97 & 101  &  101 & 104  \\
 103 & 107  &  107 & 109  &  109 & 111  &  113 & 112  &  121 & 119  &  125 & 123  &  127 & 123  \\
 128 & 124  &  131 & 127  &  137 & 130  &  139 & 133  &  149 & 142  &  151 & 141  &  157 & 146  \\
 163 & 149  &  167 & 150  &  169 & 151  &  173 & 156  &  179 & 158  &  181 & 159  &  191 & 166  \\
 193 & 166  &  197 & 171  &  199 & 172  &  211 & 180  &  223 & 185  &  227 & 186  &  229 & 188  \\
 233 & 191  &  239 & 195  &  241 & 197  &  243 & 198  &  251 & 203  &  256 & 205  &  257 & 207  \\
 263 & 208  &  269 & 214  &  271 & 213  &  277 & 215  &  281 & 218  &  283 & 221  &  289 & 226  \\
 293 & 227  &  307 & 232  &  311 & 234  &  313 & 236  &  317 & 237  &  331 & 245  &  337 & 248  \\
 343 & 253  &  347 & 257  &  349 & 255  &  353 & 256  &  359 & 260  &  361 & 260  &  367 & 265  \\
 373 & 266  &  379 & 274  &  383 & 272  &  389 & 275  &  397 & 280  &  401 & 282  &  409 & 284  \\
 419 & 292  &  421 & 290  &  431 & 297  &  433 & 299  &  439 & 301  &  443 & 304  &  449 & 309  \\
 457 & 311  &  461 & 311  &  463 & 309  &  467 & 314  &  479 & 320  &  487 & 324  &  491 & 324  \\
 499 & 328  &  503 & 330  &  509 & 334  &  512 & 334  &  521 & 339  &  523 & 341  &  529 & 344  \\
 541 & 348  &  547 & 349  &  557 & 353  &  563 & 360  &  569 & 364  &  571 & 362  &  577 & 365  \\
 587 & 371  &  593 & 374  &  599 & 375  &  601 & 376  &  607 & 376  &  613 & 380  &  617 & 384  \\
 619 & 382  &  625 & 385  &  631 & 387  &  641 & 393  &  643 & 398  &  647 & 396  &  653 & 399  \\
 659 & 402  &  661 & 401  &  673 & 407  &  677 & 407  &  683 & 411  &  691 & 416  &  701 & 417  \\
 709 & 424  &  719 & 427  &  727 & 430  &  729 & 430  &  733 & 429  &  739 & 431  &  743 & 436  \\
 751 & 439  &  757 & 440  &  761 & 443  &  769 & 447  &  773 & 450  &  787 & 453  &  797 & 458  \\
 809 & 464  &  811 & 464  &  821 & 467  &  823 & 468  &  827 & 471  &  829 & 473  &  839 & 475  \\
 \hline
 \end{tabular}
 \newpage

 \noindent\textbf{Table 3.} Lengths $n^\text{IL}_q(4,3)$ of the $[n^\text{IL}_q(4,3),n^\text{IL}_q(4,3)-4,5]_q3$ invleximatrix quasi-perfect MDS codes, $7\le q\leq 6203$ and $q=6217,6287,6299,6529,6563,6637,6653,6733,6871,$
$6959,6971,7001$. The cases $n^\text{IL}_q(4,3)<n^\text{L}_q(4,3)$ are noted in bold italic font\medskip

\noindent \begin{tabular}
{@{}r@{}c@{\,}|@{\,}r@{}c@{\,}|@{\,}r@{}c@{\,}|@{\,}r@{}c@{\,}|@{\,}r@{}c@{\,}|@{\,}r@{}c@{}}
 \hline
$q~~$ & $n^\text{IL}_q(4,3)$ &
$q~~$ & $n^\text{IL}_q(4,3)$ &
$q~~$ & $n^\text{IL}_q(4,3)$ &
$q~~$ & $n^\text{IL}_q(4,3)$ &
$q~~$ & $n^\text{IL}_q(4,3)$ &
$q~~$ & $n^\text{IL}_q(4,3)^{ \vphantom{H^{L}}}$ \\
 \hline
 7 & 8  &  8 & 7  &  \textbf{\emph{9}} & \textbf{\emph{8}}  &  11 & 9  &  13 & 10  &  16 & 10  \\
 17 & 11  &  19 & 10  &  23 & 12  &  25 & 11  &  27 & 12  &  29 & 13  \\
 31 & 13  &  32 & 12  &  37 & 14  &  41 & 14  &  43 & 14  &  47 & 15  \\
 49 & 15  &  53 & 16  &  59 & 16  &  61 & 17  &  64 & 17  &  67 & 17  \\
 \textbf{\emph{71}} & \textbf{\emph{17}}  &  73 & 18  &  79 & 19  &  81 & 19  &  83 & 19  &  \textbf{\emph{89}} & \textbf{\emph{19}}  \\
 97 & 21  &  \textbf{\emph{101}} & \textbf{\emph{20}}  &  103 & 21  &  \textbf{\emph{107}} & \textbf{\emph{21}}  &  \textbf{\emph{109}} & \textbf{\emph{21}}  &  113 & 23  \\
 121 & 22  &  \textbf{\emph{125}} & \textbf{\emph{22}}  &  127 & 23  &  128 & 23  &  \textbf{\emph{131}} & \textbf{\emph{22}}  &  137 & 23  \\
 139 & 23  &  149 & 24  &  151 & 24  &  \textbf{\emph{157}} & \textbf{\emph{24}}  &  163 & 25  &  167 & 25  \\
 \textbf{\emph{169}} & \textbf{\emph{24}}  &  173 & 26  &  179 & 26  &  \textbf{\emph{181}} & \textbf{\emph{25}}  &  191 & 27  &  \textbf{\emph{193}} & \textbf{\emph{26}}  \\
 197 & 28  &  199 & 26  &  211 & 28  &  \textbf{\emph{223}} & \textbf{\emph{28}}  &  227 & 29  &  229 & 29  \\
 233 & 28  &  239 & 29  &  241 & 29  &  243 & 29  &  251 & 30  &  256 & 29  \\
 257 & 30  &  263 & 31  &  269 & 30  &  \textbf{\emph{271}} & \textbf{\emph{30}}  &  277 & 31  &  281 & 30  \\
 283 & 31  &  289 & 31  &  293 & 32  &  \textbf{\emph{307}} & \textbf{\emph{31}}  &  311 & 32  &  313 & 32  \\
 317 & 32  &  \textbf{\emph{331}} & \textbf{\emph{32}}  &  \textbf{\emph{337}} & \textbf{\emph{32}}  &  343 & 34  &  \textbf{\emph{347}} & \textbf{\emph{33}}  &  349 & 34  \\
 353 & 34  &  359 & 34  &  361 & 34  &  367 & 34  &  373 & 35  &  379 & 35  \\
 383 & 35  &  389 & 35  &  397 & 35  &  401 & 35  &  409 & 36  &  419 & 36  \\
 421 & 36  &  431 & 36  &  433 & 37  &  \textbf{\emph{439}} & \textbf{\emph{37}}  &  \textbf{\emph{443}} & \textbf{\emph{37}}  &  449 & 37  \\
 457 & 37  &  \textbf{\emph{461}} & \textbf{\emph{36}}  &  463 & 38  &  467 & 38  &  \textbf{\emph{479}} & \textbf{\emph{37}}  &  487 & 38  \\
 \textbf{\emph{491}} & \textbf{\emph{38}}  &  \textbf{\emph{499}} & \textbf{\emph{38}}  &  \textbf{\emph{503}} & \textbf{\emph{37}}  &  509 & 39  &  512 & 39  &  521 & 40  \\
 523 & 39  &  529 & 40  &  541 & 40  &  547 & 40  &  557 & 40  &  \textbf{\emph{563}} & \textbf{\emph{40}}  \\
 569 & 41  &  571 & 40  &  577 & 41  &  \textbf{\emph{587}} & \textbf{\emph{40}}  &  593 & 41  &  599 & 41  \\
 601 & 42  &  607 & 43  &  613 & 43  &  \texttt{\emph{617}} & \textbf{\emph{41}}  &  619 & 42  &  \textbf{\emph{625}} & \textbf{\emph{41}}  \\
 631 & 42  &  641 & 43  &  643 & 42  &  \textbf{\emph{647}} & \textbf{\emph{42}}  &  \textbf{\emph{653}} & \textbf{\emph{43}}  &  \textbf{\emph{659}} & \textbf{\emph{43}}  \\
 661 & 43  &  \textbf{\emph{673}} & \textbf{\emph{42}}  &  677 & 44  &  683 & 43  &  691 & 45  &  \textbf{\emph{701}} & \textbf{\emph{43}}  \\
 \textbf{\emph{709}} & \textbf{\emph{43}}  &  719 & 44  &  \textbf{\emph{727}} & \textbf{\emph{44}}  &  729 & 44  &  733 & 45  &  \textbf{\emph{739}} & \textbf{\emph{44}}  \\
 743 & 46  &  \textbf{\emph{751}} & \textbf{\emph{44}}  &  \textbf{\emph{757}} & \textbf{\emph{45}}  &  761 & 45  &  \textbf{\emph{769}} & \textbf{\emph{45}}  &  \textbf{\emph{773}} & \textbf{\emph{44}}  \\
 787 & 45  &  797 & 46  &  809 & 47  &  811 & 47  &  821 & 46  &  823 & 48  \\
 827 & 47  &  829 & 47  &  839 & 47  &  \textbf{\emph{841}} & \textbf{\emph{46}}  &  853 & 48  &  857 & 48  \\
 859 & 48  &  863 & 48  &  877 & 48  &  881 & 49  &  883 & 48  &  887 & 48  \\
 \textbf{\emph{907}} & \textbf{\emph{49}}  &  \textbf{\emph{911}} & \textbf{\emph{48}}  &  919 & 48  &  929 & 49  &  937 & 49  &  941 & 49  \\
 947 & 50  &  953 & 49  &  961 & 50  &  \textbf{\emph{967}} & \textbf{\emph{49}}  &  \textbf{\emph{971}} & \textbf{\emph{49}}  &  977 & 50  \\
 983 & 50  &  991 & 50  &  \textbf{\emph{997}} & \textbf{\emph{50}}  &  1009 & 51  &  \textbf{\emph{1013}} & \textbf{\emph{50}}  &  \textbf{\emph{1019}} & \textbf{\emph{50}}  \\
 1021 & 51  &  \textbf{\emph{1024}} & \textbf{\emph{50}}  &  1031 & 51  &  1033 & 51  &  1039 & 52  &  \textbf{\emph{1049}} & \textbf{\emph{51}}  \\
 1051 & 51  &  1061 & 51  &  1063 & 52  &  \textbf{\emph{1069}} & \textbf{\emph{51}}  &  1087 & 52  &  1091 & 52  \\
1093 & 52  &  1097 & 52  &  1103 & 52  &  1109 & 52  &  1117 & 53  &  1123 & 52  \\
 \textbf{\emph{1129}} & \textbf{\emph{52}}  &  \textbf{\emph{1151}} & \textbf{\emph{52}}  &  1153 & 54  &  1163 & 53  &  1171 & 53  &  1181 & 53  \\
 \textbf{\emph{1187}} & \textbf{\emph{53}}  &  1193 & 54  &  1201 & 54  &  1213 & 54  &  \textbf{\emph{1217}} & \textbf{\emph{54}}  &  \textbf{\emph{1223}} & \textbf{\emph{54}}  \\
 \hline
 \end{tabular}
 \newpage

\textbf{Table 3.} Continue 1 \medskip

\noindent \begin{tabular}
{@{}r@{}c@{\,}|@{\,}r@{}c@{\,}|@{\,}r@{}c@{\,}|@{\,}r@{}c@{\,}|@{\,}r@{}c@{\,}|@{\,}r@{}c@{}}
\hline
$q~~$ & $n^\text{IL}_q(4,3)$ &
$q~~$ & $n^\text{IL}_q(4,3)$ &
$q~~$ & $n^\text{IL}_q(4,3)$ &
$q~~$ & $n^\text{IL}_q(4,3)$ &
$q~~$ & $n^\text{IL}_q(4,3)$ &
$q~~$ & $n^\text{IL}_q(4,3)^{ \vphantom{H^{L}}}$ \\
 \hline
 1229 & 54  &  \textbf{\emph{1231}} & \textbf{\emph{53}}  &  \textbf{\emph{1237}} & \textbf{\emph{54}}  &  \textbf{\emph{1249}} & \textbf{\emph{54}}  &  1259 & 55  &  \textbf{\emph{1277}} & \textbf{\emph{54}}  \\
 \textbf{\emph{1279}} & \textbf{\emph{55}}  &  \textbf{\emph{1283}} & \textbf{\emph{55}}  &  \textbf{\emph{1289}} & \textbf{\emph{54}}  &  1291 & 56  &  1297 & 57  &  \textbf{\emph{1301}} & \textbf{\emph{55}}  \\
 \textbf{\emph{1303}} & \textbf{\emph{55}}  &  1307 & 56  &  \textbf{\emph{1319}} & \textbf{\emph{55}}  &  1321 & 57  &  1327 & 56  &  1331 & 56  \\
 \textbf{\emph{1361}} & \textbf{\emph{56}}  &  \textbf{\emph{1367}} & \textbf{\emph{56}}  &  1369 & 56  &  1373 & 58  &  1381 & 57  &  \textbf{\emph{1399}} & \textbf{\emph{56}}  \\
 \textbf{\emph{1409}} & \textbf{\emph{56}}  &  \textbf{\emph{1423}} & \textbf{\emph{57}}  &  1427 & 59  &  1429 & 58  &  1433 & 57  &  1439 & 57  \\
 1447 & 58  &  \textbf{\emph{1451}} & \textbf{\emph{58}}  &  \textbf{\emph{1453}} & \textbf{\emph{58}}  &  1459 & 58  &  1471 & 59  &  1481 & 59  \\
 1483 & 59  &  1487 & 59  &  1489 & 59  &  1493 & 58  &  1499 & 59  &  1511 & 60  \\
 1523 & 59  &  \textbf{\emph{1531}} & \textbf{\emph{59}}  &  1543 & 60  &  \textbf{\emph{1549}} & \textbf{\emph{58}}  &  1553 & 59  &  1559 & 60  \\
 1567 & 60  &  1571 & 60  &  1579 & 60  &  1583 & 59  &  1597 & 60  &  1601 & 60  \\
 1607 & 60  &  \textbf{\emph{1609}} & \textbf{\emph{59}}  &  1613 & 60  &  1619 & 61  &  1621 & 60  &  1627 & 61  \\
 1637 & 60  &  1657 & 61  &  \textbf{\emph{1663}} & \textbf{\emph{60}}  &  \textbf{\emph{1667}} & \textbf{\emph{60}}  &  1669 & 61  &  \textbf{\emph{1681}} & \textbf{\emph{61}}  \\
 \textbf{\emph{1693}} & \textbf{\emph{60}}  &  \textbf{\emph{1697}} & \textbf{\emph{61}}  &  \textbf{\emph{1699}} & \textbf{\emph{61}}  &  1709 & 62  &  1721 & 63  &  1723 & 62  \\
 \textbf{\emph{1733}} & \textbf{\emph{62}}  &  1741 & 62  &  1747 & 63  &  1753 & 62  &  1759 & 63  &  1777 & 62  \\
 \textbf{\emph{1783}} & \textbf{\emph{62}}  &  \textbf{\emph{1787}} & \textbf{\emph{62}}  &  1789 & 62  &  1801 & 63  &  1811 & 64  &  1823 & 63  \\
 1831 & 63  &  1847 & 63  &  \textbf{\emph{1849}} & \textbf{\emph{63}}  &  1861 & 63  &  1867 & 64  &  1871 & 63  \\
 \textbf{\emph{1873}} & \textbf{\emph{63}}  &  1877 & 64  &  1879 & 63  &  1889 & 63  &  1901 & 64  &  1907 & 64  \\
 1913 & 64  &  \textbf{\emph{1931}} & \textbf{\emph{64}}  &  \textbf{\emph{1933}} & \textbf{\emph{65}}  &  1949 & 64  &  \textbf{\emph{1951}} & \textbf{\emph{64}}  &  \textbf{\emph{1973}} & \textbf{\emph{64}}  \\
 \textbf{\emph{1979}} & \textbf{\emph{64}}  &  1987 & 66  &  1993 & 66  &  \textbf{\emph{1997}} & \textbf{\emph{65}}  &  1999 & 66  &  \textbf{\emph{2003}} & \textbf{\emph{66}}  \\
 \textbf{\emph{2011}} & \textbf{\emph{65}}  &  2017 & 66  &  2027 & 67  &  2029 & 66  &  2039 & 66  &  2048 & 67  \\
 2053 & 66  &  2063 & 67  &  \textbf{\emph{2069}} & \textbf{\emph{65}}  &  2081 & 66  &  2083 & 66  &  \textbf{\emph{2087}} & \textbf{\emph{66}}  \\
 \textbf{\emph{2089}} & \textbf{\emph{66}}  &  2099 & 67  &  \textbf{\emph{2111}} & \textbf{\emph{66}}  &  2113 & 66  &  2129 & 67  &  2131 & 68  \\
 \textbf{\emph{2137}} & \textbf{\emph{66}}  &  2141 & 68  &  2143 & 67  &  2153 & 67  &  \textbf{\emph{2161}} & \textbf{\emph{66}}  &  2179 & 68  \\
 2187 & 68  &  \textbf{\emph{2197}} & \textbf{\emph{66}}  &  2203 & 68  &  \textbf{\emph{2207}} & \textbf{\emph{67}}  &  2209 & 69  &  \textbf{\emph{2213}} & \textbf{\emph{67}}  \\
 2221 & 69  &  2237 & 68  &  2239 & 69  &  \textbf{\emph{2243}} & \textbf{\emph{68}}  &  \textbf{\emph{2251}} & \textbf{\emph{68}}  &  \textbf{\emph{2267}} & \textbf{\emph{67}}  \\
 \textbf{\emph{2269}} & \textbf{\emph{68}}  &  \textbf{\emph{2273}} & \textbf{\emph{68}}  &  2281 & 70  &  2287 & 69  &  2293 & 68  &  2297 & 69  \\
 2309 & 70  &  2311 & 69  &  2333 & 69  &  \textbf{\emph{2339}} & \textbf{\emph{68}}  &  2341 & 70  &  2347 & 70  \\
 \textbf{\emph{2351}} & \textbf{\emph{68}}  &  2357 & 70  &  \textbf{\emph{2371}} & \textbf{\emph{69}}  &  2377 & 70  &  2381 & 69  &  \textbf{\emph{2383}} & \textbf{\emph{70}}  \\
 2389 & 70  &  2393 & 70  &  \textbf{\emph{2399}} & \textbf{\emph{69}}  &  2401 & 70  &  2411 & 71  &  2417 & 71  \\
 \textbf{\emph{2423}} & \textbf{\emph{70}}  &  \textbf{\emph{2437}} & \textbf{\emph{70}}  &  \textbf{\emph{2441}} & \textbf{\emph{70}}  &  2447 & 71  &  2459 & 71  &  2467 & 71  \\
 \textbf{\emph{2473}} & \textbf{\emph{70}}  &  \textbf{\emph{2477}} & \textbf{\emph{70}}  &  2503 & 70  &  2521 & 72  &  2531 & 72  &  2539 & 72  \\
 2543 & 72  &  2549 & 71  &  2551 & 71  &  2557 & 72  &  2579 & 73  &  2591 & 72  \\
 \textbf{\emph{2593}} & \textbf{\emph{71}}  &  2609 & 72  &  \textbf{\emph{2617}} & \textbf{\emph{71}}  &  \textbf{\emph{2621}} & \textbf{\emph{71}}  &  \textbf{\emph{2633}} & \textbf{\emph{72}}  &  2647 & 74  \\
 2657 & 73  &  2659 & 74  &  2663 & 72  &  2671 & 73  &  2677 & 73  &  2683 & 73  \\
 2687 & 72  &  2689 & 73  &  2693 & 73  &  2699 & 72  &  2707 & 73  &  2711 & 74  \\
 2713 & 73  &  2719 & 73  &  2729 & 74  &  \textbf{\emph{2731}} & \textbf{\emph{73}}  &  2741 & 73  &  2749 & 74  \\
 \textbf{\emph{2753}} & \textbf{\emph{73}}  &  2767 & 74  &  \textbf{\emph{2777}} & \textbf{\emph{73}}  &  2789 & 74  &  \textbf{\emph{2791}} & \textbf{\emph{72}}  &  2797 & 74  \\
 \textbf{\emph{2801}} & \textbf{\emph{74}}  &  \textbf{\emph{2803}} & \textbf{\emph{73}}  &  2809 & 74  &  \textbf{\emph{2819}} & \textbf{\emph{73}}  &  2833 & 75  &  2837 & 76  \\
 \hline
 \end{tabular}
 \newpage

\textbf{Table 3.} Continue 2 \medskip

\noindent \begin{tabular}
{@{}r@{}c@{\,}|@{\,}r@{}c@{\,}|@{\,}r@{}c@{\,}|@{\,}r@{}c@{\,}|@{\,}r@{}c@{\,}|@{\,}r@{}c@{}}
  \hline
$q~~$ & $n^\text{IL}_q(4,3)$ &
$q~~$ & $n^\text{IL}_q(4,3)$ &
$q~~$ & $n^\text{IL}_q(4,3)$ &
$q~~$ & $n^\text{IL}_q(4,3)$ &
$q~~$ & $n^\text{IL}_q(4,3)$ &
$q~~$ & $n^\text{IL}_q(4,3)^{ \vphantom{H^{L}}}$ \\
 \hline
 2843 & 76  &  \textbf{\emph{2851}} & \textbf{\emph{74}}  &  2857 & 74  &  \textbf{\emph{2861}} & \textbf{\emph{73}}  &  2879 & 75  &  \textbf{\emph{2887}} & \textbf{\emph{75}}  \\
 2897 & 75  &  2903 & 75  &  2909 & 76  &  2917 & 76  &  \textbf{\emph{2927}} & \textbf{\emph{74}}  &  \textbf{\emph{2939}} & \textbf{\emph{75}}  \\
 \textbf{\emph{2953}} & \textbf{\emph{76}}  &  \textbf{\emph{2957}} & \textbf{\emph{75}}  &  2963 & 76  &  2969 & 75  &  2971 & 76  &  2999 & 76  \\
 3001 & 76  &  3011 & 76  &  3019 & 78  &  3023 & 77  &  3037 & 76  &  3041 & 77  \\
 3049 & 76  &  3061 & 76  &  3067 & 77  &  \textbf{\emph{3079}} & \textbf{\emph{77}}  &  \textbf{\emph{3083}} & \textbf{\emph{76}}  &  3089 & 77  \\
 3109 & 77  &  \textbf{\emph{3119}} & \textbf{\emph{76}}  &  3121 & 77  &  \textbf{\emph{3125}} & \textbf{\emph{76}}  &  3137 & 78  &  \textbf{\emph{3163}} & \textbf{\emph{77}}  \\
 3167 & 77  &  3169 & 79  &  \textbf{\emph{3181}} & \textbf{\emph{78}}  &  3187 & 77  &  \textbf{\emph{3191}} & \textbf{\emph{77}}  &  3203 & 77  \\
 3209 & 77  &  3217 & 78  &  3221 & 80  &  3229 & 78  &  \textbf{\emph{3251}} & \textbf{\emph{78}}  &  3253 & 78  \\
 3257 & 78  &  3259 & 78  &  3271 & 79  &  3299 & 79  &  3301 & 79  &  3307 & 80  \\
 3313 & 79  &  3319 & 80  &  3323 & 79  &  \textbf{\emph{3329}} & \textbf{\emph{79}}  &  \textbf{\emph{3331}} & \textbf{\emph{78}}  &  3343 & 80  \\
 \textbf{\emph{3347}} & \textbf{\emph{78}}  &  3359 & 81  &  \textbf{\emph{3361}} & \textbf{\emph{79}}  &  3371 & 81  &  3373 & 79  &  3389 & 80  \\
 3391 & 79  &  3407 & 80  &  3413 & 81  &  3433 & 80  &  3449 & 80  &  3457 & 81  \\
 3461 & 80  &  3463 & 80  &  3467 & 80  &  3469 & 80  &  \textbf{\emph{3481}} & \textbf{\emph{79}}  &  3491 & 80  \\
 3499 & 80  &  3511 & 81  &  3517 & 80  &  3527 & 81  &  \textbf{\emph{3529}} & \textbf{\emph{81}}  &  3533 & 82  \\
 3539 & 82  &  3541 & 80  &  3547 & 81  &  \textbf{\emph{3557}} & \textbf{\emph{81}}  &  3559 & 82  &  \textbf{\emph{3571}} & \textbf{\emph{80}}  \\
 \textbf{\emph{3581}} & \textbf{\emph{80}}  &  3583 & 81  &  3593 & 81  &  \textbf{\emph{3607}} & \textbf{\emph{81}}  &  \textbf{\emph{3613}} & \textbf{\emph{79}}  &  3617 & 81  \\
 3623 & 82  &  3631 & 82  &  \textbf{\emph{3637}} & \textbf{\emph{81}}  &  3643 & 83  &  3659 & 82  &  \textbf{\emph{3671}} & \textbf{\emph{81}}  \\
 \textbf{\emph{3673}} & \textbf{\emph{81}}  &  3677 & 82  &  3691 & 83  &  \textbf{\emph{3697}} & \textbf{\emph{82}}  &  3701 & 83  &  \textbf{\emph{3709}} & \textbf{\emph{82}}  \\
 3719 & 82  &  3721 & 83  &  3727 & 82  &  3733 & 83  &  3739 & 84  &  3761 & 82  \\
 3767 & 83  &  3769 & 83  &  \textbf{\emph{3779}} & \textbf{\emph{82}}  &  3793 & 83  &  3797 & 84  &  3803 & 83  \\
 \textbf{\emph{3821}} & \textbf{\emph{82}}  &  3823 & 83  &  \textbf{\emph{3833}} & \textbf{\emph{82}}  &  3847 & 83  &  3851 & 84  &  3853 & 83  \\
 3863 & 84  &  \textbf{\emph{3877}} & \textbf{\emph{83}}  &  3881 & 84  &  3889 & 83  &  3907 & 85  &  \textbf{\emph{3911}} & \textbf{\emph{83}}  \\
 3917 & 83  &  3919 & 84  &  3923 & 85  &  3929 & 84  &  3931 & 85  &  3943 & 84  \\
 3947 & 84  &  3967 & 84  &  3989 & 85  &  4001 & 86  &  \textbf{\emph{4003}} & \textbf{\emph{83}}  &  \textbf{\emph{4007}} & \textbf{\emph{84}}  \\
 4013 & 85  &  \textbf{\emph{4019}} & \textbf{\emph{85}}  &  4021 & 84  &  4027 & 84  &  \textbf{\emph{4049}} & \textbf{\emph{83}}  &  \textbf{\emph{4051}} & \textbf{\emph{84}}  \\
 4057 & 85  &  4073 & 85  &  \textbf{\emph{4079}} & \textbf{\emph{85}}  &  4091 & 85  &  \textbf{\emph{4093}} & \textbf{\emph{85}}  &  \textbf{\emph{4096}} & \textbf{\emph{85}}  \\
 \textbf{\emph{4099}} & \textbf{\emph{85}}  &  \textbf{\emph{4111}} & \textbf{\emph{85}}  &  \textbf{\emph{4127}} & \textbf{\emph{85}}  &  \textbf{\emph{4129}} & \textbf{\emph{85}}  &  4133 & 86  &  \textbf{\emph{4139}} & \textbf{\emph{85}}  \\
 \textbf{\emph{4153}} & \textbf{\emph{85}}  &  4157 & 86  &  4159 & 86  &  \textbf{\emph{4177}} & \textbf{\emph{86}}  &  4201 & 85  &  \textbf{\emph{4211}} & \textbf{\emph{85}}  \\
4217 & 87  &  4219 & 87  &  4229 & 86  &  \textbf{\emph{4231}} & \textbf{\emph{86}}  &  4241 & 87  &  4243 & 87  \\
 4253 & 87  &  \textbf{\emph{4259}} & \textbf{\emph{87}}  &  4261 & 87  &  4271 & 88  &  \textbf{\emph{4273}} & \textbf{\emph{86}}  &  \textbf{\emph{4283}} & \textbf{\emph{86}}  \\
 4289 & 87  &  4297 & 87  &  \textbf{\emph{4327}} & \textbf{\emph{87}}  &  \textbf{\emph{4337}} & \textbf{\emph{87}}  &  4339 & 88  &  \textbf{\emph{4349}} & \textbf{\emph{87}}  \\
 4357 & 87  &  4363 & 87  &  4373 & 88  &  4391 & 87  &  4397 & 88  &  4409 & 88  \\
 4421 & 88  &  \textbf{\emph{4423}} & \textbf{\emph{88}}  &  4441 & 89  &  4447 & 88  &  4451 & 88  &  4457 & 87  \\
 4463 & 89  &  4481 & 89  &  4483 & 89  &  4489 & 89  &  4493 & 88  &  \textbf{\emph{4507}} & \textbf{\emph{88}}  \\
 4513 & 89  &  4517 & 88  &  4519 & 89  &  4523 & 89  &  4547 & 89  &  \textbf{\emph{4549}} & \textbf{\emph{89}}  \\
 \hline
 \end{tabular}
 \newpage

\textbf{Table 3.} Continue 3 \medskip

\noindent \begin{tabular}
{@{}r@{}c@{\,}|@{\,}r@{}c@{\,}|@{\,}r@{}c@{\,}|@{\,}r@{}c@{\,}|@{\,}r@{}c@{\,}|@{\,}r@{}c@{}}
  \hline
$q~~$ & $n^\text{IL}_q(4,3)$ &
$q~~$ & $n^\text{IL}_q(4,3)$ &
$q~~$ & $n^\text{IL}_q(4,3)$ &
$q~~$ & $n^\text{IL}_q(4,3)$ &
$q~~$ & $n^\text{IL}_q(4,3)$ &
$q~~$ & $n^\text{IL}_q(4,3)^{ \vphantom{H^{L}}}$ \\
 \hline
 4561 & 90  &  4567 & 90  &  4583 & 89  &  \textbf{\emph{4591}} & \textbf{\emph{88}}  &  4597 & 89  &  \textbf{\emph{4603}} & \textbf{\emph{88}}  \\
 \textbf{\emph{4621}} & \textbf{\emph{88}}  &  \textbf{\emph{4637}} & \textbf{\emph{88}}  &  4639 & 89  &  \textbf{\emph{4643}} & \textbf{\emph{89}}  &  4649 & 90  &  4651 & 90  \\
 4657 & 90  &  \textbf{\emph{4663}} & \textbf{\emph{89}}  &  4673 & 90  &  \textbf{\emph{4679}} & \textbf{\emph{90}}  &  \textbf{\emph{4691}} & \textbf{\emph{89}}  &  4703 & 90  \\
 4721 & 90  &  4723 & 91  &  \textbf{\emph{4729}} & \textbf{\emph{89}}  &  \textbf{\emph{4733}} & \textbf{\emph{89}}  &  4751 & 90  &  4759 & 90  \\
 4783 & 90  &  4787 & 90  &  4789 & 90  &  4793 & 90  &  4799 & 91  &  \textbf{\emph{4801}} & \textbf{\emph{89}}  \\
 \textbf{\emph{4813}} & \textbf{\emph{91}}  &  4817 & 91  &  \textbf{\emph{4831}} & \textbf{\emph{90}}  &  \textbf{\emph{4861}} & \textbf{\emph{89}}  &  4871 & 92  &  4877 & 92  \\
 4889 & 92  &  4903 & 92  &  \textbf{\emph{4909}} & \textbf{\emph{90}}  &  4913 & 92  &  4919 & 92  &  \textbf{\emph{4931}} & \textbf{\emph{90}}  \\
 4933 & 92  &  4937 & 91  &  4943 & 91  &  4951 & 91  &  4957 & 91  &  4967 & 91  \\
 4969 & 91  &  \textbf{\emph{4973}} & \textbf{\emph{90}}  &  4987 & 91  &  4993 & 93  &  4999 & 92  &  \textbf{\emph{5003}} & \textbf{\emph{91}}  \\
 5009 & 92  &  \textbf{\emph{5011}} & \textbf{\emph{92}}  &  5021 & 92  &  \textbf{\emph{5023}} & \textbf{\emph{91}}  & \textbf{\emph{ 5039}} & \textbf{\emph{91}}  &  5041 & 92  \\
 5051 & 93  &  5059 & 93  &  5077 & 93  &  5081 & 94  &  5087 & 92  &  \textbf{\emph{5099}} & \textbf{\emph{93}}  \\
 5101 & 94  &  5107 & 93  &  \textbf{\emph{5113}} & \textbf{\emph{93}}  &  5119 & 93  &  5147 & 95  &  5153 & 94  \\
 \textbf{\emph{5167}} & \textbf{\emph{92}}  &  5171 & 93  &  5179 & 93  &  5189 & 93  &  5197 & 94  &  \textbf{\emph{5209}} & \textbf{\emph{92}}  \\
 5227 & 93  &  \textbf{\emph{5231}} & \textbf{\emph{93}}  &  5233 & 94  &  5237 & 94  &  5261 & 93  &  \textbf{\emph{5273}} & \textbf{\emph{93}}  \\
 \textbf{\emph{5279}} & \textbf{\emph{94}}  &  \textbf{\emph{5281}} & \textbf{\emph{93}}  &  \textbf{\emph{5297}} & \textbf{\emph{93}}  &  \textbf{\emph{5303}} & \textbf{\emph{94}}  &  \textbf{\emph{5309}} & \textbf{\emph{93}}  &  5323 & 94  \\
 5329 & 94  &  \textbf{\emph{5333}} & \textbf{\emph{93}}  &  5347 & 95  &  \textbf{\emph{5351}} & \textbf{\emph{94}}  &  5381 & 95  &  \textbf{\emph{5387}} & \textbf{\emph{93}}  \\
 5393 & 95  &  5399 & 95  &  \textbf{\emph{5407}} & \textbf{\emph{94}}  &  5413 & 95  &  5417 & 95  &  5419 & 95  \\
\textbf{\emph{ 5431}} & \textbf{\emph{94}}  &  5437 & 96  &  5441 & 94  &  \textbf{\emph{5443}} & \textbf{\emph{93}}  &  5449 & 95  &  5471 & 95  \\
 5477 & 95  &  5479 & 97  &  5483 & 96  &  \textbf{\emph{5501}} & \textbf{\emph{95}}  &  5503 & 96  &  5507 & 94  \\
 5519 & 96  &  5521 & 96  &  \textbf{\emph{5527}} & \textbf{\emph{95}}  &  5531 & 96  &  5557 & 96  &  5563 & 96  \\
 5569 & 95  &  5573 & 96  &  5581 & 97  &  5591 & 96  &  \textbf{\emph{5623}} & 95  &  \textbf{\emph{5639}} & \textbf{\emph{95}}  \\
 \textbf{\emph{5641}} & \textbf{\emph{96}}  &  \textbf{\emph{5647}} & \textbf{\emph{96}}  &  \textbf{\emph{5651}} & \textbf{\emph{95}}  & \textbf{\emph{ 5653}} & \textbf{\emph{96}}  &  \textbf{\emph{5657}} & \textbf{\emph{95}}  &  5659 & 96  \\
 \textbf{\emph{5669}} & \textbf{\emph{95}}  &  \textbf{\emph{5683}} & \textbf{\emph{96}}  &  5689 & 96  &  5693 & 97  &  5701 & 96  &  5711 & 96  \\
 5717 & 98  &  5737 & 96  &  5741 & 97  &  \textbf{\emph{5743}} & \textbf{\emph{95}}  &  5749 & 98  &  5779 & 97  \\
 5783 & 97  &  \textbf{\emph{5791}} & \textbf{\emph{96}}  &  5801 & 99  &  5807 & 97  &  \textbf{\emph{5813}} & \textbf{\emph{96}}  &  5821 & 97  \\
 5827 & 98  &  \textbf{\emph{5839}} & \textbf{\emph{96}}  &  5843 & 98  &  5849 & 96  &  5851 & 97  &  5857 & 97  \\
 5861 & 97  &  5867 & 98  &  5869 & 98  &  5879 & 99  &  \textbf{\emph{5881}} & \textbf{\emph{97}}  &  5897 & 97  \\
 5903 & 98  &  5923 & 98  &  5927 & 97  &  \textbf{\emph{5939}} & \textbf{\emph{97}}  &  5953 & 98  &  \textbf{\emph{5981}} & \textbf{\emph{97}}\\
 \textbf{\emph{5987}} & \textbf{\emph{98}}  &  6007 & 99    &  6011 & 99  &  6029 & 98  &  6037 & 98  &  6043 & 99 \\
 6047 & 98  &  \textbf{\emph{6053}} & \textbf{\emph{98}}  &  6067 & 99  &  6073 & 100  &  6079 & 98  &  6089 & 100  \\
 \textbf{\emph{6091}} & \textbf{\emph{97}}  &  6101 & 99  &    6113 & 99  &  6121 & 99  &  6131 & 99  &  6133 & 98   \\
 \textbf{\emph{6143}} & \textbf{\emph{98}}&  6151 & 99  &  6163 & 100  &  \textbf{\emph{6173}} & \textbf{\emph{98}}  &  \textbf{\emph{6197}} & \textbf{\emph{99}}  &  6199 & 100  \\
 6203 & 100  &  &&&&&&&& & \\\hline
  \textbf{\emph{6217}} & \textbf{\emph{99}}  &  \textbf{\emph{6287}} & \textbf{\emph{100}}  & \textbf{\emph{6299}} & \textbf{\emph{100}}  & \textbf{\emph{6529}} & \textbf{\emph{100}}& \textbf{\emph{6563}} &  \textbf{\emph{100}}&\textbf{\emph{6637}} & \textbf{\emph{102}} \\
 6653 & 103  &  \textbf{\emph{6733}} & \textbf{\emph{102}}  &  \textbf{\emph{6871}} & \textbf{\emph{103}}  &  \textbf{\emph{6959}} & \textbf{\emph{103}}  &  \textbf{\emph{6971}} & \textbf{\emph{102}} & \textbf{\emph{7001}} & \textbf{\emph{104}} \\
 \hline
 \end{tabular}
\newpage

\textbf{Table 4.}   Lengths $\overline{n}_q(4,3)$ of the \textbf{\emph{shortest known}} $[\overline{n}_q(4,3),\overline{n}_q(4,3)-4]_q3$ codes, $2\le q\leq 7057$.
The cases $\overline{n}_q(4,3,5)=\overline{n}_q(4,3)+j$ are noted by the superscript ``$+j$''. For the rest of $q$ we have  $\overline{n}_q(4,3,5)=\overline{n}_q(4,3)$. The improvements of code distance up to $d=5$ in comparison with \cite[Tab.\,1]{DavOst-DESI2010} are noted in bold italic font. For $q=841$ the complete 42-arc of \cite{Sonin} is used. The cases $\ell_q(4,3)=\overline{n}_q(4,3)$ are noted by the subscript~``$\bullet$'' \medskip

 \renewcommand{\arraystretch}{0.97}
 \begin{tabular}
{@{}r@{\,}c@{\,}|r@{\,}c@{\,}|r@{\,}c@{\,}|r@{\,}c@{\,}|r@{\,}c@{\,}|r@{\,}c@{}}
 \hline
$q~~~$ & $\overline{n}_q(4,3)$ &
$q~~~$ & $\overline{n}_q(4,3)$ &
$q~~~$ & $\overline{n}_q(4,3)$ &
$q~~~$ & $\overline{n}_q(4,3)$ &
$q~~~$ & $\overline{n}_q(4,3)$ &
$q~~~$ & $\overline{n}_q(4,3)^{\vphantom{H^{L}}}$ \\
 \hline
 \textbf{\emph{2}} & \textbf{\emph{5}}$_\bullet$  &  3 & 5$_\bullet$  &  4 & 5$_\bullet$  &  5 & 6$_\bullet$  &  7 & 7$^{+1}_\bullet$  &  8 & 7$_\bullet$  \\
 9 & 7$^{+1}_\bullet$  &  11 & 8$_\bullet$  &  13 & 8  &  16 & 9  &  17 & 9  &  19 & 9  \\
 23 & 10  &  25 & 11  &  27 & 11  &  29 & 11  &  31 & 11$^{+1}$  &  32 & 12  \\
 37 & 12  &  41 & 13  &  43 & 13  &  47 & 14  &  49 & 14  &  53 & 15  \\
 59 & 15  &  61 & 15$^{+1}$  &  64 & 16  &  67 & 16  &  71 & 16  &  73 & 16$^{+1}$  \\
 79 & 17  &  81 & 17$^{+1}$  &  83 & 17$^{+1}$  &  89 & 18  &  97 & 19  &  101 & 19  \\
 103 & 19  &  107 & 19  &  109 & 20  &  113 & 20  &  121 & 20$^{+1}$  &  125 & 21  \\
 127 & 21  &  128 & 21  &  131 & 21  &  137 & 22  &  139 & 22  &  149 & 22  \\
 \textbf{\emph{151}} & \textbf{\emph{22}}  &  157 & 23  &  163 & 23  &  167 & 24  &  169 & 24  &  173 & 24  \\
 179 & 24  &  181 & 24$^{+1}$  &  191 & 25  &  193 & 25  &  197 & 25  &  199 & 25  \\
 211 & 26  &  223 & 27  &  227 & 27  &  229 & 27  &  233 & 27  &  239 & 27  \\
 241 & 28  &  243 & 28  &  251 & 28  &  256 & 28  &  257 & 28  &  263 & 28  \\
 269 & 29  &  271 & 29  &  277 & 29  &  281 & 29  &  283 & 29  &  289 & 29$^{+1}$  \\
 293 & 29$^{+1}$  &  307 & 30  &  311 & 30$^{+1}$  &  313 & 30$^{+1}$  &  317 & 30$^{+1}$  &  331 & 31  \\
 337 & 31$^{+1}$  &  343 & 31$^{+1}$  &  347 & 32  &  349 & 32  &  353 & 32  &  359 & 32  \\
 361 & 32$^{+1}$  &  367 & 32$^{+1}$  &  373 & 33  &  379 & 33  &  383 & 33  &  389 & 33$^{+1}$  \\
 397 & 34  &  401 & 34  &  409 & 34  &  419 & 34$^{+1}$  &  \textbf{\emph{421}} & \textbf{\emph{34}}  &  431 & 35  \\
 \textbf{\emph{433}} & \textbf{\emph{35}}  &  439 & 35  &  443 & 35  &  449 & 35  &  457 & 35$^{+1}$  &  461 & 36  \\
 \textbf{\emph{463}} & \textbf{\emph{36}}  &  \textbf{\emph{467}} & \textbf{\emph{36}}  &  \textbf{\emph{479}} & \textbf{\emph{36}} &  487 & 36  &  491 & 36$^{+1}$  &  499 & 37  \\
 503 & 37  &  509 & 37  &  512 & 36$^{+1}$  &\textbf{\emph{521}} & \textbf{\emph{37}}  &  \textbf{\emph{523}} & \textbf{\emph{38}}  &  529 & 38   \\
 541 & 38  &  \textbf{\emph{547}} & \textbf{\emph{38}}  &  557 & 39  &  563 & 39  &  569 & 39  &  571 & 39  \\
 577 & 39  &  587 & 39  &  593 & 39  &  599 & 40  &  601 & 40  &  607 & 40  \\
 613 & 40  &  617 & 40  &  619 & 40  &  625 & 41  &  631 & 41  &  641 & 41  \\
 643 & 41  &  647 & 41  &  653 & 41  &  659 & 41  &  661 & 41  &  673 & 41  \\
 677 & 42  &  683 & 42  &  691 & 42  &  701 & 42  &  709 & 43  &  719 & 43  \\
 727 & 42  &  729 & 40$^{+3}$  &  733 & 43  &  739 & 43  &  743 & 43  &  751 & 43  \\
 757 & 43  &  761 & 43  &  769 & 44  &  773 & 44  &  787 & 44  &  797 & 45  \\
 809 & 45  &  811 & 45  &  821 & 45  &  823 & 45  &  827 & 45  &  829 & 45  \\
 839 & 45$^{+1}$  &  \emph{841} & \emph{42}  &  853 & 45  &  857 & 46  &  859 & 46  &  863 & 46  \\
 877 & 46  &  881 & 46  &  883 & 46  &  887 & 46  &  907 & 47  &  911 & 47  \\
 919 & 47  &  929 & 47  &  937 & 47  &  941 & 48  &  947 & 48  &  953 & 48  \\
 961 & 48  &  967 & 48  &  971 & 48  &  977 & 48  &  983 & 48  &  991 & 48  \\
 997 & 48  &  1009 & 49  &  1013 & 49  &  1019 & 49  &  1021 & 49  &  1024 & 49  \\
 1031 & 49  &  1033 & 49  &  1039 & 49  &  1049 & 50  &  1051 & 49  &  1061 & 50  \\
 1063 & 49  &  1069 & 50  &  1087 & 50  &  1091 & 50  &  1093 & 50  &  1097 & 50  \\
 \hline
 \end{tabular}
 \newpage

\textbf{Table 4.} Continue 1 \medskip\\

 \renewcommand{\arraystretch}{1.00}
 \begin{tabular}
{@{}r@{\,}c@{\,}|r@{\,}c@{\,}|r@{\,}c@{\,}|r@{\,}c@{\,}|r@{\,}c@{\,}|r@{\,}c@{}}
 \hline
$q~~~$ & $\overline{n}_q(4,3)$ &
$q~~~$ & $\overline{n}_q(4,3)$ &
$q~~~$ & $\overline{n}_q(4,3)$ &
$q~~~$ & $\overline{n}_q(4,3)$ &
$q~~~$ & $\overline{n}_q(4,3)$ &
$q~~~$ & $\overline{n}_q(4,3)^{ \vphantom{H^{L}}}$ \\
 \hline
 1103 & 50  &  1109 & 51  &  1117 & 51  &  1123 & 51  &  1129 & 51  &  1151 & 51  \\
 1153 & 51  &  1163 & 51  &  1171 & 52  &  1181 & 52  &  1187 & 52  &  1193 & 52  \\
 1201 & 52  &  1213 & 52  &  1217 & 52  &  1223 & 52  &  1229 & 53  &  1231 & 53  \\
 1237 & 53  &  1249 & 52  &  1259 & 53  &  1277 & 53  &  1279 & 53  &  1283 & 53  \\
 1289 & 54  &  1291 & 54  &  1297 & 54  &  1301 & 54  &  1303 & 54  &  1307 & 54  \\
 1319 & 54  &  1321 & 54$^{+1}$  &  1327 & 55  &  1331 & 48$^{+7}$  &  1361 & 54  &  1367 & 54  \\
 1369 & 55  &  1373 & 55  &  1381 & 55  &  1399 & 55  &  1409 & 55  &  1423 & 55  \\
 1427 & 56  &  1429 & 56  &  1433 & 56  &  1439 & 56  &  1447 & 56  &  1451 & 56  \\
 1453 & 56  &  1459 & 56  &  1471 & 56  &  1481 & 57  &  1483 & 57  &  1487 & 57  \\
 1489 & 57  &  1493 & 57  &  1499 & 57  &  1511 & 57  &  1523 & 57$^{+1}$  &  1531 & 57  \\
 1543 & 57  &  1549 & 57  &  1553 & 58  &  1559 & 58  &  1567 & 57  &  1571 & 58  \\
 1579 & 58  &  1583 & 58  &  1597 & 58  &  1601 & 58  &  1607 & 58  &  1609 & 58  \\
 1613 & 58  &  1619 & 59  &  1621 & 59  &  1627 & 58  &  1637 & 58  &  1657 & 58  \\
 1663 & 59$^{+1}$  &  1667 & 59  &  1669 & 59  &  1681 & 59  &  1693 & 60  &  1697 & 59  \\
 1699 & 59  &  1709 & 60  &  1721 & 60  &  1723 & 60  &  1733 & 60$^{+1}$  &  1741 & 60  \\
 1747 & 60  &  1753 & 60  &  1759 & 60  &  1777 & 61  &  1783 & 61  &  1787 & 60  \\
 1789 & 61  &  1801 & 61  &  1811 & 61  &  1823 & 61  &  1831 & 61  &  1847 & 61  \\
 1849 & 62  &  1861 & 61  &  1867 & 61  &  1871 & 62  &  1873 & 62  &  1877 & 61  \\
 1879 & 62  &  1889 & 62  &  1901 & 62  &  1907 & 62  &  1913 & 62  &  1931 & 62  \\
 1933 & 63  &  1949 & 63  &  1951 & 63  &  1973 & 63  &  1979 & 63  &  1987 & 64  \\
 1993 & 64  &  1997 & 63  &  1999 & 63  &  2003 & 63  &  2011 & 63  &  2017 & 63  \\
 2027 & 63  &  2029 & 64  &  2039 & 64  &  2048 & 64  &  2053 & 64  &  2063 & 64  \\
 2069 & 64  &  2081 & 65  &  2083 & 65  &  2087 & 64  &  2089 & 64  &  2099 & 65  \\
 2111 & 65  &  2113 & 65  &  2129 & 65  &  2131 & 65  &  2137 & 66  &  2141 & 65  \\
 2143 & 65  &  2153 & 65  &  2161 & 65  &  2179 & 66  &  2187 & 66  &  2197 & 56$^{+10}$  \\
 2203 & 66  &  2207 & 66  &  2209 & 66  &  2213 & 66  &  2221 & 65  &  2237 & 67  \\
 2239 & 66  &  2243 & 67  &  2251 & 66  &  2267 & 66  &  2269 & 67  &  2273 & 66  \\
 2281 & 66  &  2287 & 66  &  2293 & 67  &  2297 & 67  &  2309 & 67  &  2311 & 68  \\
 2333 & 67  &  2339 & 68  &  2341 & 67  &  2347 & 68  &  2351 & 68  &  2357 & 68  \\
 2371 & 68  &  2377 & 68  &  2381 & 68  &  2383 & 68  &  2389 & 68  &  2393 & 68  \\
 2399 & 69  &  2401 & 68  &  2411 & 68  &  2417 & 69  &  2423 & 68  &  2437 & 69  \\
 2441 & 68  &  2447 & 68  &  2459 & 69  &  2467 & 69  &  2473 & 69  &  2477 & 69  \\
 2503 & 69  &  2521 & 70  &  2531 & 69  &  2539 & 70  &  2543 & 70  &  2549 & 69  \\
 2551 & 70  &  2557 & 70  &  2579 & 70  &  2591 & 70  &  2593 & 69  &  2609 & 70  \\
 2617 & 71  &  2621 & 70  &  2633 & 70  &  2647 & 70  &  2657 & 70  &  2659 & 70  \\
 2663 & 70  &  2671 & 70  &  2677 & 71  &  2683 & 71  &  2687 & 71  &  2689 & 71  \\
 \hline
 \end{tabular}
 \newpage

\textbf{Table 4.} Continue 2 \medskip

 \begin{tabular}
{@{}r@{\,}c@{\,}|r@{\,}c@{\,}|r@{\,}c@{\,}|r@{\,}c@{\,}|r@{\,}c@{\,}|r@{\,}c@{}}
 \hline
$q~~~$ & $\overline{n}_q(4,3)$ &
$q~~~$ & $\overline{n}_q(4,3)$ &
$q~~~$ & $\overline{n}_q(4,3)$ &
$q~~~$ & $\overline{n}_q(4,3)$ &
$q~~~$ & $\overline{n}_q(4,3)$ &
$q~~~$ & $\overline{n}_q(4,3)^{ \vphantom{H^{L}}}$ \\
 \hline
 2693 & 71  &  2699 & 72  &  2707 & 72  &  2711 & 71  &  2713 & 71  &  2719 & 71  \\
 2729 & 71  &  2731 & 71  &  2741 & 72  &  2749 & 72  &  2753 & 71  &  2767 & 72  \\
 2777 & 72  &  2789 & 72  &  2791 & 72  &  2797 & 73  &  2801 & 72  &  2803 & 72  \\
 2809 & 73  &  2819 & 73  &  2833 & 73  &  2837 & 73  &  2843 & 73  &  2851 & 72  \\
 2857 & 72  &  2861 & 73  &  2879 & 72  &  2887 & 72  &  2897 & 73  &  2903 & 73  \\
 2909 & 74  &  2917 & 73  &  2927 & 74  &  2939 & 73  &  2953 & 73  &  2957 & 73  \\
 2963 & 74  &  2969 & 74  &  2971 & 74  &  2999 & 74  &  3001 & 73  &  3011 & 75  \\
 3019 & 75  &  3023 & 75  &  3037 & 74  &  3041 & 74  &  3049 & 75  &  3061 & 75  \\
 3067 & 74  &  3079 & 75  &  3083 & 74  &  3089 & 75  &  3109 & 75  &  3119 & 75  \\
 3121 & 76  &  3125 & 76  &  3137 & 75  &  3163 & 76  &  3167 & 75  &  3169 & 75  \\
 3181 & 75  &  3187 & 75  &  3191 & 75  &  3203 & 76  &  3209 & 76  &  3217 & 77  \\
 3221 & 75  &  3229 & 76  &  3251 & 76  &  3253 & 77  &  3257 & 76  &  3259 & 77  \\
 3271 & 76  &  3299 & 77  &  3301 & 76  &  3307 & 77  &  3313 & 78  &  3319 & 77  \\
 3323 & 77  &  3329 & 77  &  3331 & 78  &  3343 & 77  &  3347 & 78  &  3359 & 78  \\
 3361 & 77  &  3371 & 78  &  3373 & 77  &  3389 & 78  &  3391 & 77  &  3407 & 77  \\
 3413 & 78  &  3433 & 78  &  3449 & 77  &  3457 & 78  &  3461 & 79  &  3463 & 78  \\
 3467 & 78  &  3469 & 78  &  3481 & 79  &  3491 & 78  &  3499 & 78  &  3511 & 78  \\
 3517 & 79  &  3527 & 79  &  3529 & 78  &  3533 & 79  &  3539 & 80  &  3541 & 79  \\
 3547 & 79  &  3557 & 79  &  3559 & 80  &  3571 & 79  &  3581 & 79  &  3583 & 79  \\
 3593 & 80  &  3607 & 79  &  3613 & 79  &  3617 & 80  &  3623 & 80  &  3631 & 80  \\
 3637 & 79  &  3643 & 80  &  3659 & 80  &  3671 & 80  &  3673 & 80  &  3677 & 81  \\
 3691 & 80  &  3697 & 80  &  3701 & 81  &  3709 & 80  &  3719 & 80  &  3721 & 81  \\
 3727 & 81  &  3733 & 80  &  3739 & 80  &  3761 & 81  &  3767 & 80  &  3769 & 80  \\
 3779 & 81  &  3793 & 81  &  3797 & 81  &  3803 & 81  &  3821 & 82  &  3823 & 80  \\
 3833 & 82  &  3847 & 81  &  3851 & 82  &  3853 & 82  &  3863 & 82  &  3877 & 82  \\
 3881 & 82  &  3889 & 81  &  3907 & 83  &  3911 & 83  &  3917 & 82  &  3919 & 83  \\
 3923 & 82  &  3929 & 83  &  3931 & 83  &  3943 & 82  &  3947 & 82  &  3967 & 83  \\
 3989 & 83  &  4001 & 83  &  4003 & 83  &  4007 & 83  &  4013 & 83  &  4019 & 83  \\
 4021 & 83  &  4027 & 83  &  4049 & 83  &  4051 & 83  &  4057 & 83  &  4073 & 83  \\
 4079 & 84  &  4091 & 83  &  4093 & 83  &  4096 & 68$^{+16}$  &  4099 & 84  &  4111 & 84  \\
 4127 & 83  &  4129 & 84  &  4133 & 83  &  4139 & 84  &  4153 & 84  &  4157 & 84  \\
 4159 & 84  &  4177 & 84  &  4201 & 84  &  4211 & 85  &  4217 & 85  &  4219 & 85  \\
 4229 & 84  &  4231 & 85  &  4241 & 85  &  4243 & 85  &  4253 & 85  &  4259 & 85  \\
 4261 & 85  &  4271 & 85  &  4273 & 85  &  4283 & 86  &  4289 & 86  &  4297 & 85  \\
 4327 & 86  &  4337 & 85  &  4339 & 85  &  4349 & 85  &  4357 & 86  &  4363 & 86  \\
 4373 & 86  &  4391 & 87  &  4397 & 87  &  4409 & 87  &  4421 & 85  &  4423 & 87  \\
 4441 & 86  &  4447 & 86  &  4451 & 86  &  4457 & 87  &  4463 & 88  &  4481 & 87  \\
 4483 & 88  &  4489 & 87  &  4493 & 88  &  4507 & 88  &  4513 & 88  &  4517 & 88  \\
 \hline
 \end{tabular}
 \newpage

\textbf{Table 4.} Continue 3 \medskip

 \renewcommand{\arraystretch}{0.9}
 \begin{tabular}
{@{}r@{\,}c@{\,}|r@{\,}c@{\,}|r@{\,}c@{\,}|r@{\,}c@{\,}|r@{\,}c@{\,}|r@{\,}c@{}}
 \hline
$q~~~$ & $\overline{n}_q(4,3)$ &
$q~~~$ & $\overline{n}_q(4,3)$ &
$q~~~$ & $\overline{n}_q(4,3)$ &
$q~~~$ & $\overline{n}_q(4,3)$ &
$q~~~$ & $\overline{n}_q(4,3)$ &
$q~~~$ & $\overline{n}_q(4,3)^{ \vphantom{H^{L}}}$ \\
 \hline
 4519 & 89  &  4523 & 89  &  4547 & 88  &  4549 & 89  &  4561 & 89  &  4567 & 89  \\
 4583 & 89  &  4591 & 88  &  4597 & 89  &  4603 & 88  &  4621 & 88  &  4637 & 88  \\
 4639 & 89  &  4643 & 89  &  4649 & 89  &  4651 & 89  &  4657 & 90  &  4663 & 89  \\
 4673 & 90  &  4679 & 88  &  4691 & 89  &  4703 & 89  &  4721 & 90  &  4723 & 90  \\
 4729 & 89  &  4733 & 89  &  4751 & 90  &  4759 & 90  &  4783 & 90  &  4787 & 89  \\
 4789 & 89  &  4793 & 89  &  4799 & 91  &  4801 & 89  &  4813 & 91  &  4817 & 89  \\
 4831 & 90  &  4861 & 89  &  4871 & 90  &  4877 & 90  &  4889 & 90  &  4903 & 91  \\
 4909 & 90  &  4913 & 72$^{+17}$  &  4919 & 90  &  4931 & 90  &  4933 & 91  &  4937 & 90  \\
 4943 & 91  &  4951 & 91  &  4957 & 90  &  4967 & 91  &  4969 & 91  &  4973 & 90  \\
 4987 & 90  &  4993 & 92  &  4999 & 92  &  5003 & 91  &  5009 & 92  &  5011 & 92  \\
 5021 & 91  &  5023 & 91  &  5039 & 91  &  5041 & 91  &  5051 & 91  &  5059 & 92  \\
 5077 & 91  &  5081 & 92  &  5087 & 92  &  5099 & 93  &  5101 & 92  &  5107 & 93  \\
 5113 & 93  &  5119 & 91  &  5147 & 92  &  5153 & 93  &  5167 & 92  &  5171 & 93  \\
 5179 & 93  &  5189 & 93  &  5197 & 93  &  5209 & 92  &  5227 & 92  &  5231 & 93  \\
 5233 & 93  &  5237 & 93  &  5261 & 93  &  5273 & 93  &  5279 & 94  &  5281 & 93  \\
 5297 & 93  &  5303 & 94  &  5309 & 93  &  5323 & 93  &  5329 & 94  &  5333 & 93  \\
 5347 & 94  &  5351 & 94  &  5381 & 94  &  5387 & 93  &  5393 & 95  &  5399 & 95  \\
 5407 & 94  &  5413 & 94  &  5417 & 94  &  5419 & 95  &  5431 & 94  &  5437 & 93  \\
 5441 & 94  &  5443 & 93  &  5449 & 93  &  5471 & 94  &  5477 & 94  &  5479 & 95  \\
 5483 & 95  &  5501 & 95  &  5503 & 95  &  5507 & 94  &  5519 & 96  &  5521 & 95  \\
 5527 & 95  &  5531 & 95  &  5557 & 94  &  5563 & 95  &  5569 & 95  &  5573 & 95  \\
 5581 & 94  &  5591 & 96  &  5623 & 95  &  5639 & 95  &  5641 & 96  &  5647 & 96  \\
 5651 & 95  &  5653 & 96  &  5657 & 95  &  5659 & 96  &  5669 & 95  &  5683 & 96  \\
 5689 & 96  &  5693 & 97  &  5701 & 96  &  5711 & 96  &  5717 & 97  &  5737 & 96  \\
 5741 & 95  &  5743 & 95  &  5749 & 97  &  5779 & 96  &  5783 & 96  &  5791 & 96  \\
 5801 & 94  &  5807 & 96  &  5813 & 96  &  5821 & 97  &  5827 & 97  &  5839 & 96  \\
 5843 & 97  &  5849 & 96  &  5851 & 97  &  5857 & 97  &  5861 & 97  &  5867 & 97  \\
 5869 & 98  &  5879 & 97  &  5881 & 97  &  5897 & 97  &  5903 & 97  &  5923 & 97  \\
 5927 & 97  &  5939 & 97  &  5953 & 98  &  5981 & 97  &  5987 & 98  &  6007 & 98  \\
 6011 & 98  &  6029 & 97  &  6037 & 98  &  6043 & 99  &  6047 & 98  &  6053 & 98  \\
 6067 & 99  &  6073 & 99  &  6079 & 98  &  6089 & 99  &  6091 & 97  &  6101 & 98  \\
 6113 & 99  &  6121 & 99  &  6131 & 98  &  6133 & 97  &  6143 & 98  &  6151 & 98  \\
 6163 & 99  &  6173 & 99  &  6197 & 100  &  6199 & 100  &  6203 & 98  &  6211 & 100  \\
 6217 & 99  &  6221 & 100  &  6229 & 99  &  6241 & 100  &  6247 & 99  &  6257 & 100  \\
 6263 & 100  &  6269 & 100  &  6271 & 100  &  6277 & 98  &  6287 & 100  &  6299 & 100  \\
 6301 & 99  &  6311 & 99  &  6317 & 100  &  6323 & 100  &  6329 & 100  &  6337 & 101  \\
 6343 & 100  &  6353 & 100  &  6359 & 100  &  6361 & 99  &  6367 & 101  &  6373 & 100  \\
 6379 & 100  &  6389 & 101  &  6397 & 101  &  6421 & 101  &  6427 & 101  &  6449 & 101  \\
 6451 & 101  &  6469 & 100  &  6473 & 101  &  6481 & 101  &  6491 & 101  &  6521 & 101  \\
 6529 & 100  &  6547 & 102  &  6551 & 102  &  6553 & 101  &  6561 & 102  &  6563 & 100  \\
 6569 & 101  &  6571 & 102  &  6577 & 101  &  6581 & 101  &  6599 & 101  &  6607 & 101 \\
 \hline
 \end{tabular}
 \newpage

\textbf{Table 4.} Continue 4 \medskip

 \renewcommand{\arraystretch}{0.9}
 \begin{tabular}
{@{}r@{\,}c@{\,}|r@{\,}c@{\,}|r@{\,}c@{\,}|r@{\,}c@{\,}|r@{\,}c@{\,}|r@{\,}c@{}}
 \hline
$q~~~$ & $\overline{n}_q(4,3)$ &
$q~~~$ & $\overline{n}_q(4,3)$ &
$q~~~$ & $\overline{n}_q(4,3)$ &
$q~~~$ & $\overline{n}_q(4,3)$ &
$q~~~$ & $\overline{n}_q(4,3)$ &
$q~~~$ & $\overline{n}_q(4,3)^{ \vphantom{H^{L}}}$ \\
 \hline
 6619 & 101  &  6637 & 102  &  6653 & 102  &  6659 & 102  &  6661 & 102  &  6673 & 102  \\
 6679 & 103  &  6689 & 102  &  6691 & 102  &  6701 & 102  &  6703 & 100  &  6709 & 102  \\
 6719 & 102  &  6733 & 102  &  6737 & 102  &  6761 & 101  &  6763 & 102  &  6779 & 103  \\
 6781 & 103  &  6791 & 102  &  6793 & 103  &  6803 & 102  &  6823 & 103  &  6827 & 103  \\
 6829 & 102  &  6833 & 103  &  6841 & 101  &  6857 & 103  &  6859 & 102  &  6863 & 102  \\
 6869 & 103  &  6871 & 103  &  6883 & 104  &  6889 & 102  &  6899 & 103  &  6907 & 103  \\
 6911 & 104  &  6917 & 103  &  6947 & 102  &  6949 & 103  &  6959 & 103  &  6961 & 103  \\
 6967 & 104  &  6971 & 102  &  6977 & 103  &  6983 & 103  &  6991 & 104  &  6997 & 103  \\
 7001 & 104  &  7013 & 104  &  7019 & 104  &  7027 & 105  &  7039 & 104  &  7043 & 103  \\
 7057 & 105  &  &  &  &  &  &  &  &  &  &  \\
 \hline
 \end{tabular}
 \newpage

 \textbf{Table 5.}  Lengths $\overline{n}_q(5,3)$ of the shortest \textbf{\emph{known}} $[\overline{n}_q(5,3),\overline{n}_q(5,3)-5]_q3$ codes  obtained by the leximatrix and Rand-Greedy algorithms;
 $\overline{n}_q(5,3)=\min\{n^\text{L}_q(5,3),n^\text{G}_q(5,3)\}$ for $3\leq q\leq 401$; $\overline{n}_q(5,3)=\overline{n}_q(5,3,5)=n^\text{L}_q(5,3)$ for $401<q\le839$. The improvements of code length in comparison with \cite[Tab.\,2]{DavOst-DESI2010} are noted in bold italic font. The cases $\ell_q(5,3)=\overline{n}_q(5,3)$ are noted by the subscript~``$\bullet$'' \medskip

\renewcommand{\arraystretch}{0.93}
\noindent \begin{tabular}
{@{}r@{\,}c@{\,}|r@{\,}c@{\,}|r@{\,}c@{\,}|r@{\,}c@{\,}|r@{\,}c@{\,}|r@{\,}c@{\,}|r@{\,}c@{}}
 \hline
$q~~~$ & $\overline{n}_q(5,3)$ &
$q~~~$ & $\overline{n}_q(5,3)$ &
$q~~~$ & $\overline{n}_q(5,3)$ &
$q~~~$ & $\overline{n}_q(5,3)$ &
$q~~~$ & $\overline{n}_q(5,3)$ &
$q~~~$ & $\overline{n}_q(5,3)$ &
$q~~~$ & $\overline{n}_q(5,3)^{ \vphantom{H^{L}}}$ \\
 \hline
 3 & 8$_\bullet$  &  4 & 9$_\bullet$  &  5 & 10$_\bullet$  &  7 & 13  &  8 & 14  &  9 & 16  &  11 & 18  \\
 13 & 21  &  16 & 24  &  17 & 25  &  19 & 27  &  23 & 32  &  25 & 34  &  27 & 36  \\
 29 & 38  &  31 & 40  &  32 & 41  &  \textbf{\emph{37}} & \textbf{\emph{47}}  &  \textbf{\emph{41}} & \textbf{\emph{51}}  &  \textbf{\emph{43}} & \textbf{\emph{52}}  &  47 & 56  \\
 49 & 58  &  53 & 61  &  59 & 67  &  61 & 68  &  64 & 71  &  67 & 73  &  71 & 76  \\
 73 & 78  &  79 & 83  &  81 & 85  &  83 & 85  &  89 & 90  &  97 & 96  &  101 & 99  \\
 103 & 100  &  107 & 103  &  109 & 104  &  113 & 108  &  121 & 113  &  125 & 116  &  127 & 117  \\
 128 & 119  &  131 & 119  &  137 & 123  &  139 & 125  &  149 & 131  &  151 & 132  &  157 & 136  \\
 163 & 140  &  167 & 142  &  169 & 145  &  173 & 146  &  179 & 150  &  181 & 151  &  191 & 157  \\
 193 & 158  &  197 & 161  &  199 & 162  &  211 & 169  &  223 & 177  &  227 & 179  &  229 & 180  \\
 233 & 183  &  239 & 185  &  241 & 188  &  243 & 188  &  251 & 193  &  256 & 195  &  257 & 197  \\
 263 & 200  &  269 & 203  &  271 & 204  &  277 & 208  &  281 & 209  &  283 & 211  &  289 & 213  \\
 293 & 216  &  307 & 224  &  311 & 226  &  313 & 227  &  317 & 230  &  331 & 237  &  337 & 240  \\
 343 & 243  &  347 & 245  &  349 & 245  &  353 & 248  &  359 & 251  &  361 & 254  &  367 & 256  \\
 373 & 261  &  379 & 262  &  383 & 264  &  389 & 267  &  397 & 273  &  401 & 274  &  409 & 284 \\
 419 & 292  &  421 & 290  &  431 & 297  &  433 & 299  &  439 & 301  &  443 & 304  &  449 & 309  \\
 457 & 311  &  461 & 311  &  463 & 309  &  467 & 314  &  479 & 320  &  487 & 324  &  491 & 324  \\
 499 & 328  &  503 & 330  &  509 & 334  &  512 & 334  &  521 & 339  &  523 & 341  &  529 & 344  \\
 541 & 348  &  547 & 349  &  557 & 353  &  563 & 360  &  569 & 364  &  571 & 362  &  577 & 365  \\
 587 & 371  &  593 & 374  &  599 & 375  &  601 & 376  &  607 & 376  &  613 & 380  &  617 & 384  \\
 619 & 382  &  625 & 385  &  631 & 387  &  641 & 393  &  643 & 398  &  647 & 396  &  653 & 399  \\
 659 & 402  &  661 & 401  &  673 & 407  &  677 & 407  &  683 & 411  &  691 & 416  &  701 & 417  \\
 709 & 424  &  719 & 427  &  727 & 430  &  729 & 430  &  733 & 429  &  739 & 431  &  743 & 436  \\
 751 & 439  &  757 & 440  &  761 & 443  &  769 & 447  &  773 & 450  &  787 & 453  &  797 & 458  \\
 809 & 464  &  811 & 464  &  821 & 467  &  823 & 468  &  827 & 471  &  829 & 473  &  839 & 475  \\
  \hline
 \end{tabular}
 \end{document}